\documentclass{div}
\usepackage[open=true]{bookmark}



\newcommand{\ObsHazOnetoTwo}{43.6} 

\newcommand{\EstHazOnetoTwo}{33.6} 
\newcommand{\EstHazTwotoThree}{75} 

\newcommand{\NoticeOnJFinit}{7 } 
\newcommand{\agecutoff}{21 to 64 } 

\newcommand{\hOShort}{14.7} 
 
\newcommand{\hOTwelveShort}{48.7} 
\newcommand{\hOTwelveLong}{56.6} 

\newcommand{\outdir}{./output}


\newcommand{\mythanks}{I am grateful to Theodore Papageorgiou, Kevin Lang, Shakeeb Khan, and Arthur Lewbel for their invaluable guidance and support. I would also like to thank all participants of the Dissertation Workshop at Boston College for their helpful comments. I am solely responsible for any errors or omissions.}
\title{\vspace{-1cm} \Large \setstretch{1.25} \sc{Duration Dependence and Heterogeneity: \\ Learning from Early Notice of Layoff}\thanks{\mythanks}\vspace{-0.75em}}
\author{\textit{\sc{Div Bhagia}\thanks{California State University, Fullerton; dbhagia@fullerton.edu}}}
\date{\vspace{-0.25cm} \today}

\begin{document}
\maketitle
\vspace{-0.75cm}

\begin{abstract}
This paper presents a novel approach to distinguishing the impact of duration-dependent forces and adverse selection on unemployment exit rates using variation in layoff notice lengths. I formulate a Mixed Hazard model in discrete time, specifying conditions for identifying structural duration dependence while allowing for arbitrary worker heterogeneity. Using data from the Displaced Worker Supplement (DWS), I estimate the model and find the decline in exit rates over the first 48 weeks is mainly due to the worsening composition of surviving jobseekers. Moreover, an individual's exit probability decreases initially, then increases until unemployment benefit exhaustion, and remains steady thereafter.
\end{abstract}

\begin{center}
\begin{minipage}{0.875\textwidth}
\setstretch{1.1} \noindent
\textit{\textbf{Keywords}: Duration dependence, unemployment, exit rate, job-finding rate, unobserved heterogeneity, hazard models, advance notice} \\
\vspace{-1em}

\noindent \textit{\textbf{JEL Codes}: C41, E24, J64, J65}	
\end{minipage}
\end{center}

\setlength{\abovedisplayshortskip}{-0.1cm}
\setlength{\belowdisplayshortskip}{0.25cm}
\setlength{\abovedisplayskip}{0.25cm}
\setlength{\belowdisplayskip}{0.25cm}


\clearpage
\section{Introduction}

A well-established empirical regularity is that the exit rate out of unemployment decreases over the spell of unemployment, except for a spike at the time of unemployment insurance (UI) exhaustion. The decline in the exit rate may represent negative duration dependence, meaning that the longer a worker remains unemployed, the less likely they are to exit unemployment. This would be true if employers discriminate against long-term unemployed workers or if workers lose valuable skills and connections over time, which would otherwise assist them in finding employment. However, workers with different unemployment durations, who may appear similar to researchers, may actually be quite different from each other. Factors such as employability, the urgency to find a job, or the ability to secure employment may vary across individuals. Such heterogeneity across workers would imply that the observed exit rate declines even in the absence of structural duration dependence.  As more employable workers leave unemployment, the remaining pool of unemployed individuals increasingly consists of those who are less likely to exit unemployment.

Understanding how the likelihood of exiting unemployment evolves over the unemployment spell and the extent of heterogeneity across workers is crucial for the design of unemployment policies.\footnote{See \cite{ShimerWerning2006}, \cite{PavoniViolante2007}, \cite{Pavoni2009}, and \cite{KolsrudEtAl2018}.} Furthermore, the magnitude and direction of structural duration dependence have implications for the incidence of long-term unemployment and the speed of recovery from economic downturns \citep{Pissarides1992}. Given its significance, a substantial body of literature has attempted to disentangle the sources of the decline in the exit rate from unemployment. However, it has proven to be challenging to do so using observational data.

In this paper, I develop and implement a novel approach to empirically disentangle the contributions of structural duration dependence and unobserved heterogeneity in explaining the dynamics of the exit rate from unemployment. My approach relies on leveraging variation in the length of notice workers receive from their employers before being laid off. Using data from the Displaced Worker Supplement (DWS), I compare workers with a notice period of more than two months (referred to as long notice) to workers with a notice period of one to two months (referred to as short notice). To ensure comparability across the two groups, I use inverse probability weighting (IPW) to achieve balance on a comprehensive set of observable characteristics. The analysis reveals that during the initial 12 weeks, the exit rate out of unemployment is \NoticeOnJFinit percentage points higher for long-notice workers. This difference is primarily due to a larger proportion of long-notice workers transitioning directly to their next job without experiencing a period of unemployment. However, beyond the first 12 weeks, the exit rate for workers with the longer notice is actually lower.

I argue that the lower exit rate for long-notice workers at later durations is due to the composition of this group becoming relatively worse as a larger proportion of individuals exit early in the spell. This indicates the presence of heterogeneity among workers. In the presence of heterogeneity, such as differences in employability, those who are more employable exit unemployment earlier. As more workers exit early from the long-notice group, the surviving workers from this group will have a lower proportion of highly employable workers compared to the short-notice group. Conversely, if there is no heterogeneity, a larger proportion of workers exiting early from the long-notice group will not alter the composition of long versus short-notice workers at later durations. Consequently, there would be no discernible difference in the exit rates of the two groups.\footnote{It is also possible that receiving a longer notice directly affects a worker's exit probability, even at later durations. I discuss this possibility below while addressing the robustness of my findings.}  Thus, the difference in the exit rates of short and long-notice workers is indicative of the extent of underlying heterogeneity. This is the fundamental idea behind my approach, which enables me to pin down the extent of heterogeneity and estimate structural duration dependence.




I operationalize this intuition by formulating a Mixed Hazard (MH) model \citep{Lancaster1979} in discrete time with multiple notice lengths. Within this framework, I specify the probability of an individual exiting unemployment as a product of their unobservable type, a function of observable factors, and a structural hazard that varies with the duration of unemployment and notice length. I show that structural duration dependence, characterized by how the structural hazard varies with the duration of unemployment, can be identified when two conditions hold. The first condition, commonly referred to as unconfoundedness \citep{RosenbaumRubin1983},  requires that the length of notice is independent of the worker's unobservable type when conditioned on observable characteristics. The second condition states that the length of the notice period does not impact the structural hazard at later durations of unemployment. In other words, while a longer notice period before layoff may affect the probability of exiting at the beginning of the unemployment spell, it does not influence the probability of exiting at later durations.  

Building on the identification result, I develop a method for estimating the model using the Generalized Method of Moments (GMM). The estimation method utilizes moments weighted by the inverse of propensity scores to ensure that the distribution of observable characteristics is similar across different notice lengths. Additionally, the method incorporates right-censored duration data to account for spells that are still incomplete at the time of the survey.

Relative to the existing literature on the identification and estimation of the Mixed Hazard model, my approach goes further in several dimensions. Firstly, I do not impose any functional form restrictions on the distribution of heterogeneity. This is crucial because misspecification of unobserved heterogeneity can significantly impact estimates of structural duration dependence, as demonstrated by \cite{HeckmanSinger1984}. Secondly, identification in my model stems from variation in a variable---the length of notice---that is exogenous conditional on observables and is assumed to affect the structural hazard in a way that aligns with economic intuition.\footnote{Existing non-parametric identification results for the Mixed Hazard model rely on variation in an exogenous variable that enters the structural hazard multiplicatively \citep{ElbersRidder1982, HeckmanSinger1984}. The practical implementation of these results has been limited due to the challenge of locating a variable that meets this criterion, as well as the absence of a convenient estimator. Another approach to identification is using multiple spell data \citep{Honore1993}. However, this approach assumes that the unobserved characteristics of the jobseeker remain constant across repeated spells.} The introduction of unconfoundedness in the Mixed Hazard framework is a novel innovation. Lastly, I provide a root-$n$ consistent estimator for the parameters of my model. My model is analogous to a Mixed Hazard model with a time-varying exogenous variable, for which \cite{Brinch2007} provides a non-constructive proof of identification. The key differences here are the assumption of unconfoundedness instead of exogeneity and the exposition being in discrete time, the latter of which leads to a consistent estimator for the model’s parameters via GMM.\footnote{Although the discrete-time model provides a simple, consistent estimator, it must be noted that one of its disadvantages is invariance to the definition of periods.} To the best of my knowledge, \cite{AlvarezEtAl2021} is the only other study that utilizes moment conditions from a discrete version of the Mixed Hazard model and constructs a GMM estimator. However, their identification result and estimator pertain to multiple spell data.\footnote{\cite{VanDenBerg1990} also set up a discrete-time MPH model and study identification using cohort effects; however, they use a maximum likelihood estimator on aggregate data for estimation.}

I estimate the model using weighted moments from the DWS data. The estimates uncover substantial heterogeneity in individual exit probabilities. I find that a quarter of the \ObsHazOnetoTwo\% decline in the observed exit rate over the first five months is due to the changing composition of workers over the spell of unemployment. Moreover, after the first five months, an individual worker's exit probability increases until the time of their unemployment benefit exhaustion and remains constant after. This is in contrast to the observed exit rate, which continues to decline even after benefit exhaustion. Recently, researchers have proposed behavioral modifications to standard search theory to explain this decline \citep{BoonevanOurs2012, DellaVignaEtAl2017, DellaVignaEtAl2021}. However, I highlight an alternative explanation: as a substantial number of individuals exit unemployment right at the point of benefit exhaustion, the composition of the remaining unemployed workers becomes significantly worse. This compositional change contributes to the observed decline in the exit rate after benefit exhaustion. Finally, I calibrate a partial equilibrium search model with a non-stationary environment \citep{Mortensen1986, VanDenBerg1990} and show that my findings can be rationalized in this framework with a decline in returns to search early in the spell.

Under the identifying assumptions specified for the Mixed Hazard model in my framework, the lower exit rate among long-notice workers after the initial 12 weeks is attributed to the presence of heterogeneity. However, two alternative explanations are possible. First, there could be unobservable differences between long- and short-notice workers. Second, a longer notice period may reduce a worker's exit probability at later durations. To address these concerns, in \ref{app_gen}, I relax the assumptions of my model to allow for arbitrary differences between the two groups and for the structural hazards at later durations to vary by notice length up to a specific constant. Although I cannot show that all the parameters of this more general model are identified, I estimate the model by varying the additional parameters and find the values that minimize residuals.\footnote{I verify that the numerical error function is locally convex in all cases.} The estimated values that minimize the residuals suggest no mean differences in the unobservable heterogeneity distribution between the two groups nor any difference in the structural hazard beyond the first 12 weeks. This points towards the validity of the identifying assumptions employed in the analyses.

This paper contributes to the extensive literature on the dynamics of job-finding over the spell of unemployment. Previous empirical studies utilizing the Mixed Hazard model have had to make strong functional form assumptions due to challenges with estimation. Consequently, the evidence on structural duration dependence from these studies is mixed, as highlighted by \cite{MachinManning1999} in their review. Recently, \cite{AlvarezEtAl2016} revived this strand of work by estimating a Mixed Hitting-Time (MHT) model \citep{Abbring2012} using Austrian social security data. They focus on a selected sample of workers with multiple unemployment spells and are able to estimate the extent of heterogeneity across workers that is fixed between spells. A relative advantage of my approach is that it captures spell-specific heterogeneity.\footnote{For instance, a worker's savings or UI eligibility may change over the months or years by the time this worker becomes unemployed again.} Another related study, \cite{MuellerEtAl2021}, utilizes variation in expectations about job-finding from survey data to pin down variation in actual job-finding rates. While both of these studies also document substantial heterogeneity across jobseekers, my estimator for structural duration dependence is flexible enough to capture changes around UI exhaustion.\footnote{\cite{AlvarezEtAl2016} utilizes an optimal-stopping model; a worker finds a job at an optimal stopping time when a Brownian motion with drift hits a barrier. Their model generates an inverse Gaussian distribution of duration for each worker. \cite{MuellerEtAl2021} restrict the structural hazard to be monotonic over the spell of unemployment, and their estimator yields a practically flat hazard.}

Given the challenges with estimating structural duration dependence, researchers have instead focused on assessing its determinants. \cite{KroftEtAl2013} conduct an audit study and find that the likelihood of receiving a callback for an interview declines with the duration of unemployment. However, they note that since they cannot measure worker behavior or employers' ultimate hiring decisions, their estimates only shed light on one determinant of structural duration dependence.\footnote{Using a structural model, \cite{JaroschPilossoph2019} argue that if employers statistically discriminate against those with longer durations, then a decline in callback rates only has a marginal effect on workers' exit rates.} Several papers have also documented how search effort or reservation wages evolve over the spell of unemployment \citep{KruegerMueller2011, MarinescuSkandalis2021, DellaVignaEtAl2021}. The evidence provided in this paper suggests that, while call-back rates or other factors affecting returns to search matter initially, a worker's optimizing behavior determines the likelihood of exiting unemployment at later durations.

Finally, a substantial body of literature highlights a spike in exit rates at UI exhaustion, where exit rates increase until benefit exhaustion and decline thereafter.\footnote{\cite{KatzMeyer1990} first documented the spike in exit rates at benefit exhaustion in the context of the US. Some recent papers that document this pattern using administrative data are \cite{DellaVignaEtAl2017} (Hungary), \cite{GanongNoel2019} (US), and \cite{MarinescuSkandalis2021} (France).} While the initial increase is consistent with standard search theory, the subsequent decline is not. My estimates reproduce the increase in individual exit probabilities leading up to UI exhaustion but do not find evidence of a decline thereafter. \cite{BoonevanOurs2012} propose storable job offers as an explanation for the spike, while \cite{DellaVignaEtAl2017} argue that search models incorporating reference dependence predict a decrease in search effort after benefit exhaustion.  My estimates suggest that the decline in the exit rate after UI exhaustion can be attributed to a shift in the composition of surviving workers, as a significant proportion of workers exit unemployment right at benefit exhaustion. However, the individual exit probability remains constant, consistent with the predictions of standard search models.


\section{Context and Data}\label{sec_data}

In this section, I describe the institutional setting and the data and document how the exit rate out of unemployment varies with the length of notice. 


\subsection{Institutional Details}

Under certain circumstances, US employers are required to give notice of layoff. The federal WARN Act mandates that employers with 100 or more full-time employees provide a 60-day advance notice for plant closings and mass layoffs. A plant closing is defined as the shutdown of a site or units within it that results in 50 or more employees losing their jobs within a 30-day period, while a mass layoff is the loss of employment for 500 or more employees during a 30-day period, or 50-499 employees if they constitute one-third or more of the employer's active workforce. The law only applies to layoffs exceeding six months, excluding discharges for cause, voluntary departures, or retirements. Some states, such as California, New York, and Illinois, have implemented their own WARN laws that expand the coverage of employment losses beyond what the federal law requires.\footnote{It is not possible to exploit policy variation across states, say in a differences-in-difference framework, due to confounding pre-trends; both California and New York implemented these laws in the aftermath of a national recession.}

When it comes to unemployment insurance (UI), US workers who are terminated without cause are typically eligible to receive benefits for a limited duration. Although the UI program is a federal program, each state sets its own benefit levels and durations. Eligibility and benefits may depend on a combination of earnings, hours worked, or weeks worked during a base period, depending on the specific rules of the state's UI program. Typically, this base period consists of the first four out of five completed calendar quarters preceding the claim filing date. In most states, the maximum period for receiving benefits is 26 weeks. Nine states have a uniform benefit duration of 26 weeks, while the benefit durations in the remaining states vary depending on the applicant's earnings history. Additionally, a program for extended benefits has been in place since a 1970 amendment to the Federal Unemployment Tax Act (FUTA), which can be triggered by the state unemployment rate. Temporary programs have also been implemented to extend benefits during recessions.


\subsection{Data Description and Sample Construction}\label{subsec_description}

I use data from the Displaced Worker Supplement (DWS) for the years 1996-2020. DWS is fielded biennially along with the basic monthly Current Population Survey (CPS) in January or February. The survey is administered to individuals who report having \textit{lost or left} a job within the past three years due to a plant closure, their position being abolished, or having insufficient work at their previous employment. Apart from details on workers' lost and current jobs, DWS also collects the length of the notice period workers received before being laid off and the length of time they took to find another job. 

For the analysis, the sample is limited to individuals aged \agecutoff years old. It excludes individuals who expected to be recalled to their previous jobs and those whose lost job was self-employment. To focus on workers who lost stable, permanent employment, the sample only includes individuals who were employed full-time for at least six months at their previous job and had health insurance benefits provided by that employer. The question on notice length in the DWS is categorical and specifies whether someone did not receive a notice or received a notice of <1 month, 1-2 months, or >2 months. For individuals who did not receive a notice, it is uncertain whether they were displaced or quit their jobs voluntarily, so they are excluded from the sample.\footnote{As noted by \cite{Farber2017}, while the DWS seeks to capture separations caused by economic difficulties within firms, it is hard to distinguish layoffs from voluntary quits. This is because financially distressed firms might reduce hours or wages rather than lay off workers, prompting some workers to voluntarily quit in search of better opportunities.} Additionally, I exclude individuals with less than a month's notice from the main analysis, as they significantly differ from the other two groups in observable characteristics, casting doubt on the validity of the unconfoundedness assumption. Nevertheless, I present results including these groups in Section \ref{subsec_robust} as a robustness check. Section \ref{app_data_cnstr} in the Appendix provides additional details on data construction. Specifically, Table \ref{tab_sample} summarizes the sample selection procedure, and Table \ref{tab_cps_comparison} compares the characteristics of the workers in my sample to all workers in the CPS and DWS.

\begin{table}[p]
\setstretch{1.15}
\caption{Descriptives by Notice Length}\label{tab_sum_stats}
\begin{threeparttable}
\begin{tabularx}{\textwidth}{p{0.245\textwidth}YYYp{0.15cm}YYY}
\toprule
& \multicolumn{3}{c}{Unbalanced} & & \multicolumn{3}{c}{Balanced} \\
& Short &  Long & Diff. & & Short &  Long & Diff. \\
& (1) & (2) & (2)-(1) & & (3) & (4) & (4)-(3) \\
\midrule 
 Age              & 42.44  & 43.57  & 1.13*** &  & 43.04  & 43.01  & -0.03  \\
                  & (0.24) & (0.22) & (0.33)  &  & (0.24) & (0.22) & (0.33) \\
 Female           & 0.45   & 0.46   & 0.02    &  & 0.46   & 0.46   & -0.00  \\
                  & (0.01) & (0.01) & (0.02)  &  & (0.01) & (0.01) & (0.02) \\
 Married          & 0.59   & 0.63   & 0.04**  &  & 0.61   & 0.61   & -0.00  \\
                  & (0.01) & (0.01) & (0.02)  &  & (0.01) & (0.01) & (0.02) \\
 Black            & 0.10   & 0.09   & -0.01   &  & 0.10   & 0.09   & -0.00  \\
                  & (0.01) & (0.01) & (0.01)  &  & (0.01) & (0.01) & (0.01) \\
 College Degree   & 0.41   & 0.39   & -0.03*  &  & 0.40   & 0.40   & -0.00  \\
                  & (0.01) & (0.01) & (0.02)  &  & (0.01) & (0.01) & (0.02) \\
 Plant Closure    & 0.46   & 0.62   & 0.16*** &  & 0.54   & 0.54   & -0.00  \\
                  & (0.01) & (0.01) & (0.02)  &  & (0.01) & (0.01) & (0.02) \\
 Union Membership & 0.15   & 0.16   & 0.01    &  & 0.15   & 0.15   & 0.00   \\
                  & (0.01) & (0.01) & (0.01)  &  & (0.01) & (0.01) & (0.01) \\
 In Metro Area    & 0.84   & 0.82   & -0.01   &  & 0.83   & 0.83   & 0.00   \\
                  & (0.01) & (0.01) & (0.01)  &  & (0.01) & (0.01) & (0.01) \\
 Years of Tenure  & 7.12   & 9.18   & 2.06*** &  & 8.25   & 8.21   & -0.04  \\
                  & (0.15) & (0.16) & (0.22)  &  & (0.16) & (0.15) & (0.22) \\
 Log Earnings     & 6.54   & 6.56   & 0.03    &  & 6.54   & 6.55   & 0.00   \\
                  & (0.01) & (0.01) & (0.02)  &  & (0.01) & (0.01) & (0.02) \\
 Observations     & 1959   & 2216   &         &  & 1959   & 2216   &        \\
 \addlinespace[1ex]
\bottomrule
\end{tabularx}
\begin{tablenotes}
\item \textit{Note:} The sample consists of respondents from the Displaced Worker Supplement (DWS) for the years 1996-2020, who were between the ages of \agecutoff, who had worked full-time for at least six months at their previous job, received health insurance from their former employer, and did not expect to be recalled. Short notice refers to a notice period of 1-2 months, while long notice refers to a notice period exceeding two months. Columns (1) and (2) present raw averages for the sample, while columns (3) and (4) show weighted averages, where the weights correspond to the inverse of the estimated probabilities of receiving short or long notice.      
\end{tablenotes}
\end{threeparttable}
\end{table}

Throughout the remainder of the text, I use the term \textit{long notice} interchangeably to refer to a notice period of more than two months and \textit{short notice} to refer to a notice period of 1-2 months. Table \ref{tab_sum_stats} presents the summary statistics separately for workers with short and long notice in the sample. Columns (1) and (2) display the raw averages for the sample, revealing notable differences between the two groups. Workers with longer notice tend to be older and are more likely to be married. Additionally, workers laid off during plant closures are more likely to receive longer notice, potentially due to compliance with the WARN law. Workers with longer notice also tend to have longer job tenure. However, there are no notable differences in earnings for the two groups. 

To isolate the impact of notice from these additional correlates, which may affect the probability of exiting unemployment, I reweight the sample using inverse propensity score weighting. I use a logistic regression model to predict the likelihood of receiving a longer notice based on several covariates. These covariates consist of age, gender, marital status, race, education, location characteristics, the reason for displacement, year of displacement, industry and occupation of the lost job, as well as union status, tenure, and earnings at the lost job. I then utilize the propensity scores to assign weights to the observations. Specifically, individuals with the long notice are assigned a weight of $1/\hat{p}(X_i)$, where $\hat{p}(X_i)$ is the estimated probability of receiving the long notice from the regression model for an individual with covariates $X_i$. On the other hand, individuals who received the short notice are assigned a weight of $1/(1-\hat{p}(X_i))$. 

The summary statistics for the reweighted sample are presented in columns (3) and (4) of Table \ref{tab_sum_stats}. After reweighting, the observable differences between the two groups disappear, indicating that the weights effectively minimize the observed disparities. Section \ref{app_psw} in the  Appendix provides additional details on propensity score estimation. Figure \ref{fig_ps_bal} demonstrates a high degree of overlap in the estimated propensity score distributions between long and short-notice workers. Additionally, Figures \ref{fig_dyear_bal} and \ref{fig_occ_ind_bal} depict the balance of the weighted sample with respect to the displacement year and industrial and occupational composition, respectively.


\subsection{Distribution of Unemployment Duration}\label{subsec_dur_dist}

In this section, I explore how a longer notice impacts the exit rate over the spell of unemployment. Workers who receive a layoff notice may start searching for a job before separating from their previous employer. In this case, some of these workers may secure a new job during the notice period, thus avoiding any period of unemployment. In the data, \hOShort\% of the workers with the short layoff notice report no duration of unemployment. Since workers with longer notice periods have more time to search for a new job while still employed, we expect their chances of avoiding unemployment to be even greater.  

In Table \ref{tab_init_hazard}, panel A, I examine the relationship between receiving a long notice and reporting an unemployment duration of 0. Columns (1) and (2) present estimates from unweighted regressions, while columns (3) and (4) present weighted regression estimates using the weights described in the previous section. Additionally, columns (2) and (4) include a comprehensive set of controls identical to the ones used to generate the weights. The table shows that the impact of a lengthier notice on the exit probability is reduced after accounting for observable characteristics of the separation. The coefficient in column (2) indicates that individuals who receive a longer notice are 8 percentage points more likely to avoid unemployment. Similar estimates are observed in columns (3) and (4) as well. Notably, the inclusion of controls in column (4) does not lead to a change in the coefficient, indicating that the weighting has effectively achieved balance in terms of the covariates across the two groups. In panel B of Table \ref{tab_init_hazard}, I present a similar regression analysis, but this time using an indicator for exiting unemployment within the first 12 weeks. The results show that the exit rate out of unemployment is around \NoticeOnJFinit percentage points higher for long-notice workers compared to the short-notice group.

\begin{table}[t]
\begin{threeparttable}
\caption{Observed Exit Rate -- Early in the Spell}\label{tab_init_hazard}
\begin{tabularx}{\textwidth}{p{0.225\textwidth}YYYY}
\toprule
& (1) & (2) & (3) & (4) \\
\midrule \addlinespace[1ex]
\multicolumn{5}{c}{\underline{\sc{Panel A. $\I\{\text{Unemployment duration $=0$ weeks}\}$}}} \\ \addlinespace[2ex]
> 2 month notice & 0.094*** & 0.080*** & 0.077*** & 0.077***\\
  & (0.012) & (0.012) & (0.013) & (0.013)\\ \addlinespace[3ex]
\multicolumn{5}{c}{\underline{\sc{Panel B. $\I\{\text{Unemployment duration $\leq 12$ weeks}\}$}}} \\ \addlinespace[2ex]
> 2 month notice & 0.078*** & 0.074*** & 0.070*** & 0.070***\\
  & (0.015) & (0.016) & (0.016) & (0.016)\\ \addlinespace[2ex]
Controls   &  No & Yes  & No & Yes \\
Weights   & No  & No   & Yes & Yes \\
\midrule
Observations & 4175 & 4175 & 4175 & 4175\\
\bottomrule
\end{tabularx}
\begin{tablenotes}
\item \textit{Note:} The table presents estimates from linear regression models, where the main independent variable is an indicator variable that takes a value of 1 if the individual received a notice of more than 2 months and 0 if they received a notice of 1-2 months. The dependent variable is an indicator for reporting an unemployment duration of 0 weeks (Panel A) or less than 12 weeks (Panel B). The weights are generated using inverse probability weighting (IPW). Robust standard errors are reported in the parenthesis. 
\end{tablenotes}
\end{threeparttable}
\end{table}


To examine how the exit rate varies with the length of notice over the spell of unemployment, I bin unemployment duration into 12-week intervals.\footnote{Exit rate is defined as the ratio of individuals who found a job during a specific interval to those who were jobless at the beginning of the interval. See Figure \ref{fig_altbins} in the  Appendix for the presentation of data with alternative binning definitions.} Figure \ref{fig_dur_dist} presents the exit rate and the survival rate separately for the long- and short-notice workers over the spell of unemployment. Note that the rates are calculated using the weighted sample to ensure that the comparison is between similar groups of workers who received different lengths of notice.\footnote{ \ref{app_robust} presents the unweighted exit rates and corresponding estimates obtained from the Mixed Hazard model. There is little qualitative difference between unweighted and weighted quantities, suggesting that differences in other characteristics of the two groups are not driving the results.} Approximately \hOTwelveLong\% of individuals with a long notice exit within the first 12 weeks, while \hOTwelveShort\% of those with a short notice do the same. However, over the course of unemployment, individuals with shorter notice periods catch up, resulting in almost identical survival rates for both groups by the 48th week of unemployment. As shown in panel A of Figure \ref{fig_dur_dist}, for all durations beyond 12 weeks, individuals with shorter notice periods have a higher exit rate compared to those with longer notice periods. 

\begin{figure}[t]\caption{Exit and Survival Rate --- Later in the Spell}\label{fig_dur_dist}
\vspace{-0.5em}
\centering
\begin{subfigure}{.525\textwidth}
\centering
\includegraphics{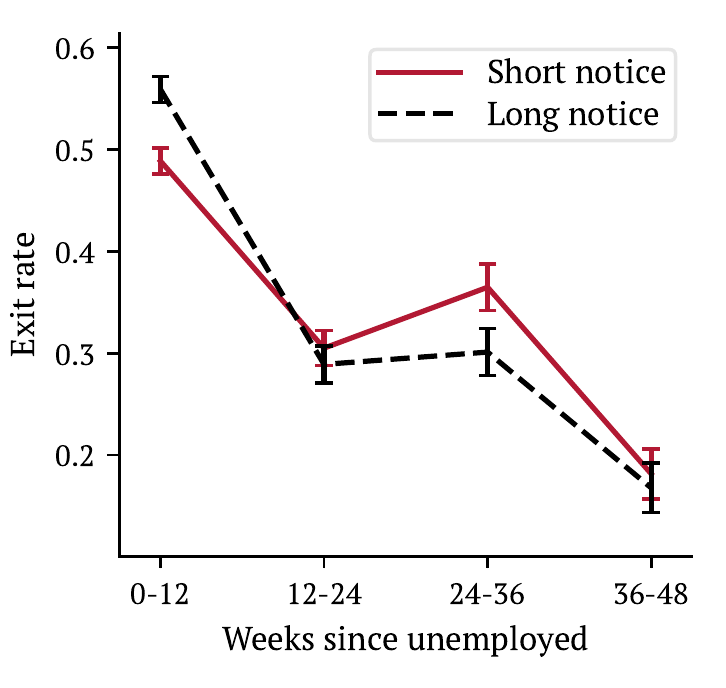}
\subcaption{Exit Rate}
\end{subfigure}
\begin{subfigure}{.45\textwidth}
\centering
\includegraphics{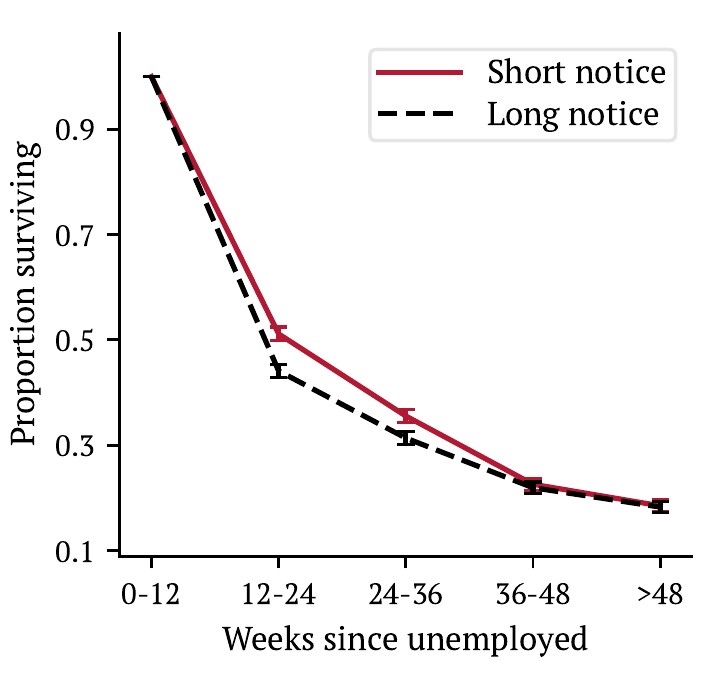}
\subcaption{Survival Rate}
\end{subfigure}
\vspace{-0.75em}
\floatfoot{\textit{Note:} Short notice refers to a notice of 1-2 months, and long notice refers to a notice of more than 2 months. Panel A presents the weighted proportion of individuals exiting unemployment in each interval amongst those who were still unemployed at the beginning of the interval. Panel B presents the weighted proportion of individuals who are unemployed at the beginning of each interval. Error bars represent 90\% confidence intervals.}
\end{figure}

I interpret the higher exit rate for short-notice workers beyond the initial 12 weeks as evidence for heterogeneity across workers. When workers are heterogeneous, those with better chances of exiting unemployment do so earlier. Given that a larger proportion of long-notice workers exit earlier, the long-notice group will have a lower proportion of individuals with higher exit probabilities, which is reflected in the (average) exit rate. It is important to note that this interpretation holds under the condition that longer notice does not directly reduce the probability of exiting unemployment at later durations. In the following section, I formally outline the assumptions necessary to identify heterogeneity and duration dependence in a Mixed Hazard model, and I also discuss the plausibility of these assumptions and potential violations.

\section{Econometric Framework}\label{sec_ef}

This section illustrates how variation in notice length can be used to identify structural duration dependence. Specifically, I set up a Mixed Hazard model in discrete time and specify the assumptions under which the key components of this model are identified. The model and assumptions are outlined in Section \ref{subsec_hazard_model}, while the main identification result is presented in Section \ref{subsec_identification}. The intuition behind identification is explained in Section \ref{subsec_intuition}. All proofs are presented in \ref{app_proofs}. 

\subsection{Mixed Hazard Model in Discrete Time}\label{subsec_hazard_model}


Prior to being laid off, workers are given a notice period of length \( L \), where \( L \) takes discrete values on some support $\mathcal{L}$. Workers are heterogeneous, each possessing an unobservable fixed type \( \nu \), characterized by the cumulative distribution \( F(.) \). Additionally, workers vary based on pre-notice observables, denoted by \( X \), with the distribution of \( X \) given by \( F_X(.) \). These characteristics may include details about the job, the worker, or the circumstances of the layoff. Using potential outcome notation \citep{Rubin1974}, let \(D_L\) denote the realized unemployment duration under notice \(L\).  \(D_L\) is a random variable that takes values in \(\{1, 2, 3, \ldots\}\). Let's begin by assuming that the econometrician observes $D = \sum_{\ell \in \mathcal{L}} D_{\ell} \I\{L=\ell\}$, which means that the potential duration under notice $\ell$ is observed among those who receive notice $\ell$.\footnote{Throughout the text, \(\I(.) \) is used to denote the indicator function.} Right-censored duration data will be incorporated later.

Finally, for a set of conditioning variables $\Upsilon$, define the hazard function \( h(.) \) as: 
\[h(d|\Upsilon)=\Pr(D=d|D \geq d, \Upsilon)\] 
Thus, \( h(d|\nu, L, X) \) represents the probability that an individual of type \(\nu\) with observed characteristics \(X\) will exit unemployment at duration \(d\), given that the individual has not yet exited and had received notice \(L\). 

We are interested in understanding how individual exit probabilities \( h(d|\nu, L, X) \) evolve over the duration of unemployment; this is referred to as structural duration dependence. However, these individual probabilities are not observed in the data. Instead, we can utilize the observed unemployment durations to determine the timing of each worker's exit from unemployment. Specifically, we can at most infer the exit rate \( h(d|L, X) \), which represents the proportion of individuals exiting at a specific duration relative to those who have remained up to that point in the group with notice \(L\) and observable characteristics \(X\).

It is worth noting that, according to the definition of the hazard function, we have:
\[ h(d|L, X) = \E[h(d|\nu, L, X) |D \geq d, L, X] \]
The above reformulation clarifies that the exit rate $h(d|L, X) $ captures the average hazard for individuals who have survived until duration $d$ rather than all individuals. This is what makes identifying structural duration dependence from observed exit rates challenging. To make the problem more manageable, the Mixed Hazard model introduces additional structure by assuming that the individual exit probabilities  $h(d|\nu, L, X) $  can be expressed as the product of the individual’s unobserved type  $\nu$  and a structural component  $\psi_L(d, X)$.\footnote{\cite{Lancaster1979} expanded the proportional hazard model \citep{Cox1972} to incorporate unobserved heterogeneity. His Mixed Proportional Hazard (MPH) model represented the hazard rate as a product of a regression function, a structural hazard that varies with duration, and the worker's unobserved type. The Mixed Hazard model formulated here is similar to Lancaster's MPH model but permits non-proportional effects of observable characteristics and distinguishes the length of notice from other observed variables.}

\begin{assump}{(Mixed Hazard)}\label{assump_mixed_hazard}
An individual's exit probability at duration $d$ is given by:
$$ h(d|\nu, L, X) =  \psi_L(d,X) \nu $$
where the structural hazard $\psi_L(d,X) \in (0,\infty)$ and worker's type $\nu \in (0, \bar{\nu}]$ with $\bar{\nu}=1/\max_{d,l,X}\{\psi_l(d,X)\}$.
\end{assump}

The structural hazard $\psi_L(d,X)$ is common to all individuals with observable characteristics $X$ and a notice period of length $L$, but it varies with the duration of unemployment. The restrictions on the structural hazard and the support of $\nu$ in Assumption \ref{assump_mixed_hazard} guarantee that individual exit probabilities lie between 0 and 1. Under Assumption \ref{assump_mixed_hazard}, workers with lower values of type \(\nu\) have lower exit probabilities at all durations. Consequently, low-type workers tend to remain unemployed for longer durations. Given that the observed exit rate reflects the average exit probability of surviving workers, it will decline more steeply than individual exit probabilities over the unemployment spell. The following proposition formally states this result. 

\begin{prop}\label{result_average_type_falls}
Under Assumption \ref{assump_mixed_hazard}, the exit rate $h(d|L,X)$ can be expressed as follows:
$$ h(d|L,X) = \psi_L(d,X) \E(\nu |D \geq d,L,X)$$
Moreover, the average type of workers who survive until $d$, $\E(\nu |D \geq d,L,X)$, decreases with the unemployment duration $d$.\end{prop}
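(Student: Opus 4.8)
The plan is to prove the two claims in turn: the identity is an immediate consequence of Assumption \ref{assump_mixed_hazard}, while the monotonicity reduces to a one-line correlation inequality once the survivor function is written out.

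For the identity, I would start from the reformulation noted just before the proposition, namely $h(d|L,X) = \E[\,h(d|\nu,L,X)\mid D\geq d,L,X\,]$, substitute $h(d|\nu,L,X)=\psi_L(d,X)\,\nu$ from Assumption \ref{assump_mixed_hazard}, and pull the factor $\psi_L(d,X)$ — which is constant given $d,L,X$ and does not depend on $\nu$ — out of the conditional expectation. This yields $h(d|L,X)=\psi_L(d,X)\,\E(\nu\mid D\geq d,L,X)$ directly.

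For the monotonicity, fix $L$ and $X$ and suppress them from the notation. The key input is the discrete-time survivor identity: since $D$ takes values in $\{1,2,\dots\}$ and $h(k|\nu)=\Pr(D=k\mid D\geq k,\nu)$, iterating the one-step exit probabilities gives
\[
S_d(\nu):=\Pr(D\geq d\mid\nu)=\prod_{k=1}^{d-1}\bigl(1-\psi_L(k,X)\nu\bigr),
\]
with the empty product (at $d=1$) equal to $1$. Applying Bayes' rule, $\E(\nu\mid D\geq d)=\bigl(\int \nu\,S_d(\nu)\,dF(\nu)\bigr)\big/\bigl(\int S_d(\nu)\,dF(\nu)\bigr)$. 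Since $S_{d+1}(\nu)=S_d(\nu)\bigl(1-\psi_L(d,X)\nu\bigr)$, I would introduce the tilted probability measure $d\tilde F_d(\nu)\propto S_d(\nu)\,dF(\nu)$ — this is exactly the type distribution among workers who survive to duration $d$ — and write $\E(\nu\mid D\geq d+1)=\E_{\tilde F_d}[\nu\,\phi(\nu)]\big/\E_{\tilde F_d}[\phi(\nu)]$, where $\phi(\nu):=1-\psi_L(d,X)\nu$ is nonnegative (by the support restriction $\nu\in(0,\bar\nu]$ in Assumption \ref{assump_mixed_hazard}) and strictly decreasing in $\nu$ (since $\psi_L(d,X)>0$). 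The claim $\E(\nu\mid D\geq d+1)\le \E(\nu\mid D\geq d)=\E_{\tilde F_d}[\nu]$ is then equivalent, after clearing the positive denominator $\E_{\tilde F_d}[\phi(\nu)]$, to $\operatorname{Cov}_{\tilde F_d}\bigl(\nu,\phi(\nu)\bigr)\le 0$, which holds because the covariance of a random variable with a decreasing function of itself is nonpositive; unwinding $\phi$, this is precisely $-\psi_L(d,X)\operatorname{Var}_{\tilde F_d}(\nu)\le 0$. The inequality is strict whenever $\operatorname{Var}_{\tilde F_d}(\nu)>0$, i.e. whenever there is genuine heterogeneity left among the survivors to $d$.

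I expect the only delicate points to be bookkeeping rather than substance: aligning the indexing convention for the event $\{D\geq d\}$ with the product formula (empty product at $d=1$); checking that the normalizers $\int S_d\,dF$ and $\E_{\tilde F_d}[\phi]$ are strictly positive so the conditional expectations and the division are well defined — this uses that $F$ places positive mass on types with $\psi_L(k,X)\nu<1$, which holds in any nondegenerate specification; and stating the correlation inequality at the level of a general, possibly non-absolutely-continuous, distribution $F$, for which I would either phrase everything with Stieltjes integrals or simply invoke the standard Chebyshev/FKG correlation inequality for monotone functions of a single random variable.
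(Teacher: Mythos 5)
Your proof is correct, and the second half takes a genuinely different route from the paper. For the identity, you and the paper do essentially the same thing: the paper derives it by writing $g(d|L,X)=\psi_L(d,X)\,\E[\nu S(d-1|\nu,L,X)\mid L,X]$ and invoking Bayes' rule to recognize $S(d-1|\nu,L,X)f(\nu|L,X)/S(d-1|L,X)$ as the survivors' type density, whereas you start from $h(d|L,X)=\E[h(d|\nu,L,X)\mid D\geq d,L,X]$ and pull out $\psi_L(d,X)$ — same content, slightly different bookkeeping. For monotonicity, however, the paper argues through distributions: it shows the one-step survival ratio $S(d|\nu,L,X)/S(d-1|\nu,L,X)$ is decreasing in $\nu$, deduces that the likelihood ratio $f(\nu\mid D\geq d+1,L,X)/f(\nu\mid D\geq d,L,X)$ is decreasing, hence $f(\nu\mid D\geq d,L,X)$ first-order stochastically dominates $f(\nu\mid D\geq d+1,L,X)$, and only then concludes the means are ordered. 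You instead tilt by the survivor weight $\tilde F_d$ and observe that $\E(\nu\mid D\geq d+1)-\E(\nu\mid D\geq d)$ has the sign of $\mathrm{Cov}_{\tilde F_d}(\nu,1-\psi_L(d,X)\nu)=-\psi_L(d,X)\,\mathrm{Var}_{\tilde F_d}(\nu)\leq 0$. Your argument is more elementary and more quantitative: it gives the exact size of the one-step drop in the survivors' mean type and an immediate strictness criterion (positive residual variance), which connects nicely to the paper's later identification intuition where $\mu_2/\mu_1^2>1$ drives compositional change. The paper's argument is stronger in a different direction: monotone-likelihood-ratio dominance of the survivor distributions implies that $\E[u(\nu)\mid D\geq d,L,X]$ falls for every increasing $u$, not just the mean. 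Your flagged caveats (empty product at $d=1$, strict positivity of the normalizers, handling general $F$ via the Chebyshev correlation inequality) are exactly the right ones and are no more restrictive than what the paper's own proof implicitly assumes.
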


This proposition clarifies why identifying the structural hazard $\psi_L(d, X)$ is challenging. The exit rate $h(d|L, X)$ is impacted by both the structural duration dependence $\psi_L(d, X)$ and the changing worker composition over the unemployment spell captured by $\E(\nu |D \geq d, L,X)$. If the observed exit rate $h(d|L, X)$ declines over the spell of unemployment, it is not possible to distinguish between the scenario where there is no structural duration dependence but significant worker heterogeneity causes the average type of workers and the observed exit rate to decline, and the scenario where there is no worker heterogeneity, but the structural hazard declines over the spell of unemployment. Both of these scenarios would be consistent with the observed decline in $h(d|L, X)$. Hence, structural duration dependence is not identified from exit rates in the model formulated so far.

I now introduce two additional assumptions under which variation in notice length leads to the identification of structural duration dependence. The first assumption is conditional independence, which states that the length of notice is independent of the worker's unobservable type given observable characteristics. In other words, for workers with similar observable characteristics, there is no systematic difference in the length of notice given to workers with different unobservable types.\footnote{In  \ref{app_gen}, I provide estimates from a model that permits the underlying type distribution to differ across various notice lengths instead of assuming conditional independence.}

\begin{assump}{(Conditional Independence)}\label{assump_independence}
The length of notice $L$ is independent of the worker's unobservable type $\nu$, given observable characteristics $X$, i.e., $L \perp \nu | X$.
\end{assump}

This assumption ensures that differences in the exit rates $h(d|L, X)$ for various notice lengths are only due to the direct effect of the notice on individual exit probabilities rather than due to differences in the types of workers who receive different lengths of notice. In other words, the assumption implies that for any notice $\ell$, \(h(d|\ell, X) = \E[h(d|\nu, \ell, X) |D_\ell \geq d, X] \). The second assumption, referred to as stationarity, states that the length of notice does not affect an individual's exit probability after the first period. 

\begin{assump}{(Stationarity)}\label{assump_stationarity}
For all $L, X,$ and $d>1$,
$$ \psi_L(d,X) = \psi(d,X) $$
\end{assump}

The rationale for Assumption \ref{assump_stationarity} is that workers with longer notice periods have more time to search for a new job before separating from their previous employer, potentially increasing their likelihood of finding a job at the beginning of their unemployment spell. However, suppose duration dependence in job-finding is caused by factors such as human capital depreciation due to prolonged unemployment or employers discriminating against long-term unemployed workers. In that case, a worker's exit probability later in the spell should only vary with the unemployment duration and not with the length of notice received at the onset of the spell. Given that I bin unemployment duration in 12-week intervals, this assumption translates to the length of notice only impacting the probability of exit within the first 12 weeks and not after that.

More generally, Assumption \ref{assump_stationarity} implies that individual exit probabilities vary only with the duration of unemployment and not with time elapsed since the start of the job search.\footnote{This assumption aligns with a large class of search models, including those that involve non-stationarity. For instance, the model proposed by \cite{LentzTranaes2005}, in which workers start searching harder over time as their savings run down, would be consistent with this assumption as savings only start depleting once unemployed.} This assumption would be violated if time spent searching increases or decreases an individual's likelihood of exiting unemployment. For instance, if workers learn while searching and become better at job search \citep{BurdettVishwanath1988, GonzalezShi2010}, then those with longer notice would have a higher hazard even beyond the initial period. On the other hand, time spent searching may decrease the exit probability if workers first apply to all jobs in stock but subsequently only apply to newly posted jobs \citep{ColesSmith1998}.\footnote{Another possibility for why individual exit probability may decline with time spent searching could be that workers get discouraged over time and stop trying. However, this seems unlikely since 85\% of individuals in the sample eventually find employment after displacement, and among those who do not, none report being out of the labor force.} In \ref{app_gen}, I investigate whether Assumption \ref{assump_stationarity} is violated by estimating a more general model where structural hazards are allowed to vary with the length of notice beyond the initial period, and I do not find evidence against it.


\subsection{Identification Results}\label{subsec_identification}

\begin{theorem}\label{result_main_theorem}
Under Assumptions \ref{assump_mixed_hazard}--\ref{assump_stationarity}, for any $\ell, \ell'$ with $\psi_{\ell}(1, X) \neq \psi_{\ell'}(1, X)$ and some integer $\bar{D}$, the structural hazards $\{\psi_{\ell}(1, X),\psi_{\ell'}(1, X), \{\psi(d, X) \}_{d=2}^{\bar{D}}\}$ and the conditional moments of the type distribution $\{\E(\nu^k|X)\}_{k=1}^{\bar{D}}$ are identified up to a scale from the conditional exit rates $\{h(d|\ell, X),h(d|\ell', X)\}_{d=1}^{\bar{D}}$.
\end{theorem}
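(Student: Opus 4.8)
The plan is to argue pointwise in $X$ (and for whatever horizon $\bar D$ the exit rates are available), exploiting the fact that under Assumption~\ref{assump_stationarity} the notice length enters the duration distribution only through the single affine factor $(1-\psi_\ell\nu)$ coming from the first period. Write $\psi_a := \psi_a(1,X)$ for $a\in\{\ell,\ell'\}$, $\psi_d := \psi(d,X)$ for $d\ge 2$, and $m_k := \E(\nu^k\mid X)$, imposing the scale normalization $m_1=1$ (this fixes the single scale degree of freedom in the statement). From the observed exit rates I first recover, for $a\in\{\ell,\ell'\}$ and $d\le\bar D$, the survivor probabilities $P_a(d)=\Pr(D_a\ge d\mid L=a,X)=\prod_{j=1}^{d-1}\big(1-h(j\mid a,X)\big)$ and the exit probabilities $q_a(d)=\Pr(D_a=d\mid L=a,X)=h(d\mid a,X)\,P_a(d)$; these are the objects I match to the model.

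Combining Assumption~\ref{assump_mixed_hazard} with the conditional independence of Assumption~\ref{assump_independence} (so that the expectation over $\nu$ is taken against the same distribution regardless of $L$) and with stationarity (Assumption~\ref{assump_stationarity}, which makes $\psi_a(j)=\psi_j$ for all $j\ge 2$), I obtain for every $d\ge 2$
\[
q_a(d) \;=\; \psi_d\,\E\!\Big[\nu\,(1-\psi_a\nu)\!\prod_{j=2}^{d-1}(1-\psi_j\nu)\Big],
\qquad q_a(1)=\psi_a\,m_1=\psi_a .
\]
The structural observation I will use is that, setting $p(\nu)=\nu\prod_{j=2}^{d-1}(1-\psi_j\nu)=\sum_{k=1}^{d-1}u_k\nu^k$, the display rearranges to $q_a(d)=\psi_d\,U-\psi_a\big[\psi_d V+u_{d-1}(\psi_d m_d)\big]$, where $U:=\E[p(\nu)]$ and $V:=\sum_{k=1}^{d-2}u_k m_{k+1}$ depend only on $\psi_2,\dots,\psi_{d-1},m_1,\dots,m_{d-1}$, and $u_{d-1}=(-1)^{d-2}\prod_{j=2}^{d-1}\psi_j$. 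Thus $q_a(d)$ is affine in the two new unknowns $\psi_d$ and $\psi_d m_d$, and the only place the notice $a$ enters is as the scalar $\psi_a$ multiplying $\psi_d V+u_{d-1}(\psi_d m_d)$.

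I then proceed by induction on $d$. The base case $d=1$ gives $\psi_\ell=h(1\mid\ell,X)$ and $\psi_{\ell'}=h(1\mid\ell',X)$, which differ precisely by the hypothesis $\psi_\ell(1,X)\neq\psi_{\ell'}(1,X)$. For the step at $d$, assuming $\psi_\ell,\psi_{\ell'},\psi_2,\dots,\psi_{d-1},m_1,\dots,m_{d-1}$ are already identified (so $U,V,u_{d-1}$ are known): differencing the two equations and dividing by $\psi_{\ell'}-\psi_\ell\neq 0$ identifies $Z_d:=\psi_d V+u_{d-1}(\psi_d m_d)$; substituting back into the equation for notice $\ell$ gives $q_\ell(d)=\psi_d U-\psi_\ell Z_d$, hence $\psi_d=(q_\ell(d)+\psi_\ell Z_d)/U$; and finally $m_d=(Z_d-\psi_d V)/(u_{d-1}\psi_d)$. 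Iterating to $d=\bar D$ delivers $\{\psi_\ell,\psi_{\ell'},\psi_2,\dots,\psi_{\bar D}\}$ and $\{m_1,\dots,m_{\bar D}\}$, which is the claim.

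The step that needs care — and the only place the support/range restrictions in Assumption~\ref{assump_mixed_hazard} are used — is showing the two denominators never vanish: $U=\E\big[\nu\prod_{j=2}^{d-1}(1-\psi_j\nu)\big]>0$ and $u_{d-1}\psi_d\neq 0$. Both hold because $\psi_j\nu\in(0,1)$ for every $j$ and every $\nu$ in the support, so each factor $1-\psi_j\nu$ is strictly positive and $\psi_d>0$; together with $\nu>0$ this makes the integrand defining $U$ strictly positive and makes $u_{d-1}$ a nonzero product. Beyond this positivity check, the remaining work is the polynomial bookkeeping behind the displayed affine form of $q_a(d)$ — expanding $(1-\psi_a\nu)p(\nu)$ and matching coefficients against $m_1,\dots,m_d$ — which is routine but is where a sign or indexing slip would hide.
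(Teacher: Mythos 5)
Your proposal is correct and takes essentially the same route as the paper's proof: you exploit the stationarity-induced factorization so that notice enters the duration probabilities only through the affine first-period factor $(1-\psi_a\nu)$, difference across the two notice lengths (dividing by $\psi_{\ell'}(1,X)-\psi_{\ell}(1,X)\neq 0$) to isolate the new unknowns, and recover the moments through the same triangular polynomial structure that underlies the paper's Lemma \ref{result_mom_id}. The only difference is organizational—you identify $\psi(d,X)$ and $\E(\nu^d|X)$ jointly by induction on $d$, whereas the paper obtains $\psi(d,X)\E(\nu|X)$ in closed form from the observed densities and survivor functions and then inverts the triangular moment system separately—and your positivity check on $U=\E\bigl[\nu\prod_{j=2}^{d-1}(1-\psi_j\nu)\mid X\bigr]$ is exactly the condition the paper's nonzero denominator implicitly requires.
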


The above theorem establishes that if the first-period hazard varies for two different notice lengths, we can identify the structural hazards up to $\bar{D}$ and the first $\bar{D}$ moments of $\nu$ conditional on $X$ using the conditional exit rates for both notice lengths up to $\bar{D}$. A direct implication of this result is that if $X$ does not enter the structural hazard and we assume independence instead of conditional independence, we can identify the model using exit rates unconditional on $X$. The following corollary presents this result formally.


\begin{corr}\label{result_uncond}
Assume the following conditions hold: \vspace{-1em}
\begin{itemize}[itemsep=-0.25em]
\item[(i)] $h(d|\nu, L) = \psi_L(d) \nu$, where \(\psi_L(d)\) and \(\nu\) are bounded to ensure \(h(d|\nu, L) \in (0,1)\)
\item[(ii)]  $L$ is independent of $\nu$
\item[(iii)]  $\psi_{L}(d) = \psi(d)$ for $d>1$
\end{itemize} \vspace{-1em}
Then for any $\ell$ and $\ell'$, with $\psi_{\ell}(1) \neq \psi_{\ell'}(1)$ and some integer $\bar{D}$, the structural hazards $\{\psi_{\ell}(1),\psi_{\ell'}(1), \{\psi(d) \}_{d=2}^{\bar{D}}\}$ and the moments of the type distribution $\{\E(\nu^k)\}_{k=1}^{\bar{D}}$ are identified up to a scale from the exit rates $\{h(d|\ell),\{h(d| \ell')\}_{d=1}^{\bar{D}}$
\end{corr}


Neither of the two results mentioned above is ideal for application to the data. The first result has a limitation in that $h(d|L,X)$ is only well-defined for discrete values of $X$, and even then, it may be imprecisely estimated if each bin size is not large enough. On the other hand, the second result allows us to use duration distributions that are unconditional on $X$, but it imposes a stronger restriction of unconditional independence, which may not hold in the data. 

To address these limitations, I present an additional result below, which allows controlling for observables more flexibly. Specifically, suppose observable characteristics enter the structural hazard proportionally, as in the MPH model. In that case, the model's parameters are identified under conditional independence using inverse propensity score weighted (IPW) exit rates, denoted by  $h^w(d|L)$.\footnote{Note that the length of notice still enters the structural hazard non-proportionally.} Specifically,  $h^w(d|L)$  is defined as follows:

$$h^w(d|L) =  \E\left[\frac{\Pr(D=d|\nu, L, X)}{p_L(X)} \middle| L\right] \ \big / \ \E\left[\frac{\Pr(D \geq d|\nu, L, X)}{p_L(X)} \middle| L \right] $$
where \( p_{\ell}(X) = \Pr(L=\ell|X) \). 

In words, the IPW exit rate \( h^w(d|L) \) is computed by using weighted averages of individual probabilities, where the weights are equal to the probability of the individual receiving a specific notice length based on their observable characteristics \( X \). As shown in the context of treatment effects \citep{Rosenbaum1987}, inverse propensity score weighting leads to the elimination of bias due to selection into treatment based on observable characteristics. This is because it gives less weight to individuals with characteristics that lead to a higher treatment probability, thereby readjusting the sample to reflect the general population more accurately. Therefore, the variation in $h^w(d|L)$ by notice length will capture the direct effect of notice on exit rates without being confounded with the effects of observable characteristics that correlate with notice length. 

Put another way, for any notice length $\ell$, while the unweighted exit rate \( h(d|\ell) \) reflects \( \E[h(d|\nu, \ell, X) \mid D_\ell \geq d, \ell] \), the IPW adjusted exit rate captures \( \E[h(d|\nu, \ell, X) \mid D_\ell \geq d] \). This fact is formally shown in the proof of the subsequent proposition, which presents the identification result using weighted exit rates. It is important to reiterate that, in this context, we must assume that $X$ enters the structural hazard proportionally, such that  \(\psi_L(d, X) = \psi_L(d)\phi(X)\). Consequently, it is useful to define \(\theta(X, \nu) = \phi(X)\nu\), which now represents the combined type of workers based on both observed and unobserved characteristics. 

\begin{prop}\label{result_prop}
Assume that Assumptions \ref{assump_mixed_hazard}--\ref{assump_stationarity} are satisfied, and also that the additional conditions \(\psi_L(d, X) = \psi_L(d)\phi(X)\) and  \( 0<p_L(X)<1\) for all $L, X$ are met. Then for any $\ell$ and $\ell'$, with $\psi_{\ell}(1) \neq \psi_{\ell'}(1)$ and some integer $\bar{D}$, the structural hazards $\{\psi_{\ell}(1),\psi_{\ell'}(1), \{\psi(d) \}_{d=2}^{\bar{D}}\}$ and the moments  $\{\E(\theta(X, \nu)^k)\}_{k=1}^{\bar{D}}$ are identified up to a scale from the weighted exit rates $\{h^w(d|\ell),\{h^w(d| \ell')\}_{d=1}^{\bar{D}}$
\end{prop}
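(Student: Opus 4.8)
The plan is to first use the inverse-propensity weighting, together with conditional independence (Assumption~\ref{assump_independence}) and overlap, to collapse $h^w(d\mid\ell)$ into the hazard of a \emph{population-level} potential duration, and then to recover the structural hazards and the moments of $\theta(X,\nu)=\phi(X)\nu$ recursively, reducing the claim to the same algebraic identification problem that underlies Theorem~\ref{result_main_theorem} and Corollary~\ref{result_uncond}. For the first step I would write the conditional density of $(\nu,X)$ given $L=\ell$ as $p_\ell(x)\,f_{\nu\mid X}(\nu\mid x)\,f_X(x)\big/\Pr(L=\ell)$, invoking Assumption~\ref{assump_independence} to replace $\Pr(L=\ell\mid\nu,x)$ with $p_\ell(x)$; overlap, $0<p_\ell(X)<1$, makes the weights $1/p_\ell(X)$ well defined. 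Dividing by $p_\ell(X)$ inside the conditional expectation cancels the $p_\ell(x)$ in this density, so both the numerator and the denominator of $h^w(d\mid\ell)$ equal $\Pr(L=\ell)^{-1}$ times an unconditional expectation over the population distribution of $(\nu,X)$; the common factor $\Pr(L=\ell)^{-1}$ cancels, giving
\[h^w(d\mid\ell)=\frac{\Pr(D_\ell=d)}{\Pr(D_\ell\ge d)}=\Pr\!\big(D_\ell=d\mid D_\ell\ge d\big),\]
so the weighted exit rates recover the truncated marginal distributions of $D_\ell$ and $D_{\ell'}$. This is the clean translation into the hazard setting of the ``propensity weighting removes selection on observables'' logic of \cite{Rosenbaum1987}.

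Next I would exploit the model structure. Under Assumption~\ref{assump_mixed_hazard}, the proportional form $\psi_L(d,X)=\psi_L(d)\phi(X)$, and stationarity (Assumption~\ref{assump_stationarity}), both $\Pr(D_\ell=d\mid\nu,X)$ and $\Pr(D_\ell\ge d\mid\nu,X)$ depend on $(\nu,X)$ only through $\theta=\theta(X,\nu)$, so that
\[\Pr(D_\ell\ge d)=\E\!\left[\textstyle\prod_{s=1}^{d-1}\big(1-\psi_\ell(s)\theta\big)\right],\qquad \Pr(D_\ell=d)=\psi_\ell(d)\,\E\!\left[\theta\textstyle\prod_{s=1}^{d-1}\big(1-\psi_\ell(s)\theta\big)\right],\]
with $\psi_\ell(s)=\psi(s)$ for $s\ge2$ and empty products equal to $1$. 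Expanding, $\Pr(D_\ell\ge d)$ is a known-coefficient linear combination of $\{\E(\theta^k)\}_{k=0}^{d-1}$ and $\Pr(D_\ell=d)$ one of $\{\E(\theta^k)\}_{k=1}^{d}$, the coefficients being polynomials in $\psi_\ell(1),\psi(2),\dots,\psi(d)$. I would then run an induction on $d$, fixing the scale by normalizing $\E(\theta)=1$. At $d=1$, $h^w(1\mid\ell)=\psi_\ell(1)$ and $h^w(1\mid\ell')=\psi_{\ell'}(1)$, so the two first-period hazards are identified and differ exactly by the stated hypothesis. Assuming $\psi_\ell(1),\psi_{\ell'}(1),\psi(2),\dots,\psi(d-1)$ and $\E(\theta),\dots,\E(\theta^{d-1})$ are already identified, $\Pr(D_\ell\ge d)$ and $\Pr(D_{\ell'}\ge d)$ carry no new information (they are pinned down by quantities already identified), while $\Pr(D_\ell=d)$ and $\Pr(D_{\ell'}=d)$ form a $2\times2$ linear system in the unknowns $\psi(d)$ and $\psi(d)\E(\theta^d)$ with coefficients that are known functions of the already-identified parameters; dividing out $\psi(d)\in(0,\infty)$ then yields $\E(\theta^d)$, and the step repeats up to $\bar D$.

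The main obstacle is establishing that this $2\times2$ system is nonsingular at every stage, i.e.\ that the two notice lengths contribute genuinely independent equations. Writing $Q(t)=t\prod_{s=2}^{d-1}(1-\psi(s)t)$, I expect the determinant of the system to reduce to a nonzero constant multiple of $\big(\psi_{\ell'}(1)-\psi_\ell(1)\big)\,\E[Q(\theta)]$. The first factor is nonzero precisely by the hypothesis $\psi_\ell(1)\ne\psi_{\ell'}(1)$; for the second I would use the boundedness restrictions in Assumption~\ref{assump_mixed_hazard}, which force $\psi(s)\theta<1$ and $\theta>0$ almost surely, so that $Q(\theta)>0$ a.s.\ and hence $\E[Q(\theta)]>0$. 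This is why the argument is not a mere count of equations against unknowns: without the range restriction $h(d\mid\nu,L,X)\in(0,1)$ the key expectation could vanish and identification would fail. Step one's reduction, by contrast, is conceptually the crux of the proposition but technically brief once the weighting is unwound.
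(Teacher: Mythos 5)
Your proposal is correct, and its first half is exactly the paper's argument: the law of iterated expectations, conditional independence, and Bayes' rule are used to cancel $p_\ell(X)$ against the conditional density of $X$ given $L=\ell$, showing $h^w(d\mid\ell)=\Pr(D_\ell=d\mid D_\ell\ge d)$, so the weighted exit rates recover the population hazard of the potential duration. Where you diverge is in the algebra that follows. The paper reduces to the machinery of Theorem \ref{result_main_theorem}: it differences the survival and density equations across $\ell$ and $\ell'$ to isolate $\E[\theta\check{S}(d-1)]$ and $\E[\theta^2\check{S}(d-1)]$, obtaining the closed-form expression $\psi(d)\E[\theta(X,\nu)]=\bigl(g^w(d|\ell')g^w(1|\ell)-g^w(d|\ell)g^w(1|\ell')\bigr)/\bigl(S^w(d-1|\ell')-S^w(d-1|\ell)\bigr)$ (up to the $\pi_\ell$ factors), which identifies all structural hazards without reference to higher moments; the moments are then recovered separately via the invertible upper-triangular system of Lemma \ref{result_mom_id}. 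You instead run a single induction that recovers $\psi(d)$ and $\E(\theta^d)$ jointly at each step from the two density equations, with nonsingularity of the $2\times 2$ system as the rank condition. The two routes are algebraically equivalent: your determinant, $(\psi_{\ell'}(1)-\psi_\ell(1))\E[\theta\prod_{s=2}^{d-1}(1-\psi(s)\theta)]$ times a nonzero product of hazards, is exactly the paper's denominator $S(d-1|\ell')-S(d-1|\ell)=(\psi_\ell(1)-\psi_{\ell'}(1))\E[\theta\check{S}(d-1)]$, so the key nondegeneracy requirement coincides (and you are, if anything, more explicit than the paper about why that expectation is strictly positive). What the paper's organization buys is a constructive formula for $\psi(d)$ that stands on its own and reuses Lemma \ref{result_mom_id} for the moments; what yours buys is a self-contained recursion that makes the role of the rank condition transparent and delivers hazards and moments simultaneously. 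No gap.
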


The identification results discussed thus far rely on observing completed unemployment durations. However, as is typical in many datasets, some individuals are still unemployed at the time of the DWS survey. For the unemployed individuals, we observe how long they have been unemployed, but we do not know if and when they will find a job. I now extend Proposition \ref{result_prop} to incorporate right-censored unemployment durations.

Let \( D^c \) denote the censoring time, which is the time elapsed from when an individual becomes unemployed to the time of the survey. For individuals who have already exited unemployment at the time of the survey, we observe their completed unemployment duration $D$ in the data. However, we only observe the censoring time $D^c$ for currently unemployed individuals. Specifically, for each individual, we observe $\tilde{D} = \min\{D,D^c\}$ along with an indicator variable for whether the individual was censored or not, denoted by $C = \I\{D^c < D\}$.

Intuitively, the exit rate at duration \(d\) can be calculated from observed durations as the proportion of individuals who find a job at \(d\), indicated by \(\tilde{D}=d\) and \(C=0\), among those whose observed duration \(\tilde{D}\) is at least equal to \(d\). Since we are limiting ourselves to individuals whose observed duration is at least \(d\), it means that everyone in this group has remained unemployed up to duration \(d\). However, this approach excludes individuals who are censored before \(d\), among whom some may have also remained unemployed until \(d\). Therefore, the exit rate calculated in this manner captures the exit rate for those who are censored after \(d\). Mathematically,
$$ \tilde{h}(d| \Upsilon) = \Pr(\tilde{D}=d , C=0 |\tilde{D} \geq d, \Upsilon)  =  \Pr(D=d|D \geq d, D^c \geq d, \Upsilon) $$
The second equality in the above expression follows from the definition of $\tilde{D}$ and $C$. 

To incorporate inverse propensity score weighting while calculating exit rates based on observed durations and the censoring indicator, we can proceed as before and define the IPW exit rate using observed durations as follows:
\begin{equation*}\label{eqn_ipw_obs_h}
\tilde{h}^w(d| L) = \E\left[\frac{\Pr(\tilde{D}=d, C=0|\nu, L, X)}{p_{L}(X)} \middle | L\right] \ \big/ \ \E\left[\frac{\Pr(\tilde{D} \geq d|\nu, L, X)}{p_{L}(X)} \middle | L \right]
\end{equation*}
The proposition below states that, under the assumptions specified in Proposition \ref{result_prop}, the structural hazards are identified from \( \tilde{h}^w(d| L) \) with the additional assumption \( D^c \perp L \mid X \). Additionally, if we also assume that \( D^c \) is independent of both \( X \) and \( \nu \), moments of \( \theta(X, \nu) \) are also identified. In my application, these assumptions are generally innocuous, as the censoring time is determined by when an individual was surveyed in DWS, which should not correlate with individual outcomes.

\begin{prop}\label{result_prop_main}
Assume that the assumptions stated in Proposition \ref{result_prop} hold and additionally \(D^c \perp L |X\). Then for any $\ell$ and $\ell'$, with $\psi_{\ell}(1) \neq \psi_{\ell'}(1)$ and some integer $\bar{D}$, the structural hazards $\{\psi_{\ell}(1),\psi_{\ell'}(1), \{\psi(d) \}_{d=2}^{\bar{D}}\}$ are identified up to a scale from the weighted exit rates $\{\tilde{h}^w(d|\ell),\{\tilde{h}^w(d| \ell')\}_{d=1}^{\bar{D}}$. Furthermore, if \( D^c \perp \nu \) and \(D^c \perp X\), then the moments  $\{\E(\theta(X, \nu)^k)\}_{k=1}^{\bar{D}}$ are also identified. 
\end{prop}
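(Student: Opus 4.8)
The plan is to reduce Proposition~\ref{result_prop_main} to Proposition~\ref{result_prop}. Concretely, I would show that under the additional censoring conditions the IPW exit rate built from observed, possibly right-censored durations equals the one built from completed durations, \( \tilde{h}^w(d\mid L)=h^w(d\mid L) \) for every \(d\) and every notice length \(L\); once this is in hand, the structural hazards \(\{\psi_\ell(1),\psi_{\ell'}(1),\{\psi(d)\}_{d=2}^{\bar{D}}\}\) and (under the stronger hypotheses) the moments \(\{\E(\theta(X,\nu)^k)\}_{k=1}^{\bar{D}}\) are identified up to scale by a direct appeal to Proposition~\ref{result_prop} applied to \(\{\tilde{h}^w(d\mid\ell),\tilde{h}^w(d\mid\ell')\}_{d=1}^{\bar{D}}\).

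The first step is event algebra: from \( \tilde D=\min\{D,D^c\} \) and \( C=\I\{D^c<D\} \) one gets \( \{\tilde D=d,\,C=0\}=\{D=d,\,D^c\geq d\} \) and \( \{\tilde D\geq d\}=\{D\geq d,\,D^c\geq d\} \). Conditioning on \((\nu,L,X)\) and using independence of the censoring time from the duration realization (ordinary random censoring), both the numerator and the denominator of \( \tilde{h}^w(d\mid L) \) acquire the \emph{same} factor \( \Pr(D^c\geq d\mid\nu,L,X) \); the hypothesis \( D^c\perp L\mid X \) makes this factor free of the notice length, so it is literally the same object in the \(\ell\) and \(\ell'\) versions.

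The second step is to carry the factorization through the inverse-propensity weighting. Substituting the factored conditional probabilities into the definition of \( \tilde{h}^w(d\mid\ell) \) and using conditional independence (Assumption~\ref{assump_independence}, \(L\perp\nu\mid X\)) exactly as in the proof of Proposition~\ref{result_prop} --- i.e.\ the identity \( \E[\,g(\nu,X)/p_\ell(X)\mid L=\ell\,]=\E[g(\nu,X)]/\Pr(L=\ell) \) --- the censoring-survival term factors out of the ratio up to a reweighting of the distribution of \(X\) by \( \Pr(D^c\geq d\mid X) \). Under \( D^c\perp\nu \) and \( D^c\perp X \) this weight is constant in \(X\), hence cancels outright, giving \( \tilde{h}^w(d\mid L)=h^w(d\mid L) \) exactly, and Proposition~\ref{result_prop} returns both the \(\psi\)'s and the moments. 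For the structural hazards alone I would argue that the residual \(X\)-reweighting, being common across notice lengths at each \(d\), does not disturb the cross-notice equations that pin the \(\psi\)'s down, so that \( D^c\perp L\mid X \) already suffices. The base case \(d=1\), where \( \Pr(D^c\geq 1\mid\cdot)=1 \), anchors the recursion over \(d\) that Proposition~\ref{result_prop} runs.

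The step I expect to be the main obstacle is this last point: verifying that the duration-dependent \(X\)-reweighting induced by censoring leaves the notice-length contrast --- and hence the identification of \(\{\psi_\ell(1),\psi_{\ell'}(1),\{\psi(d)\}_{d=2}^{\bar{D}}\}\) --- intact. Everything else is the bookkeeping of plugging a factored probability into a ratio and invoking the IPW identity already used for Proposition~\ref{result_prop}; the genuinely new content is isolating the effect of censoring on that ratio and showing it is either harmless or removable by the stated independence conditions.
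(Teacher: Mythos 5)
Your proposal is correct and takes essentially the same route as the paper: the paper rewrites \(\tilde{h}^w(d\mid L)\) as \(\Pr(D_L=d\mid D_L\geq d, D^c\geq d)\) and works with expectations conditional on \(D^c \geq d\) (equivalently, your multiplicative censoring-survival weight at duration \(d\)), observes that under \(D^c \perp L \mid X\) this term is common across notice lengths so the cross-notice differencing from Proposition \ref{result_prop} still delivers the structural hazards up to scale, and notes that the moments only follow once \(D^c \perp \nu\) and \(D^c \perp X\) make the duration-\(d\) conditioning vacuous, exactly your reduction \(\tilde{h}^w = h^w\). The only minor imprecision is describing the residual reweighting as a function of \(X\) alone — in general it is \(\Pr(D^c \geq d \mid \nu, X)\) unless one also invokes \(D^c \perp \nu \mid X\) (as the paper implicitly does) — but since the weight is common across notice lengths either way, this does not affect the argument.
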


Proposition \ref{result_prop_main} is the main proposition utilized in the application in the paper. Section \ref{sec_estimation} builds a GMM estimator based on this identification result. 

\subsection{Intuition for Identification}\label{subsec_intuition}

In this section, I elucidate the intuition behind the identification result. To simplify the explanation, I focus on the case without observable characteristics, as incorporating them does not provide any additional insights regarding identification.\footnote{A similar exposition can also be found in \cite{vandenBergvanOurs1996}, where the authors discuss identification utilizing calendar-time effects.} In this model, an individual worker's exit probability is given by $h(d|\ell,\nu) = \psi_{\ell}(d) \nu$ and $\nu$ is independent of $L$. Note that independence implies $f(\nu|L) = f(\nu)$. For brevity, let us denote the first and second moments of $\nu$ by $\mu_1 = \E(\nu)$ and $\mu_2 = \E(\nu^2)$, respectively. It is worth noting that the variance of $\nu$, given by $var(\nu) = \mu_2 - \mu_1^2$, captures the extent of heterogeneity across workers.

To see why the identification result holds, note that the exit rate in the first period is given by \( h(1|\ell) = \psi_{\ell}(1) \mu_1 \) and the exit rate at \( d=2 \) is given by:
$$ h(2|\ell) = \frac{Pr(D=2|\ell)}{Pr(D \geq 2|\ell)} = \psi(2) \left(\frac{\mu_{1} -\psi_{\ell}(1)\mu_{2}}{1-\psi_{\ell}(1) \mu_{1}} \right) = \psi(2)\mu_{1}  \left(\frac{1 -h(1|\ell)(\mu_2/\mu_1^2)}{1-h(1|\ell)} \right)  $$

The third equality in the above equation follows from $\psi_{\ell}(1)=h(1|\ell)/\mu_1$. In the presence of heterogeneity, the variance of $\nu$ is greater than zero, which means that $\mu_2/\mu_1^2 > 1$. Therefore, based on the expressions for $ h(1|\ell)$ and $ h(2|\ell)$, we can observe that $h(2|\ell)/h(1|\ell)$ will always be smaller than $\psi(2)/\psi_{\ell}(1)$. Furthermore, the greater the variance of $\nu$ (i.e., the more heterogeneity across workers), the larger $\mu_2/\mu_1^2$ will be, and the more distant $h(2|\ell)/h(1|\ell)$ will be from $\psi(2)/\psi_{\ell}(1)$. This occurs because greater heterogeneity across workers implies that the composition of workers from the first to the second period changes more drastically. For instance, in the absence of heterogeneity across workers where $\mu_2/\mu_1^2=1$, the composition across both periods is unchanged, and thus $h(2|\ell)/h(1|\ell)=\psi(2)/\psi_{\ell}(1)$.

If we knew the extent of heterogeneity across workers as captured by $\mu_2/\mu_1^2$, we could determine how the composition changes from the first to the second period and estimate the structural duration dependence $\psi(2)/\psi_{\ell}(1)$ from the observed duration dependence $h(2|\ell)/h(1|\ell)$. The variation in notice lengths allows us to learn about the underlying heterogeneity and estimate structural duration dependence. To understand why this is the case, note that for two lengths of notice $\ell$ and $\ell'$, the following expression holds:
$$ \frac{h(2|\ell)}{h(2|\ell')} = \left(\frac{1 -h(1|\ell)(\mu_2/\mu_1^2)}{1-h(1|\ell)} \right) \bigg / \left(\frac{1 -h(1|\ell')(\mu_2/\mu_1^2)}{1-h(1|\ell')} \right) $$

Assuming without loss of generality that $h(1|\ell')>h(1|\ell)$, we can see from the above expression that then $h(2|\ell)/h(2|\ell')\geq 1$. This is because more individuals with notice length $\ell'$ leave in the first period, leading to a worse composition for that group in the second period. Furthermore, when the variance across workers is higher, $h(2|\ell)$ will be further above $h(2|\ell')$. Thus, the difference in exit rates among workers with different notice lengths provides information about the degree of heterogeneity, and we can use the above expression to compute $\mu_2/\mu_1^2$.  Once we know $\mu_2/\mu_1^2$, we can plug that back into the expression for $h(2|\ell)/h(1|\ell)$ and estimate the structural duration dependence $\psi(2)/\psi_{\ell}(1)$. In summary, the difference in exit rates at duration $d=2$ across notice lengths reflects differences in the composition of remaining workers. Therefore, comparing exit rates of workers with different notice lengths can provide insights into the extent to which underlying heterogeneity impacts exit rates. A similar argument applies to identifying structural hazards beyond the second period.\footnote{To understand why higher moments determine the hazard at later durations, we can consider how the composition of workers changes from $d=2$ to $d=3$. This change depends on the level of heterogeneity across workers at the start of $d=2$. If the distribution of heterogeneity has a positive skew, the variance among individuals who survive to $d=2$ would be lower than that among individuals at the start of $d=1$. This is because the few individuals with a high likelihood of exiting unemployment would have already left, reducing the variance among surviving workers.}


\section{Estimation}\label{sec_estimation}

\textbf{Generalized Method of Moments (GMM).} 
Using the identification result presented in Proposition \ref{result_prop_main}, we can use the Generalized Method of Moments (GMM) to construct a consistent estimator for the structural hazards and moments of the heterogeneity distribution. Since the model is identified only up to scale, I normalize the first moment by setting $\E[\theta(X, \nu)]=1$. With $J$ possible notice lengths, the vector of unknown parameters is given by $\Theta = \{\{\psi_{\ell}(1)\}_{\ell=1}^J,\{\psi(d)\}_{d=2}^{\bar{D}},\{\E[\theta(X, \nu)^k]\}_{k=2}^{\bar{D}}\} $ and has a total of $2(\bar{D}-1)+J$ unknown parameters.

Now, for each individual $i$, define the following moment condition:
\[ m_i(\ell,d;\Theta) = \I\{L_i=\ell\} \cdot \left[\frac{\I\{\tilde{D}_i=d\}\I\{C_i=0\}}{p_{\ell}(X_i)}-\tilde{h}^w(d|\ell;\Theta) \cdot \frac{\I\{\tilde{D}_i \geq d\}}{p_{\ell}(X_i)} \right]  \]
Given the definition of \(\tilde{h}^w(d|\ell;\Theta)\), it follows that \(\E[m_i(\ell,d;\Theta)] = 0\), as reasoned in Section \ref{proof_moms_zero}. We can now stack moment conditions pertaining to different notice lengths and durations in one vector, denoted by $ \mathbf{m}_i(\Theta) = \{ \{ m_i(\ell,d,\Theta) \}_{d=1}^{\bar{D}}\}_{\ell=1}^{J} $. Observe that $\mathbf{m}_i(\Theta)$ contains $J \times \bar{D}$ moment conditions. As shown in Proposition \ref{result_prop_main}, our parameters of interest are identified from these moment conditions as long as $J>1$. This condition also ensures that the number of moments is equal to or greater than the number of parameters.

The corresponding sample average to \(\E[m_i(\ell,d;\Theta)]\) can be written as:
\[
  \hat{m}(\ell,d;\Theta) = \frac{1}{n}\sum_{i=1}^n m_i(\ell,d;\Theta) = \hat{h}^{\text{num}}(d|\ell) - \tilde{h}^w(d|\ell;\Theta) \cdot \hat{h}^{\text{den}}(d|\ell)
\]
Here, \(n\) is the sample size, and \( \hat{h}^{num}(d|\ell)\) and \( \hat{h}^{num}(d|\ell)\) are defined as follows:
\[
  \hat{h}^{num}(d|\ell) =  \frac{1}{n} \sum_{i=1}^n \frac{\I\{L_i=\ell\}\I\{\tilde{D}_i=d\}\I\{C_i=0\}}{p_{\ell}(X_i)}, \quad \hat{h}^{den}(d|\ell) =  \frac{1}{n} \sum_{i=1}^n \frac{\I\{L_i=\ell\}\I\{\tilde{D}_i\geq d\}}{p_{\ell}(X_i)}
\]
Note that \(\hat{h}(d|\ell) = \hat{h}^{num}(d|\ell)/\hat{h}^{den}(d|\ell)\) represents the sample counterpart to \(\tilde{h}^w(d|\ell;\Theta)\). As before, stack the sample moments in a single vector \( \hat{\mathbf{m}}(\Theta)\), such that \( \hat{\mathbf{m}}(\Theta) = \sum_{i=1}^n \mathbf{m}_i(\Theta)/n \).

The GMM estimator $\hat{\Theta}$ is then given by:
$ \hat{\Theta} = \arg \max_{\Theta}  \hat{\mathbf{m}}(\Theta)' \hat{W}  \hat{\mathbf{m}}(\Theta) $. When the model is just-identified, $\hat{W}$ is given by the identity matrix. In the case of over-identification, the efficient weighting matrix is given by $\hat{W} =\hat{\Omega}^{-1}$, where $\hat{\Omega}= \left[\frac{1}{n} \sum_{i=1}^n \mathbf{m}_i(\hat{\Theta}) \mathbf{m}_i(\hat{\Theta})' \right]$. Using the two-step estimation process, we can compute $\hat{\Theta}$. The asymptotic distribution of this estimator is given by $ \sqrt{n}(\hat{\Theta}-\Theta) \rightarrow N(0,(\hat{M}' \hat{\Omega}^{-1} \hat{M})^{-1})$, where $\hat{M} = \partial \hat{\mathbf{m}}(\hat{\Theta})/\partial \Theta$. 

\textbf{Functional Form for Structural Hazard.} Even though the model is identified non-parametrically, given small sample sizes, to minimize the number of estimated parameters, I assume that the structural hazard $\psi(d)$ for $d>1$ has a log-logistic form as follows 
\begin{equation}\label{eqn_functional_form}
\psi(d) = \frac{(\alpha_2/\alpha_1) (d/\alpha_1)^{\alpha_2-1}}{1+(d/\alpha_1)^{\alpha_2}} 
\end{equation}
where $\alpha_1>0,\alpha_2>0$. The hazard function in equation (\ref{eqn_functional_form}) is monotonically decreasing when $\alpha_2 \leq 1$ and is unimodal, initially increasing and subsequently decreasing when $\alpha_2>1$. The mode or the turning point is $\alpha_1 (\alpha_2-1)^{1/\alpha_2}$.\footnote{This provides a flexible parametrization for the structural hazard relative to other commonly used parametrization, such as Weibull or Gompertz, as it allows the structural hazard to be non-monotonic. However, I also present non-parametric estimates in  \ref{app_robust}.}


\section{Duration Dependence and Heterogeneity}\label{sec_estimates}

\subsection{Baseline Estimates}

Table \ref{tab_baseline_estimates} presents the main estimates from the Mixed Hazard model. Since I normalized the mean of worker type to equal 1, the estimated structural hazards corresponding to the first period for short and long-notice individuals coincide with their corresponding observed exit rates in the data. The last two lines in panel A of Table \ref{tab_baseline_estimates} show the estimated parameters for the log-logistic function specified in equation (\ref{eqn_functional_form}) used to model structural dependence. 

The structural hazards implied by these parameters are presented in panel B of Table \ref{tab_baseline_estimates} and panel A of Figure \ref{fig_baseline_estimates}. Additionally, panel A of Figure \ref{fig_baseline_estimates} shows the observed exit rate from the data, averaged across workers with short and long notice, alongside the estimated hazard. This figure shows that the estimated hazard consistently exceeds the observed hazard throughout the unemployment spell, indicating the role of underlying heterogeneity. While the observed hazard in the data declines by \ObsHazOnetoTwo\% over the first 24 weeks, the estimated structural hazard only decreases by \EstHazOnetoTwo\% during the same period. Hence, after accounting for heterogeneity, almost three-quarters of the observed decline in the first 24 weeks can be attributed to structural duration dependence.

\begin{table}[t]
\begin{threeparttable}
\caption{Estimation Results}\label{tab_baseline_estimates}
\begin{tabularx}{\linewidth}{Yp{0.5\textwidth}YY}
\toprule
Parameter & Explanation & Estimate & SE \\
\midrule  \addlinespace[1ex]
\multicolumn{4}{l}{\textit{Panel A: Estimated Parameters}}  \\ \addlinespace[1ex]
$\psi_S(1)$ & Structural hazard 0-12 weeks: Short notice & 0.49 & 0.01 \\ 
$\psi_L(1)$ & Structural hazard 0-12 weeks: Long notice & 0.56 & 0.01 \\ 
$\alpha_1$ & Scale parameter for $\psi(d)$ & 2.06 & 0.17 \\ 
$\alpha_2$ & Shape parameter for $\psi(d)$ & 2.54 & 0.27 \\ 
 \addlinespace[1ex]
\multicolumn{4}{l}{\textit{Panel B: Duration Dependence}}  \\ \addlinespace[1ex]
$\bar{\psi}(1)$ & Structural hazard: 0-12 weeks & 0.53 & 0.01 \\ 
$\psi(2)$ & Structural hazard: 12-24 weeks & 0.35 & 0.07 \\ 
$\psi(3)$ & Structural hazard: 24-36 weeks & 0.61 & 0.09 \\ 
$\psi(4)$ & Structural hazard: 36-48 weeks & 0.61 & 0.09 \\ 
   \\ 
\end{tabularx}
\begin{tabularx}{\linewidth}{p{1cm}XX}
\multicolumn{3}{l}{\textit{Hansen-Sargan Test}}  \\ \addlinespace[1ex]
 & Test statistic: 0.01 & Critical value, $df=1, \chi_{0.05}^2$: 3.84 \\
   
\bottomrule
\end{tabularx}
\begin{tablenotes}
\item \textit{Note:} The table presents estimates from the Mixed Hazard model. The first moment is normalized to one, and structural duration dependence is specified by equation (\ref{eqn_functional_form}). Panel A shows the estimated parameters from the model, and panel B presents structural hazards implied by the estimated parameters. The standard errors for the structural hazards are calculated using the delta method.
\end{tablenotes}
\end{threeparttable}
\end{table}

\begin{figure}[t]\caption{Baseline Estimates}\label{fig_baseline_estimates}
\centering
\begin{subfigure}{.49\linewidth}
\raggedleft
\includegraphics{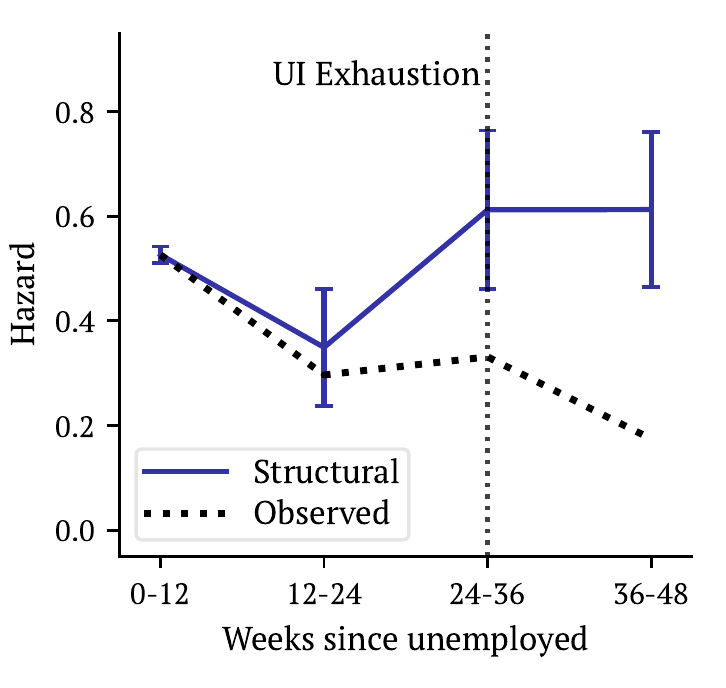}
\subcaption{Structural Hazard}
\end{subfigure} \hfill
\begin{subfigure}{.49\linewidth}
\includegraphics{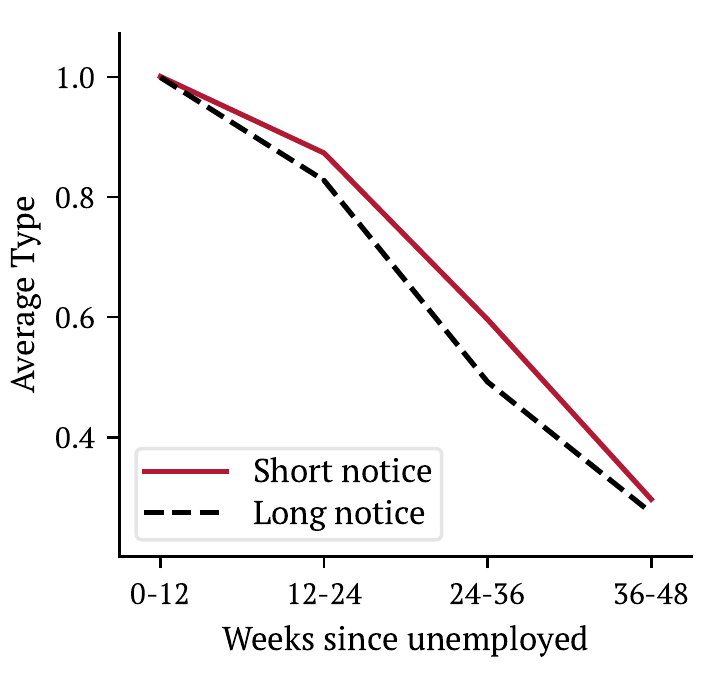}
\subcaption{Average Type}
\end{subfigure} 
\vspace{-0.75em}
\floatfoot{\textit{Note:} The solid blue line in panel A presents estimates for structural hazards as implied by the estimated parameters in panel A of Table \ref{tab_baseline_estimates}. The dotted black line in panel A presents the observed exit rate from the data, averaged across workers with short and long notice. Panel B presents the implied average type at each duration for those with short and long notice. Error bars represent 90\% confidence intervals.}
\end{figure}

However, the estimated structural hazard increases by \EstHazTwotoThree\% from 12-24 to 24-36 weeks, a much more pronounced increase than the observed hazard. This pattern possibly reflects individuals approaching the exhaustion of their unemployment insurance (UI) benefits.\footnote{Since the sample consists of displaced workers, a substantial portion of these individuals should be eligible for UI benefits. Table \ref{tab_uiben_recd} in the  Appendix shows that more than 80\% of individuals unemployed for longer than 12 weeks report receiving UI benefits in the sample.} As previously noted, there is variation across individuals in the eligible duration of UI receipt. However, a significant proportion of individuals are eligible for UI benefits that last for 26 weeks, which coincides with the third interval. Figure \ref{fig_uiex_break} in the  Appendix shows that the proportion of individuals reporting exhausting their UI benefits jumps up significantly in the third interval. This finding of increasing structural hazard leading up to benefit exhaustion is consistent with individuals intensifying their job search efforts or lowering their expectations to secure employment before depleting their benefits. Finally, while the observed hazard continues declining even after 36 weeks, the structural hazard remains constant.

Panel B of Figure \ref{fig_baseline_estimates} displays the average type implied by the model for individuals with short and long notice periods throughout the unemployment spell. For both groups, the average type deteriorates over the course of unemployment. However, for individuals with longer notice periods, the composition worsens more between 0-12 and 12-24 weeks, indicating a higher exit rate in the initial period for this group. By the end of 36 weeks, when a significant number of individuals have already left unemployment, there is little difference in the average type between the two groups.

Overall, the estimated pattern of structural duration dependence aligns with existing evidence from audit studies on call-back rates and with the predictions of search theory. I find that individual exit probabilities decline during the first five months, which can be attributed to duration-based employer discrimination. Further, I find that an individual's exit probability increases leading up to benefit exhaustion and remains constant after. This is consistent with search theory, which predicts that individuals increase their search effort or lower their reservation wages until they reach benefit exhaustion. After that point, if there are no further changes in workers' incentives, their probability of exiting unemployment should remain constant. Interestingly, in their audit study, \cite{KroftEtAl2013} find a decrease in callback rates only during the first six months of unemployment (refer to Figure 2 in their paper). In Section \ref{sec_search_model}, I formally illustrate that my findings are consistent with a search model incorporating heterogeneous workers and falling callback rates early in the unemployment spell.

In recent studies, researchers have introduced behavioral modifications to the standard search theory in order to explain the observed decline in exit rates after UI exhaustion, which deviates from the predictions of the standard search model. Most notably, \cite{DellaVignaEtAl2021} introduces reference dependence in utility to account for this decline. However, after adjusting for compositional effects, I do not find evidence of a decline in individual exit probabilities after UI exhaustion. Hence, I show that the data can be reconciled with the standard model by incorporating heterogeneous workers.


\subsection{Robustness Checks}\label{subsec_robust}

In this section, I examine the robustness of the main results by presenting non-parametric estimates, results using different moments for estimation, and findings from an extension of the MH model that relaxes the identifying assumptions.

\ref{app_robust} presents results from several robustness checks. Figure \ref{fig_robust_unwtd} displays the data and estimates using the unweighted sample. The results reveal that the exit rates and estimates for the unweighted sample are very similar to those obtained using the weighted sample, suggesting that the observable characteristics play a limited role in explaining the differences in exit rates between notice groups. Figure \ref{fig_robust_ff} compares the non-parametric estimate for the structural hazard with the baseline log-logistic estimate. The non-parametric estimates are practically equivalent to the baseline but have larger standard errors, particularly for the last data point where the standard error increases significantly.

I also test the robustness of my findings by including additional notice length categories to the original estimation sample. First, I add individuals with no notice as a third group, and then I repeat the process by including individuals with <1 month notice as the third group. Details for these analyses are provided in Section \ref{subsec_alt_notcats} in the Appendix, and estimates are presented in Figure \ref{fig_robust_notice_cat}. The results show that the estimated structural hazard is qualitatively similar to the baseline estimates in both cases. However, when adding the <1 month notice group, the increase in the structural hazard leading up to benefit exhaustion is less pronounced. Note that the model with three notice categories is overidentified with 4 degrees of freedom. The Sargan-Hansen J-statistic for testing overidentifying restrictions is reported in both cases and indicates no evidence against the null hypothesis of a correctly specified model.

For the main estimates, unemployment durations were grouped into 12-week intervals. I now assess the robustness of these findings to different bin sizes through two exercises. Firstly, I present estimates from the MH model using unemployment duration data binned into 9-week intervals. The estimated structural hazard, shown in Figure \ref{fig_robust_altbins}, mirrors the pattern reported in the main results. Specifically, it decreases from the first to the second interval, followed by an increase in the third interval spanning 18-27 weeks, corresponding to UI exhaustion at 26 weeks. 

Secondly, I conduct a simulation where I calculate exit rates using various bin sizes under a specified data-generating process. Using these differently binned exit rates, I estimate the MH model to examine how the estimates are affected by the choice of bin size. Additional details for this simulation are presented in Section \ref{subsec_binning} in the Appendix, and the results are illustrated in Figures \ref{fig_sim_binA} and \ref{fig_sim_binB}. The key takeaway is that the binned estimates reflect the cumulative structural hazard within an interval, representing the probability of an individual exiting unemployment at some point during that interval. Given this, while we may miss some intra-interval changes, the binned estimates still capture a meaningful quantity and are informative about the underlying duration dependence.

Finally, in \ref{app_gen}, I consider the unconditional model as in Corollary \ref{result_uncond} and show identification under more general conditions than independence and stationarity. Specifically, with \(h(d|\nu, L) = \psi_L(d) \nu\) and considering two notice lengths, \(\ell\) and \(\ell'\), I define the following two quantities:

\[ \kappa_d = E(\nu^d|\ell') - E(\nu^d|\ell), \quad \quad \gamma_d = \frac{\psi_{\ell'}(d)}{\psi_{\ell}(d)} \]
\(\kappa_d\) is the difference between the \(d^{th}\) moment of \(\nu\) for the two lengths of notice, while \(\gamma_d\) is the ratio of structural hazards at duration \(d\) for those lengths. In other words, these parameters determine how the distribution of heterogeneity and structural hazard vary by notice length. I then show that if \(\kappa_d\) is known for all \(d\) and \(\gamma_d\) for \(d >1\), the remaining parameters of the  model are identified. The result can be straightforwardly extended to the model in Proposition \ref{result_prop_main}. In fact, in the applications of this result described below, I utilize the IPW exit rates constructed using observed durations.


I utilize this result to test the independence and stationarity assumptions as follows: First, I relax the independence assumption by allowing the mean of the heterogeneity distribution to vary by notice length (\(\kappa_1 \neq 0\)) while assuming stationarity (\(\gamma_d = 1 \) for \(d>1\)).\footnote{In this exercise, I allow only the mean of the heterogeneity distribution to vary while keeping the shape of the distribution, as determined by the central moments, consistent across the two groups. Note that \(\kappa_d\) for \(d > 1\) will still be non-zero because non-central moments depend on scale, so I let them vary accordingly to ensure that the central moments, apart from the mean, are the same for both groups.} Specifically, I reestimate the model for varying values of \(\kappa_1\) to see where the residuals are minimized, which tells us which value of \(\kappa_1\) is most consistent with the data under the given assumptions. Second, while maintaining independence (setting all \(\kappa_d = 0\)), I relax stationarity by assuming \(\gamma_d = \gamma\) for all \(d > 1\). Once again, I reestimate the model, varying \(\gamma\) values to find the residual-minimizing point. The results for these exercises are presented in Figures \ref{fig_genHet} and \ref{fig_genStr}. In both cases, the residuals plotted as a function of the parameters resemble a convex function with a clear local minimum. Moreover, the residuals are minimized when \(\kappa_1 \approx 0\) and \(\gamma \approx 1\), supporting the identifying assumptions. Finally, as a last exercise, I relax both of these assumptions simultaneously by reestimating the model for a grid of \(\kappa_1\) and \(\gamma\) values. The results, presented in Figure \ref{fig_genHetStr}, still show the residual-minimizing values close to 0 and 1, respectively. Overall, this provides evidence in favor of the identifying assumptions employed in the paper.


\section{A Model of Job-Search}\label{sec_search_model}

The estimates obtained from the Mixed Hazard model suggest a decline in an individual worker's probability of exiting unemployment during the initial five months. Additionally, I find evidence that an individual's likelihood of exiting unemployment increases as they approach the exhaustion of unemployment insurance (UI) benefits and remains constant after that. The latter is in contrast to the observed exit rate, which continues to decline even after benefit exhaustion.  Researchers have tried to explain the decline in the observed rate after exhaustion using behavioral explanations such as storable offers \citep{BoonevanOurs2012} or reference dependence in utility \citep{DellaVignaEtAl2021}. In this section, I show that my findings align with standard search theory, incorporating heterogeneous workers, and are consistent with evidence from the audit study conducted by \cite{KroftEtAl2013}, which documents an initial decline in callback rates during the unemployment spell. 

In particular, I set up a search model with heterogeneous workers. Within this model, workers choose search effort to maximize their expected utility. The likelihood of finding a job depends on the offer arrival rate and a worker's search effort. Moreover, the offer arrival rate varies by the duration of unemployment and the type of worker. I calibrate the model to match the implied structural dependence to my estimate from the Mixed Hazard model and also match the exit rate implied by the model to the data. I then examine the trajectory of the offer arrival rate and search effort. This exercise also allows me to discern the impact on exit probabilities arising from two sources: the actions of optimizing agents in response to changing incentives and external factors directly influencing a worker's employment prospects. 

\subsection{Model Setup}
I consider a stylized model of job search where a worker's search environment is non-stationary \citep{Mortensen1986, VanDenBerg1990} and workers are heterogeneous. At every duration $d$, workers choose how much search effort $s$ to exert to maximize their discounted expected utility.\footnote{Alternatively, the model could feature a reservation wage choice, and all conclusions about search effort would instead be regarding reservation wages.} Costs of search effort are given by the function $c(s)$, which is increasing, convex, and twice continuously differentiable, with $c(0)=0$ and $c'(0)=0$. The probability that a worker finds a job $\lambda(d,\nu,s)$ depends on the time elapsed since unemployed $d$, their search effort $s$, and their type $\nu$ as follows: $\lambda(d,\nu,s) = \delta(d)\nu s $. Here, $\delta(d) \nu$ is the offer arrival rate, which varies over the duration of unemployment and across workers of different types. Once workers find a job, they remain employed forever. A worker receives unemployment insurance (UI) benefits $b(d)$ when unemployed and wages $w$ when employed. The function $u(.)$ gives the flow utility from consumption. Then the value function for a worker of type $\nu$ unemployed at duration $d$ is given by:
$$ V_u(d,\nu) = \max_{s} \; u(b(d)) -c(s)+ \beta \left[\lambda(s, d,\nu) V_e+ \left(1-\lambda(s, d,\nu) \right) V_u(d+1,\nu)  \right] $$

Here, $\beta$ is the discount rate, and $V_e$ is the value of employment given by $ V_e = u(w) + \beta V_e $. The UI benefits $b(d)$ are equal to $b$ for $d \leq D_B$ and equal to 0 otherwise. I also assume that after some time $D_T \geq D_B $ the job-finding function $\lambda(d,s,\nu)$ does not depend on the duration of unemployment $d$, such that for $d>D_T$, $\delta(d)=\delta_T$. This ensures that after $D_T$, jobseekers face a stationary environment, and hence, we can solve for the optimal search strategy of each worker in each period using backward induction. Finally, I consider the case of two types of workers: a high type $H$ and a low type $L$ with $\nu_H>\nu_L$, with $\pi$ denoting the share of workers with the higher arrival rate.

\subsection{Numerical Analysis}

I now calibrate the model specified in the previous section. Let $s(d,\nu)$ denote a worker's optimal search effort at duration $d$. The probability that this worker finds a job $h(d|\nu)$ is given by $\delta(d) s(d,\nu)\nu$. So, a worker's exit rate evolves over the spell of unemployment due to changes in the offer arrival rate $\delta(d)$ and the worker's search effort. However, just as before, the observed exit rate $h(d)=\E[h(d|\nu)|D\geq d]$ also changes due to changes in composition over the spell of unemployment. I use my estimate of the structural hazard from the Mixed Hazard model to target structural duration dependence $\E[h(d|\nu)]$ from the model.\footnote{Note that the search model does not correspond precisely to the econometric framework since it does not imply that $s(d,\nu)$ evolves in the same manner for each type of worker. However, in  \ref{app_sm_sim}, I simulate data from the search model with notice periods and show that my estimator does reasonably well in capturing movements in $\E[h(d|\nu)]$.} Additionally, I match the exit rate implied by the model $h(d)$ to the data. In order to compare the predictions from this model to a model with no heterogeneity, I also calibrate the model assuming just one type of worker. In this case, I match the structural duration dependence or the exit rate implied by the model to the exit rate in the data. Further details for the calibration are provided in \ref{app_search_model}. 

\begin{figure}[t]\caption{Calibration of the Search Model}\label{fig_calibration}
\centering
\begin{subfigure}{.475\textwidth}
\includegraphics{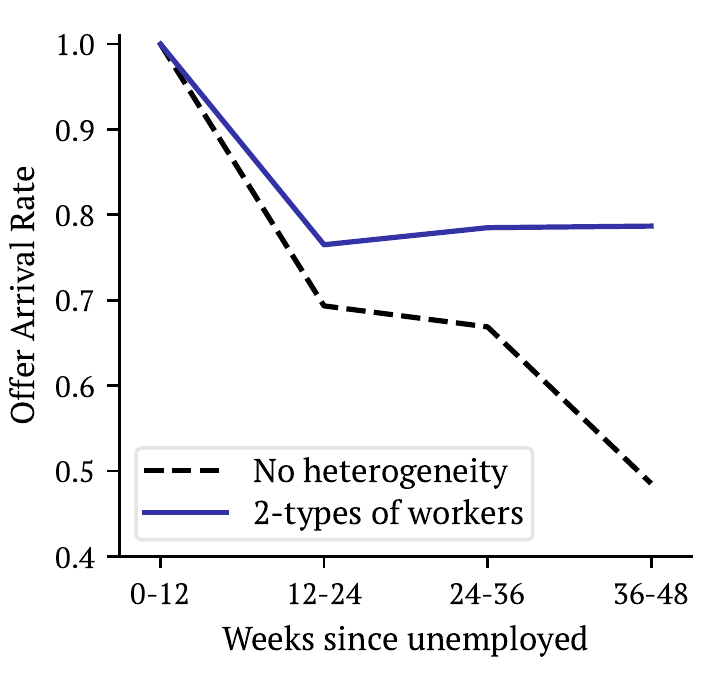}
\subcaption{Offer Arrival Rate}
\end{subfigure} 
\begin{subfigure}{.475\textwidth}
\includegraphics{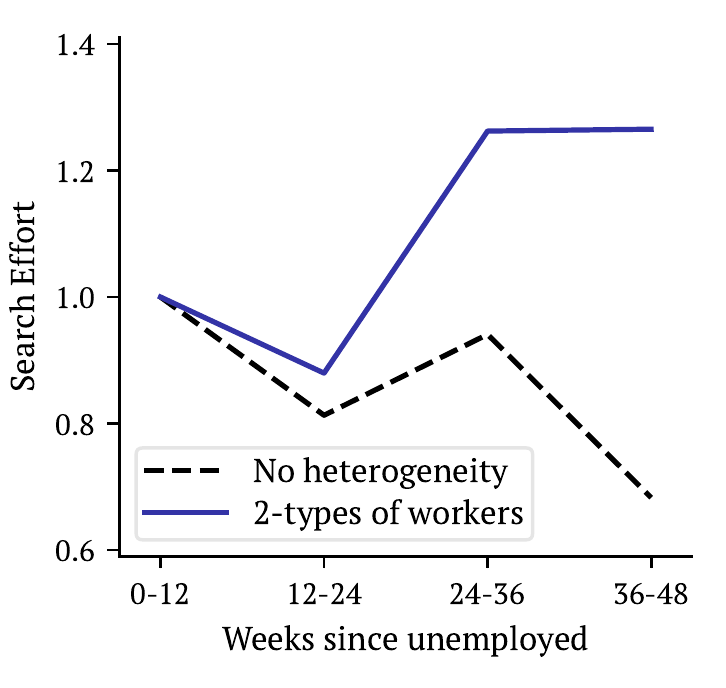}
\subcaption{Search Effort}
\end{subfigure}
\floatfoot{\textit{Notes}: The figure presents the search effort and the offer arrival rate from the calibration of the search model, assuming no heterogeneity (dashed black line) and assuming two types of workers (solid blue line). The search effort is averaged over two types of workers.}
\end{figure}

Figure \ref{fig_calib_fit} shows that both the model with and without heterogeneity fit the data almost perfectly. Figure \ref{fig_calibration} presents the search effort and the offer arrival rate implied by the two calibration exercises. The offer arrival rate implied by the model with heterogeneity declines during the first five months and is constant after that, which is consistent with evidence from \cite{KroftEtAl2013}.  Conversely, in the model with only one type of worker, the offer arrival rate continues to decline throughout the spell of unemployment. Finally, the model calibration implies that an individual's search effort decreases slightly during the first five months but then increases up to UI exhaustion and remains stable after that. In summary, the data and my findings can be rationalized with conventional search theory, without any behavioral adjustments, but by incorporating heterogeneity among workers and declining returns to search early in the unemployment spell.


\section{Conclusion}\label{sec_conclusion}

In this paper, I use a novel source of variation to disentangle the role of structural duration dependence from heterogeneity in the dynamics of the observed exit rate. I document that workers who receive a longer notice before being laid off are more likely to exit unemployment early in the spell. However, the observed exit rate is lower for long-notice workers at later durations. This points towards the presence of heterogeneity across workers. As a higher proportion of the more employable workers from the long-notice group exit early, the composition of surviving long-notice workers at later durations is worse. I utilize these reduced-form moments and estimate a Mixed Hazard model. 

The estimates from the hazard model uncover substantial heterogeneity in individual exit probabilities. The observed exit rate declines by about \ObsHazOnetoTwo\% over the first five months. In contrast, the estimated individual hazard only declines by \EstHazOnetoTwo\% over this period. Moreover, I find that after the first five months, none of the depreciation in the observed exit rate is due to structural duration dependence. Instead, an individual's exit probability increases up to UI exhaustion and remains constant after. The observed exit rate continues to decline after exhaustion as well, which has led researchers to suggest behavioral explanations for this pattern. I show that the alternative explanation of heterogeneity across individuals effectively accounts for this pattern. Specifically, as shown in the paper, my estimates can be rationalized within a standard search model with heterogeneous workers. These findings underscore the importance of incorporating heterogeneity when estimating and calibrating search models.


\clearpage
\setstretch{1.25} 
\bibliography{references.bbl} 

\begin{thebibliography}{}

\bibitem[\protect\citeauthoryear{Abbring}{Abbring}{2012}]{Abbring2012}
Abbring, J.~H. (2012).
\newblock Mixed {{Hitting-Time Models}}.
\newblock {\em Econometrica\/}~{\em 80\/}(2), 783--819.

\bibitem[\protect\citeauthoryear{Alvarez, Borovi{\v c}kov{\'a}, and
  Shimer}{Alvarez et~al.}{2016}]{AlvarezEtAl2016}
Alvarez, F.~E., K.~Borovi{\v c}kov{\'a}, and R.~Shimer (2016, April).
\newblock Decomposing {{Duration Dependence}} in a {{Stopping Time Model}}.
\newblock Working Paper w22188, {National Bureau of Economic Research}.

\bibitem[\protect\citeauthoryear{Alvarez, Borovi{\v c}kov{\'a}, and
  Shimer}{Alvarez et~al.}{2021}]{AlvarezEtAl2021}
Alvarez, F.~E., K.~Borovi{\v c}kov{\'a}, and R.~Shimer (2021, July).
\newblock Consistent evidence on duration dependence of price changes.
\newblock Working Paper 29112, National Bureau of Economic Research.

\bibitem[\protect\citeauthoryear{Boone and {van Ours}}{Boone and {van
  Ours}}{2012}]{BoonevanOurs2012}
Boone, J. and J.~C. {van Ours} (2012, December).
\newblock Why is {{There}} a {{Spike}} in the {{Job Finding Rate}} at {{Benefit
  Exhaustion}}?
\newblock {\em De Economist\/}~{\em 160\/}(4), 413--438.

\bibitem[\protect\citeauthoryear{Brinch}{Brinch}{2007}]{Brinch2007}
Brinch, C.~N. (2007, April).
\newblock Nonparametric {{Identification}} of the {{Mixed Hazards Model}} with
  {{Time-Varying Covariates}}.
\newblock {\em Econometric Theory\/}~{\em 23\/}(2), 349--354.

\bibitem[\protect\citeauthoryear{Burdett and Vishwanath}{Burdett and
  Vishwanath}{1988}]{BurdettVishwanath1988}
Burdett, K. and T.~Vishwanath (1988, October).
\newblock Declining {{Reservation Wages}} and {{Learning}}.
\newblock {\em The Review of Economic Studies\/}~{\em 55\/}(4), 655--665.

\bibitem[\protect\citeauthoryear{Coles and Smith}{Coles and
  Smith}{1998}]{ColesSmith1998}
Coles, M.~G. and E.~Smith (1998).
\newblock Marketplaces and {{Matching}}.
\newblock {\em International Economic Review\/}~{\em 39\/}(1), 239--254.

\bibitem[\protect\citeauthoryear{Cox}{Cox}{1972}]{Cox1972}
Cox, D. (1972).
\newblock Regression {{Models}} and {{Life}}-{{Tables}}.

\bibitem[\protect\citeauthoryear{DellaVigna, Heining, Schmieder, and
  Trenkle}{DellaVigna et~al.}{2021}]{DellaVignaEtAl2021}
DellaVigna, S., J.~Heining, J.~F. Schmieder, and S.~Trenkle (2021, 10).
\newblock {Evidence on Job Search Models from a Survey of Unemployed Workers in
  Germany*}.
\newblock {\em The Quarterly Journal of Economics\/}~{\em 137\/}(2),
  1181--1232.

\bibitem[\protect\citeauthoryear{DellaVigna, Lindner, Reizer, and
  Schmieder}{DellaVigna et~al.}{2017}]{DellaVignaEtAl2017}
DellaVigna, S., A.~Lindner, B.~Reizer, and J.~F. Schmieder (2017, November).
\newblock Reference-{{Dependent Job Search}}: {{Evidence}} from {{Hungary}}*.
\newblock {\em The Quarterly Journal of Economics\/}~{\em 132\/}(4),
  1969--2018.

\bibitem[\protect\citeauthoryear{Elbers and Ridder}{Elbers and
  Ridder}{1982}]{ElbersRidder1982}
Elbers, C. and G.~Ridder (1982, July).
\newblock True and {{Spurious Duration Dependence}}: {{The Identifiability}} of
  the {{Proportional Hazard Model}}.
\newblock {\em The Review of Economic Studies\/}~{\em 49\/}(3), 403--409.

\bibitem[\protect\citeauthoryear{Farber}{Farber}{2017}]{Farber2017}
Farber, H.~S. (2017).
\newblock Employment, {{Hours}}, and {{Earnings Consequences}} of {{Job Loss}}:
  {{US Evidence}} from the {{Displaced Workers Survey}}.
\newblock {\em Journal of Labor Economics\/}~{\em 35\/}(S1), 235--272.

\bibitem[\protect\citeauthoryear{Ganong and Noel}{Ganong and
  Noel}{2019}]{GanongNoel2019}
Ganong, P. and P.~Noel (2019, July).
\newblock Consumer {{Spending}} during {{Unemployment}}: {{Positive}} and
  {{Normative Implications}}.
\newblock {\em American Economic Review\/}~{\em 109\/}(7), 2383--2424.

\bibitem[\protect\citeauthoryear{Gonzalez and Shi}{Gonzalez and
  Shi}{2010}]{GonzalezShi2010}
Gonzalez, F.~M. and S.~Shi (2010).
\newblock An {{Equilibrium Theory}} of {{Learning}}, {{Search}}, and {{Wages}}.
\newblock {\em Econometrica\/}~{\em 78\/}(2), 509--537.

\bibitem[\protect\citeauthoryear{Heckman and Singer}{Heckman and
  Singer}{1984}]{HeckmanSinger1984}
Heckman, J. and B.~Singer (1984).
\newblock A {{Method}} for {{Minimizing}} the {{Impact}} of {{Distributional
  Assumptions}} in {{Econometric Models}} for {{Duration Data}}.
\newblock {\em Econometrica\/}~{\em 52\/}(2), 271--320.

\bibitem[\protect\citeauthoryear{Honor{\'e}}{Honor{\'e}}{1993}]{Honore1993}
Honor{\'e}, B.~E. (1993).
\newblock Identification {{Results}} for {{Duration Models}} with {{Multiple
  Spells}}.
\newblock {\em The Review of Economic Studies\/}~{\em 60\/}(1), 241--246.

\bibitem[\protect\citeauthoryear{Jarosch and Pilossoph}{Jarosch and
  Pilossoph}{2019}]{JaroschPilossoph2019}
Jarosch, G. and L.~Pilossoph (2019, July).
\newblock Statistical {{Discrimination}} and {{Duration Dependence}} in the
  {{Job Finding Rate}}.
\newblock {\em The Review of Economic Studies\/}~{\em 86\/}(4), 1631--1665.

\bibitem[\protect\citeauthoryear{Katz and Meyer}{Katz and
  Meyer}{1990}]{KatzMeyer1990}
Katz, L.~F. and B.~D. Meyer (1990, February).
\newblock The impact of the potential duration of unemployment benefits on the
  duration of unemployment.
\newblock {\em Journal of Public Economics\/}~{\em 41\/}(1), 45--72.

\bibitem[\protect\citeauthoryear{Kolsrud, Landais, Nilsson, and
  Spinnewijn}{Kolsrud et~al.}{2018}]{KolsrudEtAl2018}
Kolsrud, J., C.~Landais, P.~Nilsson, and J.~Spinnewijn (2018, April).
\newblock The {{Optimal Timing}} of {{Unemployment Benefits}}: {{Theory}} and
  {{Evidence}} from {{Sweden}}.
\newblock {\em American Economic Review\/}~{\em 108\/}(4-5), 985--1033.

\bibitem[\protect\citeauthoryear{Kroft, Lange, and Notowidigdo}{Kroft
  et~al.}{2013}]{KroftEtAl2013}
Kroft, K., F.~Lange, and M.~J. Notowidigdo (2013, August).
\newblock Duration {{Dependence}} and {{Labor Market Conditions}}: {{Evidence}}
  from a {{Field Experiment}}*.
\newblock {\em The Quarterly Journal of Economics\/}~{\em 128\/}(3),
  1123--1167.

\bibitem[\protect\citeauthoryear{Krueger and Mueller}{Krueger and
  Mueller}{2011}]{KruegerMueller2011}
Krueger, A.~B. and A.~Mueller (2011).
\newblock Job {{Search}}, {{Emotional Well-Being}}, and {{Job Finding}} in a
  {{Period}} of {{Mass Unemployment}}: {{Evidence}} from {{High Frequency
  Longitudinal Data}} [with {{Comments}} and {{Discussion}}].
\newblock {\em Brookings Papers on Economic Activity\/}, 1--81.

\bibitem[\protect\citeauthoryear{Lancaster}{Lancaster}{1979}]{Lancaster1979}
Lancaster, T. (1979).
\newblock Econometric {{Methods}} for the {{Duration}} of {{Unemployment}}.
\newblock {\em Econometrica\/}~{\em 47\/}(4), 939--956.

\bibitem[\protect\citeauthoryear{Lentz and Tran{\ae}s}{Lentz and
  Tran{\ae}s}{2005}]{LentzTranaes2005}
Lentz, R. and T.~Tran{\ae}s (2005, July).
\newblock Job {{Search}} and {{Savings}}: {{Wealth Effects}} and {{Duration
  Dependence}}.
\newblock {\em Journal of Labor Economics\/}~{\em 23\/}(3), 467--489.

\bibitem[\protect\citeauthoryear{Machin and Manning}{Machin and
  Manning}{1999}]{MachinManning1999}
Machin, S. and A.~Manning (1999, January).
\newblock Chapter 47 {{The}} causes and consequences of longterm unemployment
  in {{Europe}}.
\newblock In {\em Handbook of {{Labor Economics}}}, Volume~3, pp.\  3085--3139.
  {Elsevier}.

\bibitem[\protect\citeauthoryear{Marinescu and Skandalis}{Marinescu and
  Skandalis}{2021}]{MarinescuSkandalis2021}
Marinescu, I. and D.~Skandalis (2021, May).
\newblock Unemployment {{Insurance}} and {{Job Search Behavior}}*.
\newblock {\em The Quarterly Journal of Economics\/}~{\em 136\/}(2), 887--931.

\bibitem[\protect\citeauthoryear{Mortensen}{Mortensen}{1986}]{Mortensen1986}
Mortensen, D.~T. (1986, January).
\newblock Chapter 15 {{Job}} search and labor market analysis.
\newblock In {\em Handbook of {{Labor Economics}}}, Volume~2, pp.\  849--919.
  {Elsevier}.

\bibitem[\protect\citeauthoryear{Mueller, Spinnewijn, and Topa}{Mueller
  et~al.}{2021}]{MuellerEtAl2021}
Mueller, A.~I., J.~Spinnewijn, and G.~Topa (2021, January).
\newblock Job {{Seekers}}' {{Perceptions}} and {{Employment Prospects}}:
  {{Heterogeneity}}, {{Duration Dependence}}, and {{Bias}}.
\newblock {\em American Economic Review\/}~{\em 111\/}(1), 324--363.

\bibitem[\protect\citeauthoryear{Pavoni}{Pavoni}{2009}]{Pavoni2009}
Pavoni, N. (2009).
\newblock Optimal {{Unemployment Insurance}}, with {{Human Capital
  Depreciation}}, and {{Duration Dependence}}*.
\newblock {\em International Economic Review\/}~{\em 50\/}(2), 323--362.

\bibitem[\protect\citeauthoryear{Pavoni and Violante}{Pavoni and
  Violante}{2007}]{PavoniViolante2007}
Pavoni, N. and G.~L. Violante (2007, January).
\newblock Optimal {{Welfare-to-Work Programs}}.
\newblock {\em The Review of Economic Studies\/}~{\em 74\/}(1), 283--318.

\bibitem[\protect\citeauthoryear{Pissarides}{Pissarides}{1992}]{Pissarides1992}
Pissarides, C.~A. (1992, November).
\newblock Loss of {{Skill During Unemployment}} and the {{Persistence}} of
  {{Employment Shocks}}*.
\newblock {\em The Quarterly Journal of Economics\/}~{\em 107\/}(4),
  1371--1391.

\bibitem[\protect\citeauthoryear{Rosenbaum}{Rosenbaum}{1987}]{Rosenbaum1987}
Rosenbaum, P.~R. (1987, 2024/07/01/).
\newblock Model-based direct adjustment.
\newblock {\em Journal of the American Statistical Association\/}~{\em
  82\/}(398), 387--394.

\bibitem[\protect\citeauthoryear{Rosenbaum and Rubin}{Rosenbaum and
  Rubin}{1983}]{RosenbaumRubin1983}
Rosenbaum, P.~R. and D.~B. Rubin (1983, April).
\newblock The central role of the propensity score in observational studies for
  causal effects.
\newblock {\em Biometrika\/}~{\em 70\/}(1), 41--55.

\bibitem[\protect\citeauthoryear{Rubin}{Rubin}{1974}]{Rubin1974}
Rubin, D.~B. (1974).
\newblock Estimating causal effects of treatments in randomized and
  nonrandomized studies.
\newblock {\em Journal of educational Psychology\/}~{\em 66\/}(5), 688.

\bibitem[\protect\citeauthoryear{Schmieder, von Wachter, and Bender}{Schmieder
  et~al.}{2016}]{SchmiederEtAl2016}
Schmieder, J.~F., T.~von Wachter, and S.~Bender (2016, March).
\newblock The effect of unemployment benefits and nonemployment durations on
  wages.
\newblock {\em American Economic Review\/}~{\em 106\/}(3), 739--77.

\bibitem[\protect\citeauthoryear{Shimer and Werning}{Shimer and
  Werning}{2006}]{ShimerWerning2006}
Shimer, R. and I.~Werning (2006, May).
\newblock On the {{Optimal Timing}} of {{Benefits}} with {{Heterogeneous
  Workers}} and {{Human Capital Depreciation}}.
\newblock Technical Report w12230, {National Bureau of Economic Research}.

\bibitem[\protect\citeauthoryear{Van Den~Berg}{Van
  Den~Berg}{1990}]{VanDenBerg1990}
Van Den~Berg, G.~J. (1990).
\newblock Nonstationarity in {{Job Search Theory}}.
\newblock {\em The Review of Economic Studies\/}~{\em 57\/}(2), 255--277.

\bibitem[\protect\citeauthoryear{{van den Berg} and {van Ours}}{{van den Berg}
  and {van Ours}}{1996}]{vandenBergvanOurs1996}
{van den Berg}, G.~J. and J.~C. {van Ours} (1996).
\newblock Unemployment {{Dynamics}} and {{Duration Dependence}}.
\newblock {\em Journal of Labor Economics\/}~{\em 14\/}(1), 100--125.

\end{thebibliography}
\setstretch{1.5} 


\appendix
\setcounter{table}{0}
\setcounter{figure}{0}
\setcounter{section}{0}
\renewcommand{\thesection}{Appendix \Alph{section}}
\renewcommand{\thesubsection}{\Alph{section}.\arabic{subsection}}
\renewcommand{\theequation}{\arabic{equation}}
\renewcommand{\thetable}{\Alph{section}\arabic{table}}
\renewcommand{\thefigure}{\Alph{section}\arabic{figure}}

\setlength{\abovedisplayshortskip}{-0.15cm}
\setlength{\belowdisplayshortskip}{0.35cm}
\setlength{\abovedisplayskip}{0.35cm}
\setlength{\belowdisplayskip}{0.35cm}


\clearpage
\section{Proofs and Derivations}\label{app_proofs}

This section provides proofs of the results discussed in the main text. Before proceeding, let us introduce some additional notation. Denote the probability distribution function of observed duration by $g(.)$. Additionally, define the corresponding survival function $S(.)$ to represent the probability that unemployment duration exceeds a specific value.  

At this point, it will be useful to examine the expressions for the quantities defined above under Assumption \ref{assump_mixed_hazard}, which specifies that  \( h(d|\nu, L, X) = \psi_L(d, X) \nu \). First, note that we can express the survival function $S(d|\nu, L,X)$ as follows:
\[ S(d|\nu, L,X) = \Pr(D>d|\nu, L,X) = \prod_{s=1}^d \left[1-\psi_L(s,X)\nu  \right] \]
Given the above expression, the conditional distribution of observed duration is given by:
\begin{align}\label{eq_density_surv}
g(d|\nu, L,X) 
& = \Pr(D=d|\nu, L, X)  \notag \\
& =  S(d-1|\nu, L,X)-S(d|\nu, L,X) \notag \\
&= \psi_L(d,X) \nu S(d-1|\nu, L,X)  
\end{align} 
with $S(0|\nu, L,X)=1$.

Finally, note that according to the definition of the hazard function in the text, for a set of conditioning variables $\Upsilon$, we have:
$$ h(d|\Upsilon) = \Pr(D=d|D \geq d, \Upsilon) = \frac{\Pr(D=d|\Upsilon)}{\Pr(D \geq d|\Upsilon)} = \frac{g(d|\Upsilon)}{S(d-1|\Upsilon)} $$
The above expression illustrates that, given \( h(d|\Upsilon) \) for \( d=1,2,...,\bar{D} \), we can compute \( g(d|\Upsilon) \) and \( S(d|\Upsilon) \) for the same durations, and vice versa. This means that knowing any one of these three quantities enables the calculation of the other two. Therefore, although the statement of results might specify a set of parameters being identified from  $h(d|\Upsilon)$, it does not matter if the proof utilizes derived quantities $g(d|\Upsilon)$  or  $S(d|\Upsilon)$.


\subsection{Proof of Proposition \ref{result_average_type_falls}}\label{proof_average_type_falls}

\begin{proof}
Given equation (\ref{eq_density_surv}), we can write $g(d|L,X) = \psi_L(d, X) \E[ \nu S(d-1|\nu, L,X) |L,X] $. Plugging this in the definition of $h(d|L,X)$, we get:
\[ h(d|L,X) = \psi_L(d, X) \E\left(\nu \cdot \frac{S(d-1|\nu,L,X)}{S(d-1|L,X)} \middle | L,X \right) \]

To see that the second term in the above expression is the average type $\E(\nu| D \geq d, L,X)$ amongst surviving workers at the beginning of $d$, note that
\[f(\nu |D \geq d,L,X) = \frac{\Pr(D>d-1|\nu,L,X)f(\nu |L,X)}{\Pr(D>d-1 | L,X)}=\frac{S(d-1|\nu,L,X)f(\nu|L,X)}{S(d-1|L,X)}\]
where the first inequality follows from the Bayes rule.\\
To see that the average type declines with duration, note that for any $d$ and $\nu_H>\nu_L$, \[ \frac{S(d|\nu_H,L,X)}{S(d-1|\nu_H,L,X)}<\frac{S(d|\nu_L,L,X)}{S(d-1|\nu_L,L,X)} \]
The above equation implies that, 
\[ \frac{f(\nu_H |D \geq d+1,L,X)}{f(\nu_L |D \geq d+1,L,X)}< \frac{f(\nu_H |D \geq d,L,X)}{f(\nu_L |D \geq d,L,X)} \]
In which case, $f(\nu|D \geq d,L,X)$ first-order stochastically dominates $f(\nu|D \geq d+1,L,X)$ which implies that $\E(\nu| D \geq d, L,X) \geq \E(\nu| D \geq d+1, L,X)$.
\end{proof}


\subsection{Statement and Proof of Lemma \ref{result_mom_id}}\label{proof_lemma}

The following lemma states that the identification of structural hazards implies the identification of higher moments of the unobserved type distribution.

\begin{lemma}\label{result_mom_id}
Under Assumption \ref{assump_mixed_hazard}, if $\psi_L(d, X)$ is known for $d=1,..,\bar{D}$, then we can identify the first $\bar{D}$ conditional moments of $\nu$, given by $\{\E(\nu^k|L,X) \}_{k=1}^{\bar{D}}$, from the conditional unemployment distribution $g(d|L,X) $ for $d=1,..,\bar{D}$.
\end{lemma}	
\begin{proof}
Expanding equation (\ref{eq_density_surv}) for $d=1,2,3,...$, we can write: 
\begin{align*}
g(1|\nu, L,X) &= \psi_L(1, X) \nu \\
g(2|\nu, L,X) &= \psi_L(2, X) \big[\nu -\psi_L(1, X)\nu^2 \big]   \\
g(3|\nu, L,X) &= \psi_L(3, X) \big[\nu-[\psi_L(1, X)+\psi_L(2, X)]\nu^2+\psi_L(1, X)\psi_L(2, X)\nu^3\big]   \\
\vdots
\end{align*}
Or, more compactly, 
\begin{align}\label{eqn_compact_moments}
g(d|\nu, L,X) = \psi_L(d, X) \sum_{k=1}^{\bar{D}} c_k(d,\bm{\psi_{L,X}}) \nu^k
\end{align}
where $\bm{\psi_{L,X}} = \{\psi_L(d, X)\}_{d=1}^{\bar{D}}$ and 
$$ c_k(d,\bm{\psi_{L,X}}) =  \begin{cases}
1 & \text{for }  k=1 \\
c_k(d-1, \bm{\psi_{L,X}})-\psi_L(d-1, X) c_{k-1}(d-1, \bm{\psi_{L,X}}) & \text{for }   1 \leq k \leq d \\
0 & \text{for }  k>d \\
\end{cases} $$
Taking the expectation conditional on $L$ and $X$ of the above expression, we can write: 
$$ g(d|L,X) = \psi_L(d, X) \sum_{k=1}^{\bar{D}} c_k(d,\bm{\psi_{L,X}}) \E(\nu^k |L, X) $$
Let us denote $\bm{g_{L,X}} = \{g(d|L,X)\}_{d=1}^{\bar{D}}$ and $\bm{\mu_{L,X}} = \{\E(\nu^k|L,X)\}_{k=1}^{\bar{D}}$.  Then we can write $\bm{g_{L,X}} = C(\bm{\psi_{L,X}}) \bm{\mu_{L,X}}$ where $C(\bm{\psi_{L,X}})$ is the $\bar{D} \times \bar{D}$ upper triangular matrix with $C_{s,k}(\bm{\psi_{L,X}}) = \psi_{L}(s,X) c_k(s,\bm{\psi_{L,X}})$. In addition, the diagonal elements of  $C(\bm{\psi_{L,X}})$ are non-zero. To see this note that, $C_{d,d}(\bm{\psi_{L,X}}) = (-1)^{d-1} \prod_{s=1}^{d} \psi_{L}(s,X) $ and each $\psi_{L}(s,X)>0$. Hence, $C(\bm{\psi_{L,X}})$ is invertible and we can plug in $\bm{\psi_{L,X}}$ in $\bm{g_{L,X}} = C(\bm{\psi_{L,X}}) \bm{\mu_{L,X}}$ to solve for $\bm{\mu_{L,X}}$.
\end{proof}


\subsection{Proof of Theorem \ref{result_main_theorem}}\label{proof_main_theorem}

\begin{proof}
Assumption \ref{assump_stationarity} states that $\psi_L(d, X)=\psi(d, X)$ for $d>1$. For this reason, define $\check{S}(d|\nu,X)$ as: 
$$ \check{S}(d|\nu,X) = \prod_{s=2}^{d} \big[1-\psi(s, X) \nu\big]    $$
Then by Assumption \ref{assump_stationarity}, for $d>1$, we can write:
\begin{equation}\label{eqn_isurv}
S(d|\nu,L,X) = [1-\psi_L(1, X)\nu] \check{S}(d|\nu,X) 
\end{equation}
Next, by substituting the expression for \( S(d-1|\nu, L, X) \) implied by equation (\ref{eqn_isurv}) into the expression for \( g(d|\nu, L, X) \) given in equation (\ref{eq_density_surv}), we obtain:
\begin{equation}\label{eqn_idens}
 g(d|\nu,L,X) = \psi_L(d,X) \left[\nu \check{S}(d-1|\nu,X)-\psi_L(1, X)\nu^2 \check{S}(d-1|\nu,X) \right]  
\end{equation}
Now note that since Assumption \ref{assump_independence} states that \( \nu \) is independent of \( L \) given \( X \), it follows that for any $L=\ell$: 
\[ q(d|\ell ,X) = \E[q(d|\nu, \ell, X) | \ell, X] = \E[q(d|\nu, \ell, X) | X] \quad \text{for } q = S, g\]
Therefore, by taking the conditional expectation of the expressions implied by equations (\ref{eqn_isurv}) and (\ref{eqn_idens}) under Assumptions \ref{assump_independence} and \ref{assump_stationarity}, we get:
\begin{equation}\label{eqn_surv}
S(d-1|L,X) = \E[\check{S}(d-1|\nu,X)|X]-\psi_L(1, X)\E[\nu \check{S}(d-1|\nu,X)|X]
\end{equation}
\begin{equation}\label{eqn_dens}
g(d|L,X) = \psi(d, X) [\E[\nu \check{S}(d-1|\nu,X)|X]-\psi_L(1, X)\E[\nu^2 \check{S}(d-1|\nu,X)|X]]
\end{equation}
Next, consider two notice lengths $\ell$ and $\ell'$. Substituting $L=\ell, \ell'$ in equations (\ref{eqn_surv}) and (\ref{eqn_dens}) and taking the difference between expressions for $\ell'$ and $\ell$, we get:
\begin{align}
\label{eqn_fm} S(d-1|\ell',X)-S(d-1|\ell,X)  = [\psi_{\ell}(1, X)-\psi_{\ell'}(1, X)] \E[\nu \check{S}(d-1|\nu,X)|X] \\
\label{eqn_sm} g(d|\ell',X)-g(d|\ell,X)  =   \psi(d, X) [\psi_{\ell}(1, X)-\psi_{\ell'}(1, X)] \E[\nu^2 \check{S}(d-1|\nu,X)|X] 
\end{align}
From equation (\ref{eqn_fm}), we can write:
\begin{equation}\label{eqn_fexp}
\E[\nu \check{S}(d-1|\nu,X)|X] = \frac{S(d-1|\ell',X)-S(d-1|\ell,X)}{\psi_{\ell}(1, X)-\psi_{\ell'}(1, X)} 
\end{equation}
Similarly, from equation (\ref{eqn_sm}), we obtain:
\begin{equation}\label{eqn_sexp}
\E[\nu^2 \check{S}(d-1|\nu,X)|X]  =  \frac{g(d|\ell',X)-g(d|\ell,X)}{\psi(d, X)(\psi_{\ell}(1, X)-\psi_{\ell'}(1, X))} 
\end{equation}
Plugging terms from equations (\ref{eqn_fexp}) and (\ref{eqn_sexp}) in the expression for $g(d|\ell,X)$ given by equation (\ref{eqn_dens}), implies that for $d>1$:
\begin{equation*}
 \psi(d, X) =   \frac{g(d|\ell',X)\psi_{\ell}(1, X)-g(d|\ell,X)\psi_{\ell'}(1, X)}{S(d-1|\ell',X)-S(d-1|\ell,X)}
\end{equation*}
Here, the denominator is not equal to zero as we assumed $\psi_{\ell'}(1, X)\neq \psi_{\ell}(1, X)$. 

Now note that for $d=1$, $g(1|\ell,X) = \psi_{\ell}(1, X) \E[\nu | X]  $.
So plugging in $\psi_{\ell}(1, X)=g(1|\ell,X)/\E[\nu| X]$ in the expression for $\psi(d, X)$ above, we can write:
\begin{equation*}
\psi(d, X)\E(\nu|X) =   \frac{g(d|\ell',X)g(1|\ell,X)-g(d|\ell,X)g(1|\ell',X)}{S(d-1|\ell',X)-S(d-1|\ell,X)}  
\end{equation*}
Since all the quantities on the right-hand side of the above equation can be derived from the conditional exit rates, it follows that \(\{\psi_{\ell}(1,X), \psi_{\ell'}(1,X), \{\psi(d, X)\}_{d=2}^{\bar{D}}\}\) are identified up to a multiplicative constant from \(\{h(d|\ell,X), h(d|\ell',X)\}_{d=1}^{\bar{D}}\). Identification of conditional moments follows from Lemma \ref{result_mom_id} by noting that $\E(\nu^k| L, X) = \E(\nu^k| L)$ under conditional independence.
\end{proof}

\subsection{Proof of Corollary \ref{result_uncond}}\label{proof_uncond}
\begin{proof}
Given that \( h(d|\nu, L) = \psi_L(d)\nu \), we can write:
\[ S(d|\nu, L) = \prod_{s=1}^d \left[1-\psi_{L}(s)\nu  \right], \quad g(d|\nu, L) = \psi_{L}(d) \nu S(d-1|\nu, L) \]
Moreover, \( L \perp \nu \) implies that for any \(L=\ell\), \( q(d|\ell) = \E[q(d|\nu, \ell) \mid \ell] = \E[q(d|\nu, \ell)] \) for \( q = S, g \). 
Since we assumed that \( \psi_L(d) = \psi(d) \) for \( d > 1 \), let us define as before: \(\check{S}(d|\nu) = \prod_{s=2}^{d} \left[1-\psi(s) \nu\right] \) for \(d>1\).
We then obtain the following expressions:
\[
\begin{aligned}
S(d-1|L) = & \E[\check{S}(d-1|\nu)] - \psi_{L}(1)\E[\nu \check{S}(d-1|\nu)] \\
g(d|L) = \psi(d) & \left[\E[\nu \check{S}(d-1|\nu)] - \psi_{L}(1)\E[\nu^2 \check{S}(d-1|\nu)]\right]
\end{aligned}
\]
Note that the above equations take a similar form to equations (\ref{eqn_surv}) and (\ref{eqn_dens}), with the only difference being that the expressions are free of \(X\). Thus, by considering \( L = \ell, \ell' \) and replicating the steps following equation (\ref{eqn_dens}) in the proof of Theorem \ref{result_main_theorem}, we can derive that \(\psi_L(1) = g(1|L)/\E(\nu)\) for \(L \in \{\ell, \ell'\}\), and
\[
\psi(d)\E(\nu) = \frac{g(d|\ell')g(1|\ell) - g(d|\ell)g(1|\ell')}{S(d-1|\ell') - S(d-1|\ell)} \text{ for } d > 1.
\]
This concludes the proof for identifying structural hazards up to a scale. For the identification of the unconditional moments of the type distribution, we can follow the steps from Lemma \ref{result_mom_id}. In particular, \( \bm{g_{L}} = [g(d|L)]_{d=1}^{\bar{D}} \) can be expressed using as a system of \( \bar{D} \) equations, in terms of the structural hazards \( \bm{\psi_{L}} = [\psi_{L}(d)]_{d=1}^{\bar{D}} \) and the moments of the type distribution \( \bm{\mu} = [\E(\nu^k)]_{k=1}^{\bar{D}} \). Specifically, \( \bm{g_{L}} = C(\bm{\psi_{L}}) \bm{\mu} \), where \( C(\bm{\psi_{L}}) \) is an invertible matrix, as outlined in Lemma \ref{result_mom_id}. Hence, we can plug in the identified structural hazards to derive the moments 
\end{proof}

\subsection{Proof of Proposition \ref{result_prop}}\label{proof_prop}

\begin{proof}
Define \(g^w(.)\) as the weighted distribution and \(S^w(.)\) as the weighted survival function. Specifically, \(q^w(d|L) = \E[q(d|\nu, L,X)/p_L(X)|L]\) for \(q=g, S\). According to the definition in the main text, \(h^{w}(d|L) = g^w(d|L)/S^w(d-1|L)\). Therefore,  given \( h^{w}(d|L) \) for \( d=1,2,...,\bar{D} \), we can compute \( g^w(d|L) \) and \( S^w(d|L) \) for the same durations as well.

We can show  that for any $L=\ell$, \(q^w(d|\ell) = \E[q(d|\nu, \ell, X)]/\pi_{\ell}\) for \(q=g, S\), under conditional independence, by following the steps outlined below:
\begin{align*}
 \E\left[\frac{q(d|\nu, \ell,X)}{p_{\ell}(X)} \middle|\ell\right] & =  \E\left[\E\left[\frac{q(d|\nu, \ell,X)}{p_{\ell}(X)}\middle|\ell, X\right]  \middle|\ell\right] & (i)  \\
& = \E\left[\E\left[\frac{q(d|\nu, \ell,X)}{p_{\ell}(X)}\middle| X\right]  \middle|\ell\right] & (ii) \\
&= \int \frac{\E[q(d|\nu, \ell,x)|x]}{p_{\ell}(x)}\cdot f_{X|\ell}(x) \ \partial x & (iii) \\
& = \frac{1}{\pi_{\ell}} \cdot \E[q(d|\nu, \ell,x)] & (iv)
\end{align*}
Here, \(\pi_{\ell} = \Pr(L=\ell)\) and \(f_{X|L}(.)\) represents the distribution of \(X\) given \(L\). Step (i) utilizes the law of iterated expectations, step (ii) follows from \(L \perp \nu | X\), step (iii) uses the definition of conditional expectation, and step (iv) follows from the Bayes' rule, which implies that \(f_{X|\ell}(x) = p_{\ell}(x) f_X(x)/\pi_{\ell}\).

As an aside, note that this implies that:
\begin{align*}
h^{w}(d|\ell) &= \frac{\E[g(d|\nu, \ell, X)]}{\E[S(d-1|\nu, \ell, X)]} = \frac{\E[\Pr(D_{\ell}=d |\nu, X)]}{\E[\Pr(D_{\ell} \geq d |\nu, X)]} = \frac{\Pr(D_{\ell}=d)}{\Pr(D_{\ell} \geq d )} =  \Pr(D_{\ell}=d|D_{\ell} \geq d)  
\end{align*}

Having shown that \(\pi_{\ell} q^w(d|{\ell}) = \E[q(d|\nu, \ell, X)]\) for \(q=g, S\), if we now take the expectation of the expressions implied by equations (\ref{eqn_isurv}) and (\ref{eqn_idens}) and substitute \(\psi_L(d, X) = \psi_L(d) \phi(X)\) with \(\psi_L(d) = \psi(d)\) for \(d>1\), we obtain:
\[
\begin{aligned}
\pi_L S^w(d-1|L) &=  \E[\check{S}(d-1|\nu, X)] - \psi_{L}(1)\E[\theta(X, \nu) \check{S}(d-1|\nu, X)] \\
\pi_L g^w(d|L) &= \psi(d)  \left[\E[\theta(X, \nu) \check{S}(d-1|\nu, X)] - \psi_{L}(1)\E[\theta(X, \nu)^2 \check{S}(d-1|\nu, X)]\right]
\end{aligned}
\]
where $\theta(X, \nu) = \phi(X) \nu$.

Note that the above equations take a similar form to equations (\ref{eqn_surv}) and (\ref{eqn_dens}). Thus, by considering \( L = \ell, \ell' \) and replicating the steps following equation (\ref{eqn_dens}) in the proof of Theorem \ref{result_main_theorem}, we can derive that \(\psi_L(1) = \pi_L g^w(1|L)/\E[\theta(X, \nu)]\) for \(L \in \{\ell, \ell'\}\), and
\[
\psi(d) \E[\theta(X, \nu)] = \frac{g^w(d|\ell')g^w(1|\ell) - g^w(d|\ell)g^w(1|\ell')}{S^w(d-1|\ell')/\pi_{\ell} - S^w(d-1|\ell)/\pi_{\ell'}} \quad \text{for } d > 1
\]
To see that the moments are identified as well, first note that by plugging in $\psi_L(d, X) = \psi_L(d) \phi(X) $, we can rewrite equation (\ref{eqn_compact_moments}) as:

\begin{equation}\label{eqn_compact_mph}
g(d|\nu, L,X) = \psi_{L}(d) \sum_{k=1}^{\bar{D}} c_k(d,\bm{\psi_{L}}) \phi(X)^k \nu^k
\end{equation}
where $\bm{\psi_{L}} = \{\psi_{L}(d)\}_{d=1}^{\bar{D}}$ and $c_k(d,\bm{\psi_{L}}) $ is defined as in Lemma \ref{result_mom_id}.
Since we established that $\pi_{\ell} g^w(d|\ell) = \E[g(d|\nu, \ell, X)]$, we can write:
$$
\pi_L g^w(d|L) = \psi_{L}(d) \sum_{k=1}^{\bar{D}} c_k(d,\bm{\psi_{L}}) \E(\phi(X)^k \nu^k)  $$
The rest of the proof follows as in the proof for Lemma \ref{result_mom_id}. Denote $\bm{g^{w}_{L}} = \{\pi_L g^{w}(d|L)\}_{d=1}^{\bar{D}}$ and $\bm{\mu^{w}} = \{\E[\theta(X, \nu)^k]\}_{k=1}^{\bar{D}}$.  Then we can write $\bm{g^{w}_{L}} = C(\bm{\psi_{L}}) \bm{\mu^{w}}$ where $C(\bm{\psi_{L}})$ is an invertible matrix, as outlined in Lemma \ref{result_mom_id}. Hence, we can plug in the identified structural hazards to find $\bm{\mu^{w}}$. 
\end{proof}

\subsection{Proof of Proposition \ref{result_prop_main}}\label{proof_prop_main}
\begin{proof}
First note that, by the definition of $\tilde{D}$ and $C$, we can write $\tilde{h}^w(d| L)$ as follows:
$$ \tilde{h}^w(d| L) = \E\left[\frac{\Pr(D=d, D^c \geq d|\nu, L, X)}{p_{L}(X)} \middle | L\right] \ \big/ \ \E\left[\frac{\Pr(D \geq d, D^c \geq d |\nu, L, X)}{p_{L}(X)} \middle | L \right] $$
Now note that, 
\begin{align*}
\E\left[\frac{\Pr(D \in \Delta, D^c \geq d|\nu, L, X)}{p_L(X)} \middle | L \right] &=  \E\left[ \E\left[\frac{\Pr(D \in \Delta, D^c \geq d|\nu, L, X)}{p_L(X)}\middle | L, X \right] \middle | L \right] & (i) \\
& = \E\left[ \E\left[\frac{\Pr(D \in \Delta, D^c \geq d|\nu, L, X)}{p_L(X)}\middle | X \right] \middle | L \right] & (ii) \\
& = \int \frac{\E[\Pr(D_L \in \Delta, D^c \geq d|\nu, x) | x ]}{p_L(x)} \cdot f_{X|L}(x) & (iii) \\
& = \E[\Pr(D_L \in \Delta, D^c \geq d|\nu, X)] & (iv) \\
& = \Pr(D_L \in \Delta, D^c \geq d) 
\end{align*}
Here, \(\pi_{\ell} = \Pr(L=\ell)\) and \(f_{X|L}(.)\) represents the distribution of \(X\) given \(L\). Step (i) utilizes the law of iterated expectations, step (ii) follows from \(L \perp \nu | X\) and \(D^c \perp \nu |X\), step (iii) uses the definition of conditional expectation and swaps $D$ with potential duration $D_L$, and step (iv) follows from the Bayes' rule, which implies that \(f_{X|L}(x) = p_{L}(x) f_X(x)/\pi_{L}\).

Note that this implies that:
$$ \tilde{h}^w(d| L) = \frac{\Pr(D_L = d, D^c \geq d) }{\Pr(D_L \geq d, D^c \geq d)} = \frac{\Pr(D_L = d| D^c \geq d) }{\Pr(D_L \geq d| D^c \geq d)} = \Pr(D_L = d| D_L \geq d, D^c \geq d) $$

As before, if we are given the hazard rate, we can also calculate the corresponding density and survival rate. In what follows, the identification of parameters is shown using the density and survival rates \(\Pr(D_L=d|D^c \geq d)\) and \(\Pr(D_L \geq d|D^c \geq d)\). In particular, note that \(\Pr(D_L = d|D^c \geq d) = \E[g(d|\nu, L, X)|D^c \geq d]\) and \(\Pr(D_L \geq d|D^c \geq d) = \E[S(d|\nu, L, X)|D^c \geq d]\). Therefore, taking the conditional expectation of the expressions implied by equations (\ref{eqn_isurv}) and (\ref{eqn_idens}) and substituting \(\psi_L(d, X) = \psi_L(d) \phi(X)\) with \(\psi_L(d) = \psi(d)\) for \(d>1\), we get:
\[
\begin{aligned}
 & \Pr(D_L \geq  d  |D^c \geq d) = \E_d[\check{S}(d-1|\nu, X)] - \psi_{L}(1)\E_d[\theta(X, \nu) \check{S}(d-1|\nu, X)] \\
 & \Pr(D_L = d  |D^c \geq d) = \psi(d) \left[\E_d[\theta(X, \nu) \check{S}(d-1|\nu, X)] - \psi_{L}(1)\E_d[\theta(X, \nu)^2 \check{S}(d-1|\nu, X)]\right]
\end{aligned}
\]

Here, $\theta(X, \nu) = \phi(X) \nu$, and I use $\E_d$ to denote expectations conditional on $D^c \geq d$ for brevity; specifically, \(\E_d(.) = \E(.|D^c \geq d)\). Except for the expectations being conditional on \(D^c \geq d\), the expressions on the right-hand side in the equations are identical to those in Proposition \ref{result_prop}. Therefore, one can follow the same steps as in Proposition \ref{result_prop} to show the identification of structural hazards. However, it should be noted that the moments are not yet identified. This is because taking the expectation of equation (\ref{eqn_compact_mph}) conditional on \(D^c \geq d\) results in different moments entering different period densities. For instance, we will have $\E[\theta(X, \nu)|D^c \geq 1]$ in the first-period density, but the second-period density will have $\E[\theta(X, \nu)|D^c \geq 2]$. However, if we further assume that $D^c \perp X$ and $D^c \perp \nu$, then $\E[\theta(X, \nu)^k|D^c \geq d]=\E[\theta(X, \nu)^k]$, and identification of moments follows from Proposition \ref{result_prop}.
\end{proof}

\subsection{Proof for $\E[m_i(\ell,d; \Theta)]=0$}\label{proof_moms_zero}
For brevity, I omit the $i$ subscript from the terms in the following exposition. Additionally, let us denote the indicators for exit and survival used in the individual moments as follows: \(I^E(d) = \I\{\tilde{D}=d, C=0\}\) and \(I^S(d) = \I\{\tilde{D} \geq d\}\). In this case, we can rewrite the expression for individual moments as follows:
\[
  m(\ell,d;\Theta) = \I\{L=\ell\} \left[  \frac{I^E(d) }{p_\ell(X)}-\tilde{h}^w(d|\ell;\Theta) \cdot \frac{I^S(d)}{p_\ell(X)} \right]
\]

\begin{proof} By taking the expectation of the \(m(\ell,d;\Theta)\) expression above and applying the law of iterated expectations, we get:
\[
 \E[ m(\ell,d;\Theta) ]= \pi_\ell \left[ \E\left[ \frac{I^E(d) }{p_\ell(X)} \middle | L = \ell \right]-\tilde{h}^w(d|\ell;\Theta) \cdot \E\left[ \frac{I^S(d)}{p_\ell(X)} \middle | L = \ell \right] \right]
\]
Applying the law of iterated expectations again, we can write: 
\[
 \frac{\E[ m(\ell,d;\Theta) ]}{\pi_\ell} = \E \left[\frac{\E \left[I^E(d)\middle |L = \ell, X \right]}{p_\ell(X)}   \middle |L = \ell \right]-\tilde{h}^w(d|\ell;\Theta) \cdot \E \left[\frac{\E \left[I^S(d)\middle |L = \ell, X \right]}{p_\ell(X)}   \middle |L = \ell \right] 
\]
Finally, by substituting \(\E[I^E(d)|L, X] = \Pr(\tilde{D}=d, C=0|\nu, L, X)\) and \(\E[I^S(d)|L, X] = \Pr(\tilde{D} \geq d|\nu, L, X)\) into the above expression, and applying the definition of \(\tilde{h}^w(d|\ell;\Theta)\), it becomes clear that the right-hand side equals zero.
\end{proof}

\clearpage
\setcounter{table}{0}
\setcounter{figure}{0}
\section{Search Model Calibration}\label{app_search_model}

I calibrate the model under standard values for model parameters. To maintain consistency with the econometric model, each period is assumed to be 12 weeks long. Corresponding to a 5 percent annual interest rate, the discount factor $\beta$ is set equal to 0.985. I normalize the wage to 1 and set the replacement rate for unemployment benefits at 0.5. In addition, I assume individuals receive an annuity payment of 0.1 times their wages in each period, regardless of their employment status. This can be interpreted as the income of a secondary earner. Utility from consumption is given by the constant relative risk aversion (CRRA) utility function, $u(c) = c^{1-\sigma}/(1-\sigma)$ with $\sigma=1.75$. I follow \cite{DellaVignaEtAl2017} and \cite{MarinescuSkandalis2021} in assuming that costs of job search are given by $ c(s) = \theta s^{1+\rho}/(1+\rho)$. I set $\rho=1$ and $\theta=50$.\footnote{Different parameters for the cost function do not change qualitative predictions of my exercise but do lead to changes in the scale of the search effort.} Table \ref{tab_calib_details} summarizes the calibration parameters and Figure \ref{fig_calib_fit} displays the fit of the calibrated model. 

\begin{table}[h]
\begin{threeparttable}
\caption{Calibration Parameters for the Search Model}\label{tab_calib_details}
\begin{tabularx}{\textwidth}{p{0.65\textwidth}c}
\toprule
Parameter & Value \\
\midrule
Length of each period & 12 Weeks \\
Discount factor $\beta$ & 0.985 \\
Relative risk aversion $\sigma$ & 1.75 \\
Per period wages $w$ & 1 \\
Annuity Payments & 0.1 \\
Unemployment benefits & 0.5 \\
Benefit exhaustion $D_B$ & 3 \\
Search cost parameter $\rho$ & 1 \\
Search cost parameter $\theta$ & 50 \\
First period arrival rate $\delta(1)$ & 1 \\
\bottomrule
\end{tabularx}
\begin{tablenotes}
\item \textit{Note:} The table presents the parameters used for calibrating the search model in Section \ref{sec_search_model}.
\end{tablenotes}
\end{threeparttable}
\end{table}

\begin{figure}[t]\caption{Search Model Calibration: Fit}\label{fig_calib_fit}
\centering
\begin{subfigure}{.49\linewidth}
\raggedleft
\includegraphics{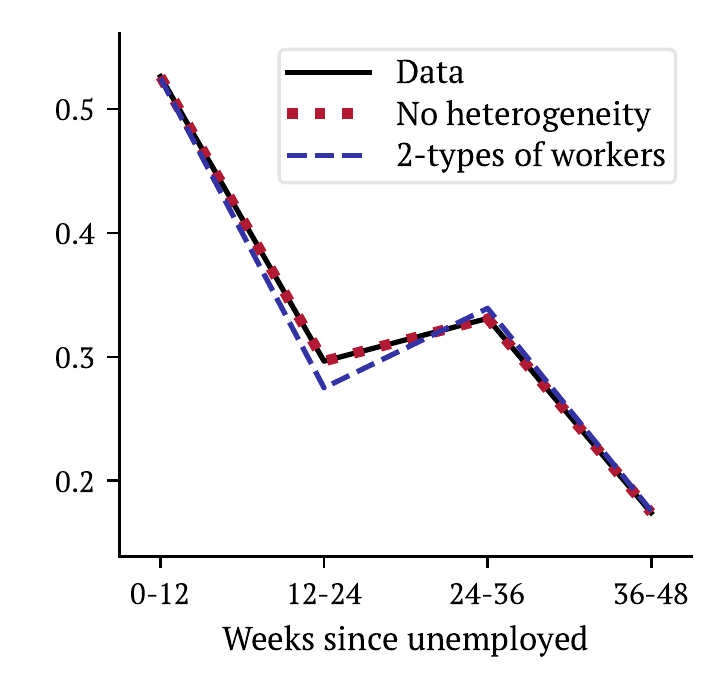}
\subcaption{Observed Hazard}
\end{subfigure} \hfill
\begin{subfigure}{.49\linewidth}
\includegraphics{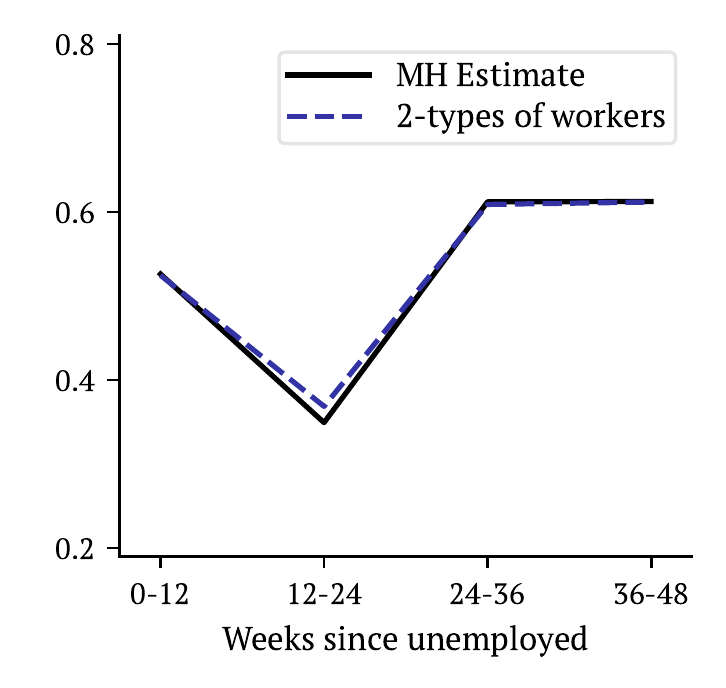}
\subcaption{Structural Hazard}
\end{subfigure} 
\vspace{-0.75em}
\floatfoot{\textit{Note:} The figure displays the fit of the search model for the two calibration exercises described in the text. Panel A shows the observed exit rate in the data (solid black line) alongside the corresponding fitted values obtained from calibrating the search model without heterogeneity (dotted red line) and with two types of workers (dashed blue line). Panel B displays the estimated structural hazard from the Mixed Hazard (MH) model (solid black line) and the fitted structural hazard from calibrating the search model with two types of workers (dashed blue line).}
\end{figure}


\clearpage
\setcounter{table}{0}
\setcounter{figure}{0}
\section{Data}\label{app_data}
This section provides more details about data construction and sample selection. It also elaborates on the estimation of propensity scores and checks the balance of the weighted sample on additional measures. Supplementary data descriptives are also presented here.

\subsection{Data Construction and Sample Selection}\label{app_data_cnstr}

The Displaced Worker Supplement (DWS) was introduced in 1984, but the variable on the length of notice was not included in the first two samples. Furthermore, the definition of displaced workers has changed over time.\footnote{The recall window was 5 years instead of 3 before 1994.} Before 1998, self-employed individuals or those who expected to be recalled to their lost job within six months were also included in the survey. However, the information on whether a worker expected to be recalled is only available for the years 1994 and 1996. In addition, the data on the length of time individuals took to find their next job is miscoded and largely missing for the year 1994. For these reasons, my analysis begins from 1996. Moreover, to maintain consistency across years, I exclude self-employed individuals or those who expected to be recalled from the 1996 sample.

\begin{table}[p]
\begin{threeparttable}
\caption{Sample Selection}\label{tab_sample}
\begin{tabularx}{\textwidth}{p{9cm}Y}
\toprule
Sample condition & Observations  \\ \midrule
 DWS 1996-2020, 21-64 year old respondents  & 44707 \\
 No recall expectation                      & 44537 \\
 Lost job was not self-employment           & 43471 \\
 Non-missing values for variables used      & 29443 \\
 Worked full-time at lost job               & 25293 \\
 Employed for at least 6 months at lost job & 22991 \\
 Had health insurance at lost job           & 14825 \\
 Held less than 3 jobs since lost job       & 13784 \\
 Got a notice before job loss               &  5898 \\
 Got a notice of \ensuremath{>}1 month                   &  4175 \\

\bottomrule
\end{tabularx}
\begin{tablenotes}
\item  \textit{Note:} The table shows the number of observations remaining at each step of sample selection. 
\end{tablenotes}
\end{threeparttable}
\end{table}

\begin{table}[p]
\begin{threeparttable}
\centering
\caption{Comparison of the analytical sample to all individuals in the Displaced Worker Supplement (DWS) and the Current Population Survey (CPS)}\label{tab_cps_comparison}
\begin{tabularx}{\textwidth}{p{5cm}YYY}
\toprule
& Sample & DWS & CPS \\
& (1) & (2) & (3) \\
\midrule
 Age            & 43.04 & 40.61 & 42.22  \\
 Female         & 0.45  & 0.44  & 0.52   \\
 Married        & 0.61  & 0.54  & 0.60   \\
 Black          & 0.10  & 0.11  & 0.10   \\
 High School    & 0.30  & 0.42  & 0.41   \\
 Some College   & 0.30  & 0.32  & 0.29   \\
 College Degree & 0.40  & 0.26  & 0.30   \\
 Employed       & 0.80  & 0.67  & 0.75   \\
 Unemployed     & 0.18  & 0.21  & 0.04   \\
 NILF           & 0.02  & 0.12  & 0.21   \\
\midrule
 Observations   & 4175  & 44707 & 964225 \\
 
\bottomrule
\end{tabularx}
\begin{tablenotes}
\item  \textit{Note:}  All samples are restricted to individuals between the ages of \agecutoff and pertain to years 1996-2020. Column (1) includes individuals from the DWS who worked full-time for at least six months and were provided health insurance at their lost job, did not expect to be recalled, and received a layoff notice of 1-2 months or greater than 2 months. Columns (2) and (3) include all individuals in the DWS and the monthly CPS, respectively.
\end{tablenotes}
\end{threeparttable}
\end{table}

The final sample only includes individuals with non-missing information on all the variables utilized in the analyses. Additionally, to capture individuals who had lost a job representing stable, non-temporary employment, the sample is limited to those who worked full-time for at least six months at their previous job and had health insurance provided through that employment. Finally, to minimize retrospective bias, I exclude individuals who report switching more than two jobs since losing their previous job. The sample selection procedure is summarized in Table \ref{tab_sample}.

The duration of unemployment for individuals who have secured a job by the time of the survey is given by the \textit{dwwksun} variable, which measures the number of weeks the person was unemployed between leaving or losing one job and starting another. For those who report not holding another job since their lost job, censored duration is obtained using the \textit{durunemp} variable from the CPS. Since 2012, tenure at the lost job was top-coded at 24 years. To maintain consistency across samples, I also implement a top code of 24 years for all years prior to 2012. Earnings are reported in 1999 dollars.  Table \ref{tab_cps_comparison} presents the descriptive statistics for my analytical sample relative to all individuals aged \agecutoff in the DWS and the CPS over the sample period. 


\subsection{Propensity Score Weighting}\label{app_psw}

\begin{figure}[t]\caption{Assessing Overlap of Propensity Score Distributions}\label{fig_ps_bal}
\includegraphics{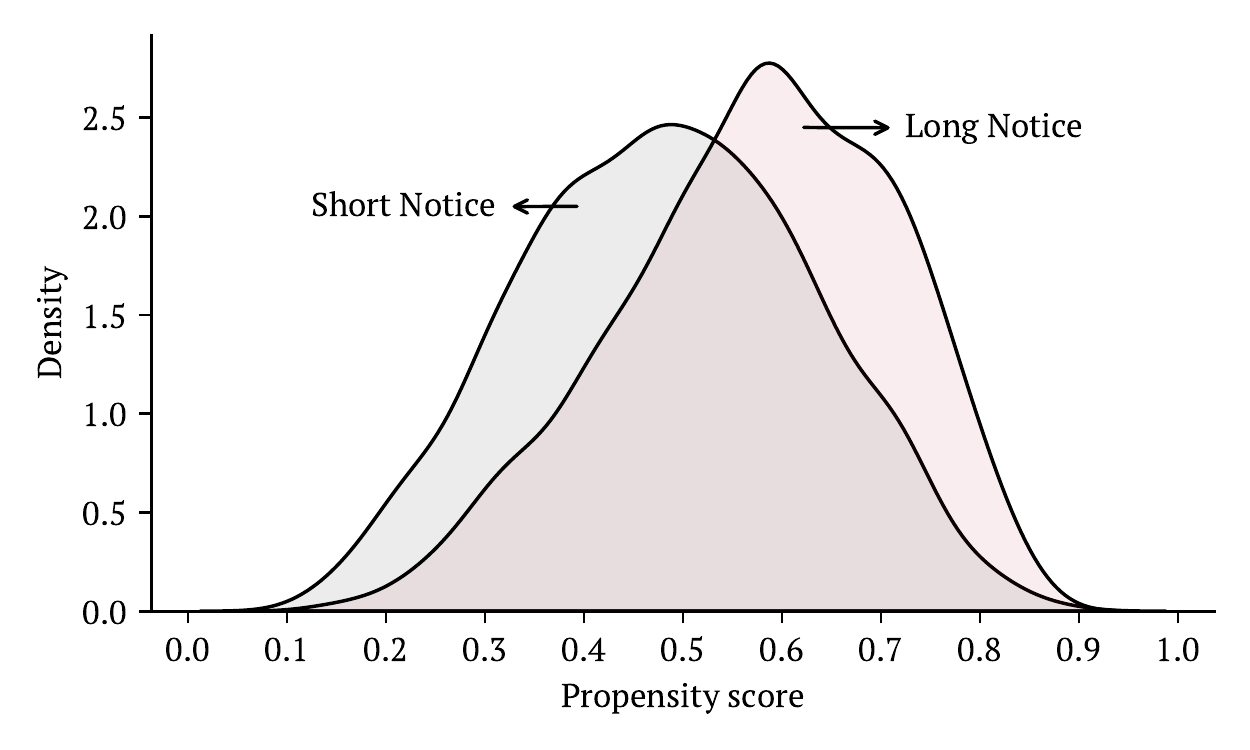}
\floatfoot{\textit{Note:} The figure presents the density of estimated propensity scores for individuals with short and long notice separately.}
\end{figure}

\begin{figure}[t]\caption{Length of Notice over Time}\label{fig_dyear_bal}
\centering
\includegraphics{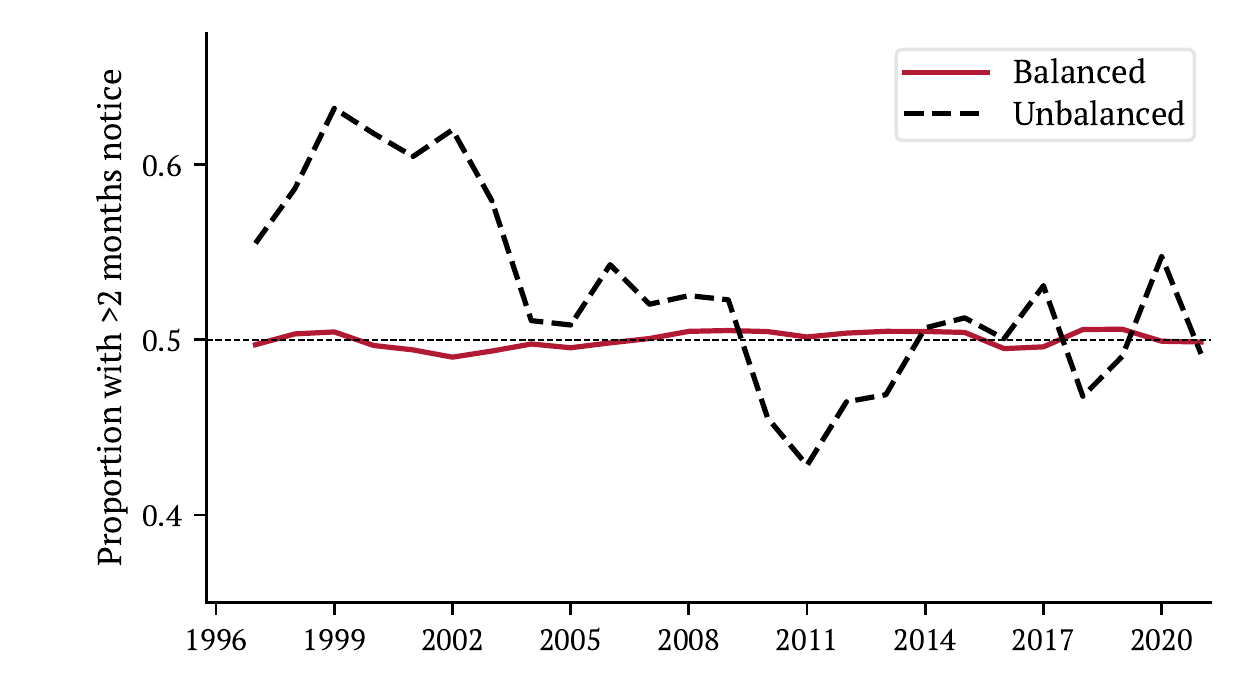}
\floatfoot{\textit{Note:} The figure plots a 3-year moving average of the proportion of individuals who received a notice of more than
2 months amongst all individuals in the sample who were displaced in a given year.}
\end{figure}

\begin{figure}[p]\caption{Industry and Occupation of the Lost Job}\label{fig_occ_ind_bal}
\begin{subfigure}{0.51\textwidth}
\includegraphics[trim = 0.5in 0in 0in 0in]{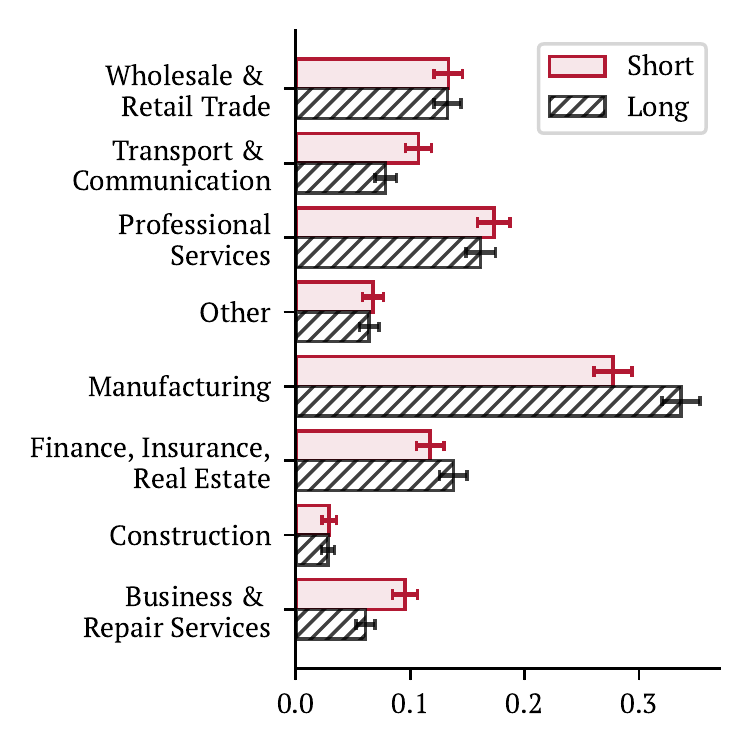}
\caption{Industry, Unbalanced}
\end{subfigure} 
\begin{subfigure}{0.475\textwidth}
\includegraphics[trim = 0.5in 0in 0in 0in]{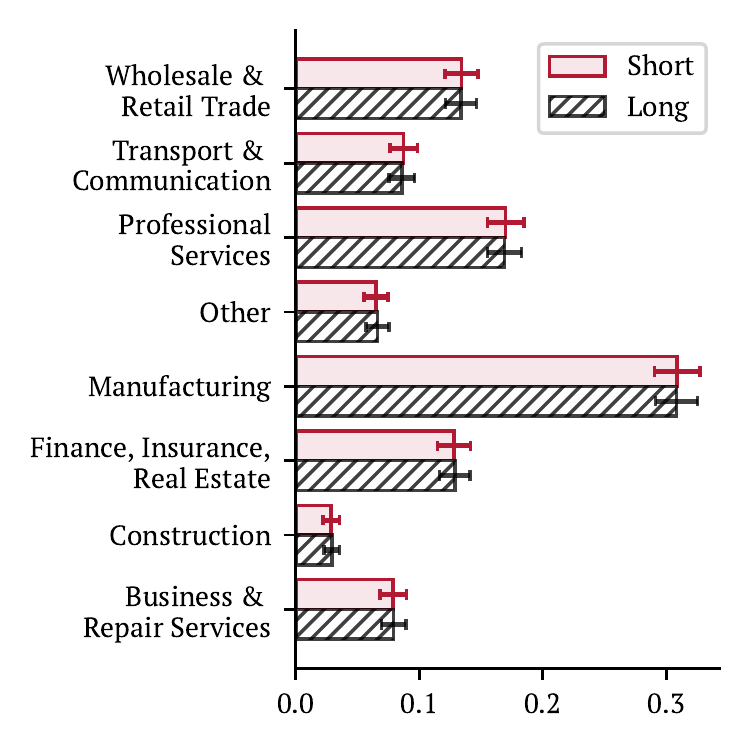}
\caption{Industry, Balanced}
\end{subfigure}
\begin{subfigure}{0.51\textwidth}
\includegraphics[trim = 0.5in 0in 0in 0in]{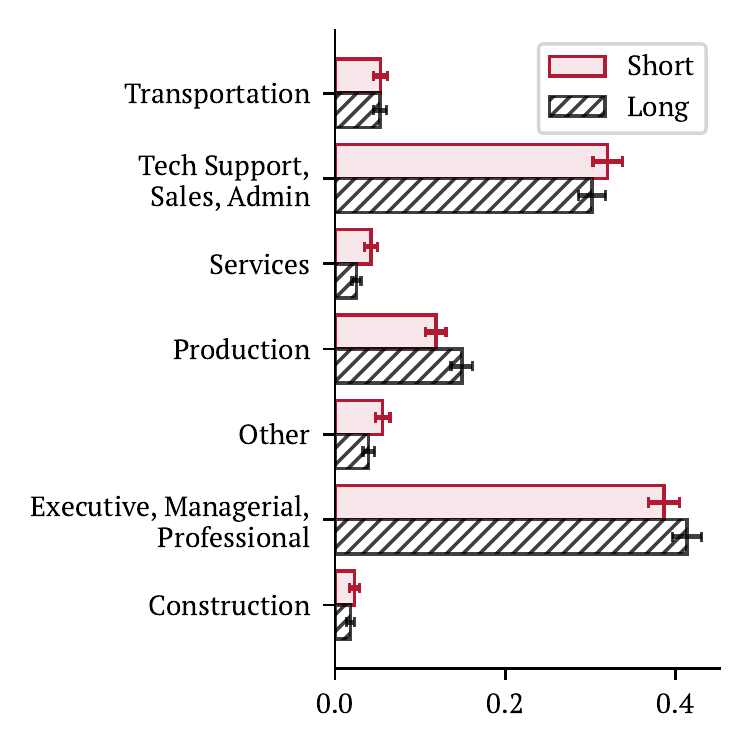}
\caption{Occupation, Unbalanced}
\end{subfigure} 
\begin{subfigure}{0.475\textwidth}
\includegraphics[trim = 0.5in 0in 0in 0in]{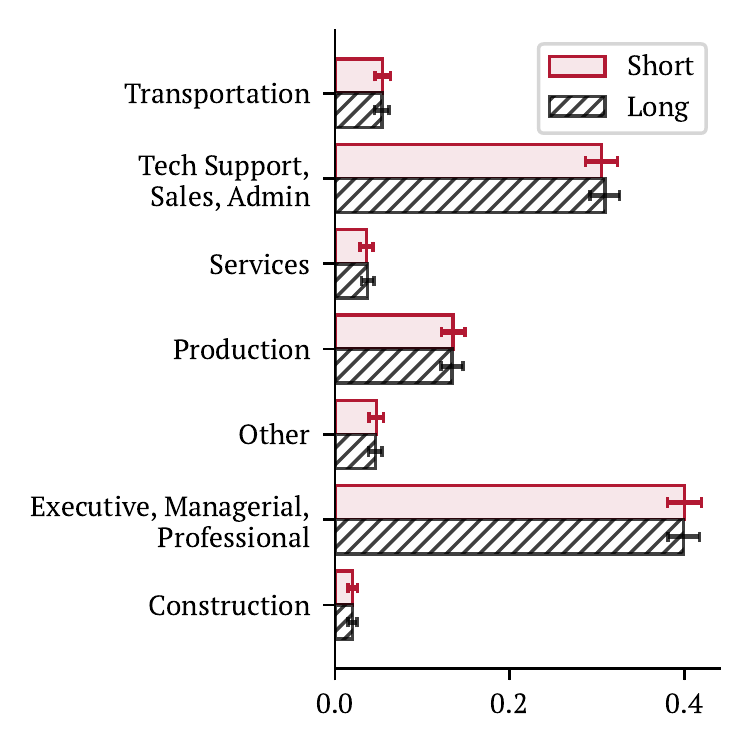}
\caption{Occupation, Balanced}
\end{subfigure}
\floatfoot{\textit{Note:} The figure presents the proportions of individuals whose displaced jobs were in specific industries (panels A and B) and occupations (panels C and D) among long-notice and short-notice workers in both the unbalanced and balanced samples. The error bars represent the 90\% confidence intervals.}
\end{figure}

To ensure individuals with long and short notice are comparable, I reweight the sample using inverse propensity score weighting. The weight for each individual is calculated as the inverse of the likelihood of receiving the reported notice length. To estimate the propensity scores, I utilize a logistic regression where the odds of receiving a longer notice are modeled as a function of several variables. These variables include age, gender, marital status, race (specifically, an indicator for Black), education level, reason for layoff, union membership, residence in a metropolitan area, tenure, and earnings at the lost job. Additionally, fixed effects for occupation and industry of the lost job, state of residence, and year of displacement are included in the model. The density of estimated propensity scores for short and long-notice individuals is displayed in Figure \ref{fig_ps_bal}. The figure indicates a significant overlap between the two distributions, with all estimated propensity scores falling in the 0.1 to 0.9 range, thereby rendering further data trimming unnecessary.

Table 1 in the main text provides evidence that the reweighting achieves balance across certain observable variables. Figure \ref{fig_dyear_bal} demonstrates that reweighting leads to balance with respect to the year of displacement. In addition, Figure \ref{fig_occ_ind_bal} presents occupation and industry distributions for short and long-notice workers in both the balanced and unbalanced samples. Notably, the weighted sample exhibits more similarity in the industrial and occupational composition of short and long-notice workers.

\subsection{Additional Descriptives}

This section provides additional descriptive statistics. Table \ref{tab_reg_wages} presents the relationship between longer notice and earnings at the subsequent job. The table indicates that workers with longer notice tend to have higher earnings in their subsequent jobs. However, we cannot interpret this as a direct impact of longer notice because extended periods of unemployment can have a negative impact on wages \citep{SchmiederEtAl2016}, and as shown in this paper, a longer notice leads to shorter unemployment spells. Table \ref{tab_uiben_recd} describes the incidence of UI take-up in the sample, and Figure \ref{fig_uiex_break} describes the timing of benefit exhaustion amongst UI takers. Figure \ref{fig_altbins} presents the data with unemployment duration binned in 4 and 9-week intervals. 

\begin{table}[t]
\begin{threeparttable}
\caption{Earnings at the Subsequent Job}\label{tab_reg_wages}
\begin{tabularx}{\textwidth}{p{0.225\textwidth}YYYY}
\toprule
& \multicolumn{4}{c}{Weekly Log Earnings} \\
& (1) & (2) & (3) & (4) \\
\midrule
> 2 month notice & 0.091** & 0.066** & 0.093** & 0.063*\\
  & (0.037) & (0.032) & (0.038) & (0.032)\\ \addlinespace[2ex]
Controls   &  No & Yes  & No & Yes \\
Weights   & No  & No   & Yes & Yes \\
\midrule
Observations & 2657 & 2657 & 2657 & 2657\\
\bottomrule
\end{tabularx}
\begin{tablenotes}
\item \textit{Note:} The table shows results from linear regressions of log weekly wages at the subsequent job on an indicator for receiving a notice of more than 2 months. The sample used is similar to the main analytical sample, but it excludes individuals who had not yet found employment at the time of the survey, had multiple jobs between their previous and current jobs, or had incomplete earnings information for other reasons. Robust standard errors are reported in the parenthesis.
\end{tablenotes}
\end{threeparttable}
\end{table}

\begin{table}[p]\caption{Unemployment Insurance Take-up}\label{tab_uiben_recd}
\begin{threeparttable}
\begin{tabularx}{\textwidth}{XYY}
\toprule
Duration & Observations & Received UI Benefits \\
\midrule
 0 weeks    & 820  & 0.06 \\
 0-4 weeks  & 959  & 0.30 \\
 4-8 weeks  & 457  & 0.61 \\
 8-12 weeks & 363  & 0.70 \\
 \ensuremath{>} 12 weeks & 1562 & 0.82 \\
 
\bottomrule
\end{tabularx}
\begin{tablenotes}
\item \textit{Notes}: This table reports the percentage of individuals in the baseline sample who reported receiving UI benefits by the duration of unemployment.
\end{tablenotes}
\end{threeparttable}
\end{table}

\begin{figure}[p]\caption{Timing of Benefit Exhaustion}\label{fig_uiex_break}
\includegraphics{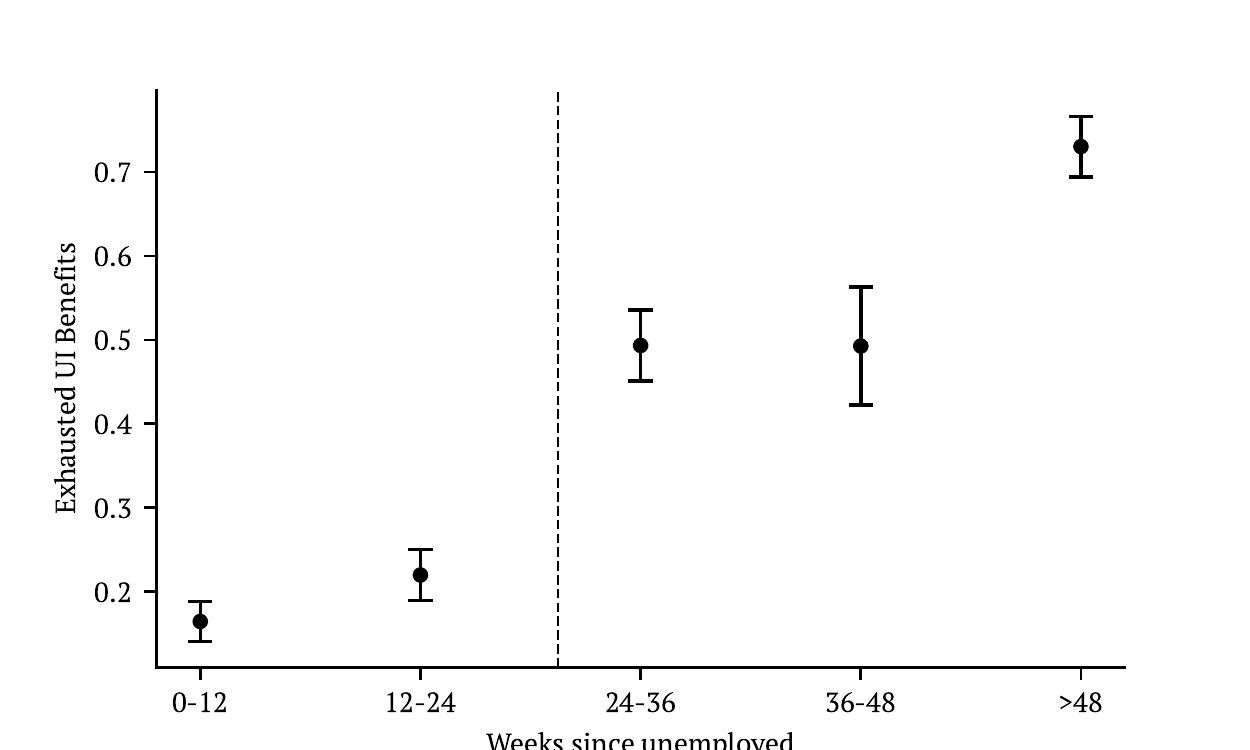}
\floatfoot{\textit{Note:} The figure presents the proportion of individuals who report having exhausted their UI benefits by the duration of unemployment. The sample is restricted to individuals in the main analytical sample who reported receiving UI benefits, and duration is binned in 12-week intervals.} 
\end{figure}


\begin{figure}[p]\caption{Survival and Exit Rates with Alternative Bins}\label{fig_altbins}
\begin{subfigure}{.475\textwidth}
\centering
\includegraphics{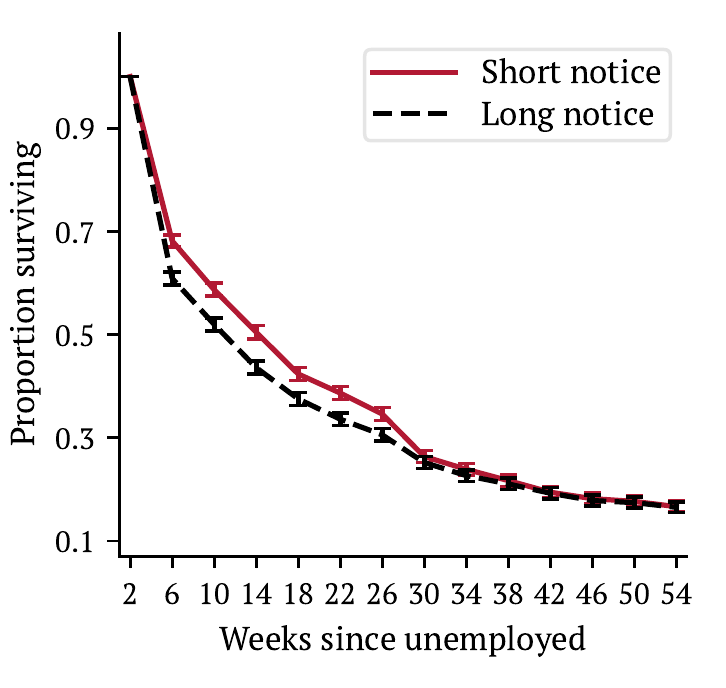}
\subcaption{Survival Rate}
\end{subfigure}
\begin{subfigure}{.475\textwidth}
\centering
\includegraphics{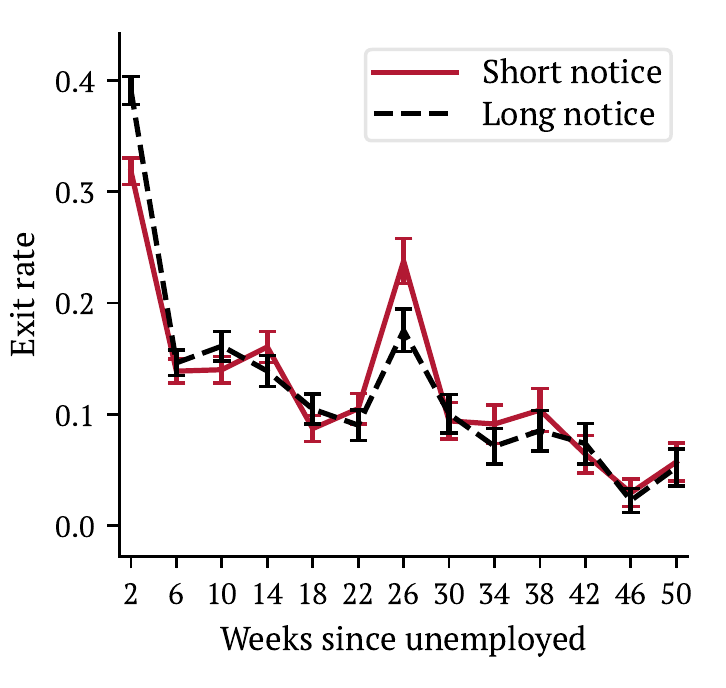}
\subcaption{Exit Rate}
\end{subfigure}
\begin{subfigure}{.475\textwidth}
\includegraphics{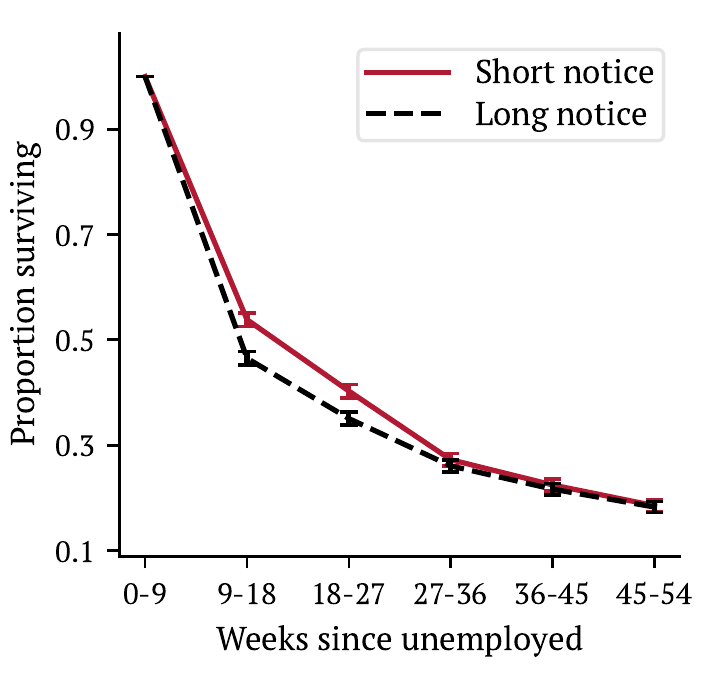}
\subcaption{Survival Rate}
\end{subfigure}
\begin{subfigure}{.475\textwidth}
\includegraphics{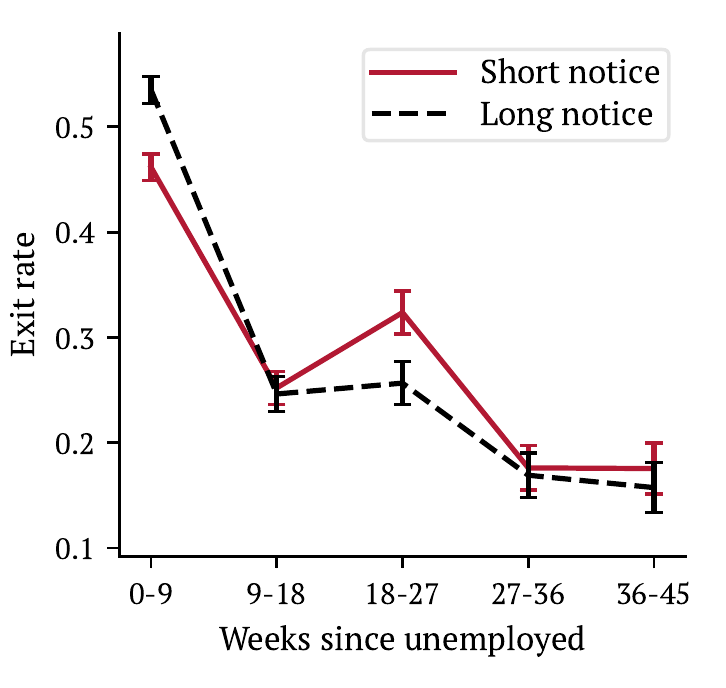}
\subcaption{Exit Rate}
\end{subfigure}
\floatfoot{\textit{Note:} Unemployment duration is binned in 4-week intervals for panels A and B, while it is binned in 9-week intervals for panels C and D. Panel A and C present the proportion of individuals who are unemployed at the beginning of each interval. Panel B and D present the proportion of individuals exiting unemployment in each interval amongst those who were still unemployed at the beginning of the interval. Error bars represent 90\% confidence intervals.}
\end{figure}



\clearpage
\setcounter{table}{0}
\setcounter{figure}{0}
\section{Robustness Checks}\label{app_robust}

In this subsection, I present a series of robustness checks. 

\subsection{Non-Parametric Estimates}

As shown in the paper, the specified MH model is non-parametrically identified. Figure \ref{fig_robust_ff} presents the non-parametric estimate for the structural hazard alongside the baseline estimate, which assumes a log-logistic functional form for the hazard. This figure shows that the non-parametric point estimates are roughly equivalent to the baseline estimates. However, the standard errors are larger, specifically the standard error associated with the last data point, which blows up dramatically.

\begin{figure}[p]\caption{Non-Parametric Estimates}\label{fig_robust_ff}
\centering 
\includegraphics{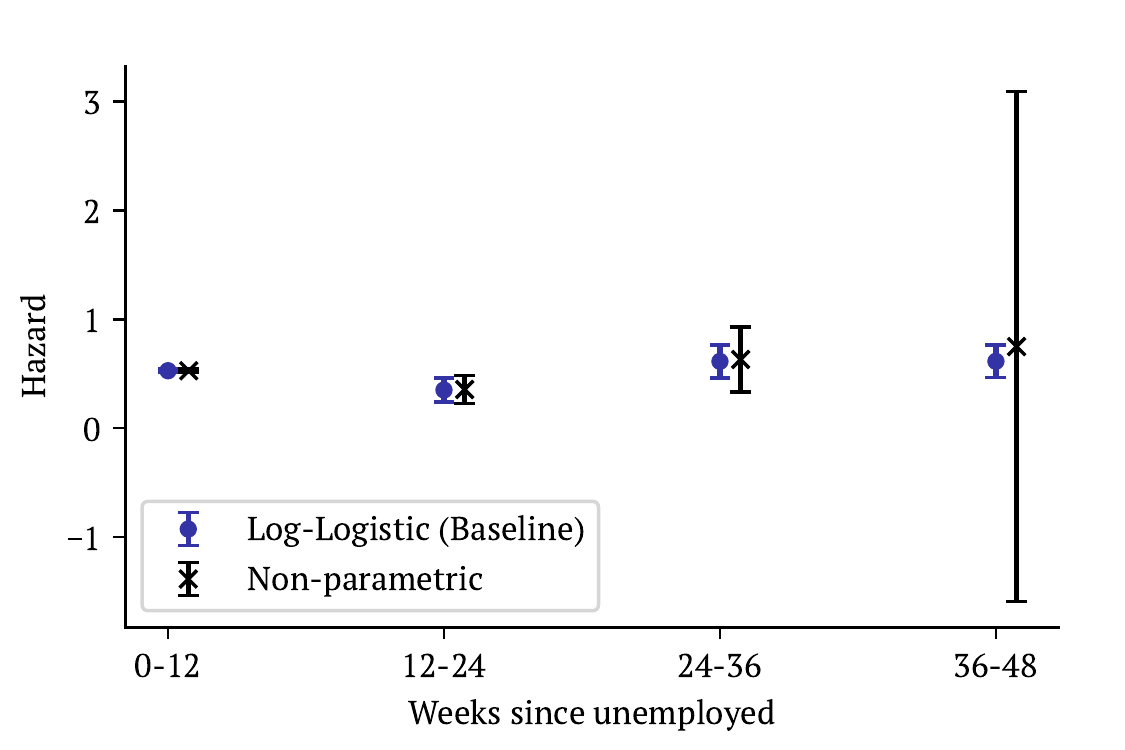}
\floatfoot{\textit{Note:} The figure compares non-parametric estimates of the structural hazard from the Mixed Hazard model with baseline estimates that assume a log-logistic functional form for the structural hazard. Error bars represent 90\% confidence intervals.}
\end{figure}

\subsection{Unweighted Sample}

Figure \ref{fig_robust_unwtd} displays the data and estimates using the unweighted sample. Panel A shows that the pattern of exit rates and how they vary by notice length is similar to that observed in the weighted sample in Figure \ref{fig_dur_dist}. Similarly, the estimates based on the unweighted sample, presented in panel B, closely resemble the baseline estimates shown in Figure \ref{fig_baseline_estimates}. This suggests that observable characteristics have a limited role in explaining the raw differences in exit rates between the two notice groups.

\begin{figure}[p]\caption{Data and Estimates using the Unweighted Sample}\label{fig_robust_unwtd}
\begin{subfigure}{.475\textwidth}
\centering
\includegraphics{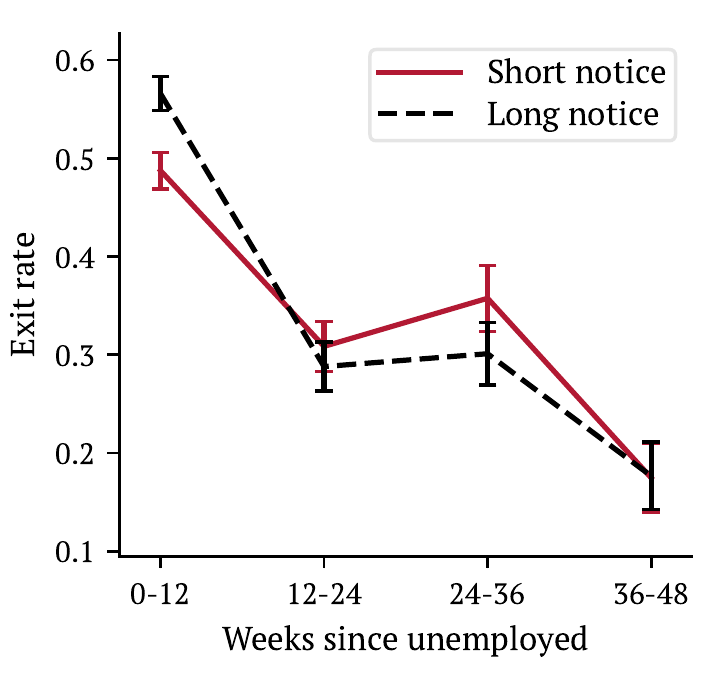}
\subcaption{Exit Rate}
\end{subfigure}
\begin{subfigure}{.475\textwidth}
\includegraphics{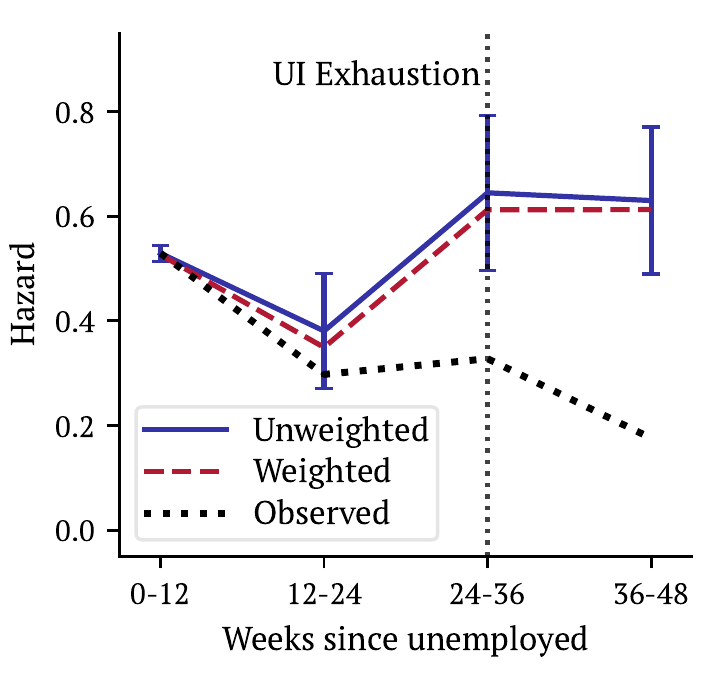}
\subcaption{Estimated Structural Hazard}
\end{subfigure}
\floatfoot{\textit{Note:} The figure presents data and estimates for the unweighted analytical sample. Panel A displays the exit rate from the data separately for long and short-notice workers. In panel B, the solid blue line shows the estimated structural hazard from the Mixed Hazard model using unweighted moments, the dashed red line shows the estimate using weighted moments, and the dotted black line represents the average exit rate for short and long-notice workers in the data.}
\end{figure}

\subsection{Alternative Notice Categories}\label{subsec_alt_notcats}

As noted in Section \ref{sec_data}, the question on notice length in the DWS is categorical and includes categories of no notice, <1 month, 1-2 months, and >2 months. So far, I have only focused on the latter two categories, referred to as short and long notice. I now test the robustness of my findings with respect to this restriction. First, I include individuals who received no notice as a third group in the original estimation sample. Then, I repeat the process, adding individuals with <1 month's notice as the third group.\footnote{I do not include all four categories simultaneously for two reasons. The first, and less significant, reason is that the overlap in propensity scores is already suboptimal with three categories and worsens with four. The primary reason is that the exit rate in the first period for the \textit{<1 month} group is lower than for the \textit{no notice} group, possibly due to selection, since separations with no notice may also include voluntary quits. However, the first-period exit rates for both are lower than those for the 1-2 month and >2 month notice groups.} I use the same covariates as before to fit the propensity score models using multinomial logit and weight the data accordingly. Observations where the propensity scores fall outside the 0.1 to 0.9 range are excluded. I then re-estimate the Mixed Hazard model with three notice categories for both the augmented samples.

\begin{figure}[t]\caption{Estimates with Alternative Notice Categories}\label{fig_robust_notice_cat}
\begin{subfigure}{.475\textwidth}
\centering
\includegraphics{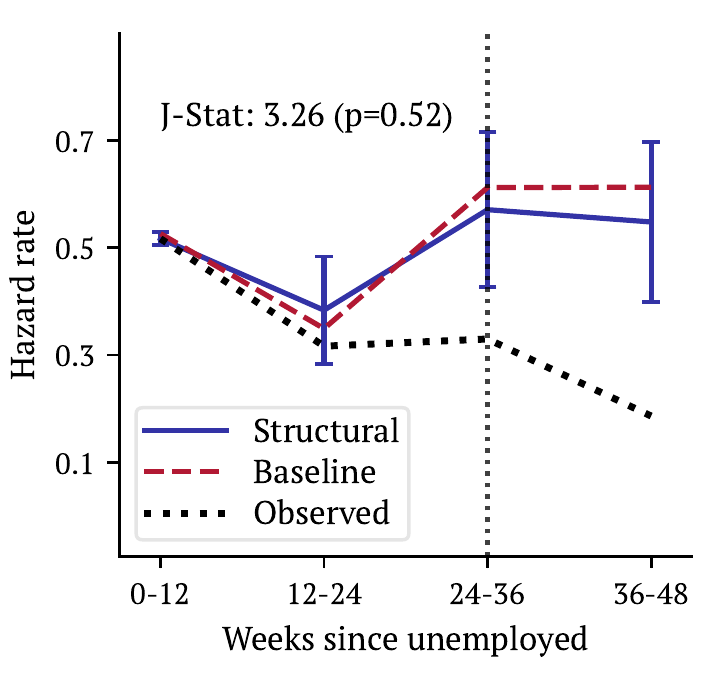}
\subcaption{No notice, 1-2 months, >2 months}
\end{subfigure}
\begin{subfigure}{.475\textwidth}
\includegraphics{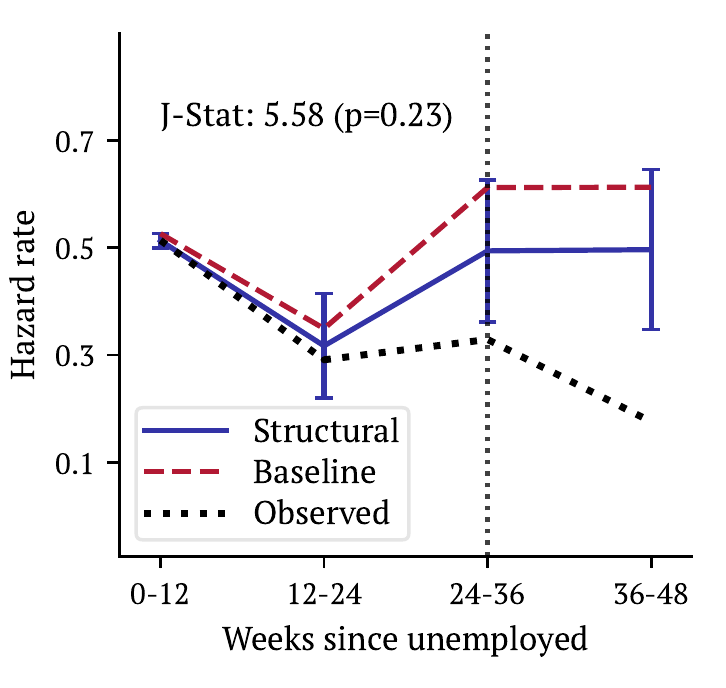}
\subcaption{<1 month, 1-2 months, >2 months}
\end{subfigure}
\floatfoot{\textit{Note:} Panel A presents the estimated hazard for the sample consisting of individuals who received no notice, along with those who received notice periods of 1-2 months and >2 months, as in the main analytical sample. Panel B estimates are for individuals who received notice of <1 month, in addition to the groups from the main analytical sample. J-Stat refers to the Sargan-Hansen statistic for overidentifying restrictions.}
\end{figure}

The estimates from the above exercise are presented in Figure \ref{fig_robust_notice_cat}. Panel A corresponds to the sample where the third category is no notice, while panel B corresponds to the sample where the third category is <1 month notice. After excluding extreme propensity score values, 7,122 of 12,061 observations remain for the data used in panel A, and 5,667 of 5,898 observations remain for the data used in panel B. In both cases, the estimates of structural duration dependence are qualitatively similar to the baseline results—the structural hazard consistently exceeds the exit rate from the data, initially decreasing, then increasing up to the third interval, and remaining constant thereafter. However, in panel B, the estimated structural hazard increases less leading up to benefit exhaustion compared to the baseline estimate. This could reflect differences in patterns of heterogeneity or duration dependence across the samples. More importantly, because we have more than two notice categories, the model is over-identified. With $3 \times 4 = 12$ moment conditions and 8 parameters, there are 4 degrees of freedom. The Sargan-Hansen J-statistic for testing overidentifying restrictions is reported in both panels, along with the respective $p$-values.\footnote{The estimated model in the main analysis is also over-identified due to the imposition of a log-logistic functional form on the structural hazard, resulting in one free parameter. However, here we are testing a more extensive set of overidentifying restrictions, with the caveat that the power of this test is lower.} In both cases, the large $p$-values indicate that there is no strong evidence against the null hypothesis.

\subsection{Binning Unemployment Duration}\label{subsec_binning}

Finally, a limitation of the application in the paper is the need to bin unemployment durations into 12-week intervals due to the small sample size of the DWS and the potentially noisy measurement of duration caused by retrospective bias. While Figure \ref{fig_altbins} shows that the patterns of how exit rates vary with duration and notice length are similar when binning data into 4- or 9-week intervals, I now present estimates from the MH model using unemployment duration data binned into 9-week intervals in Figure \ref{fig_robust_altbins}. Once again, we uncover the same pattern in estimates of structural duration dependence, with a decline from the first to the second interval, followed by a rise in the third interval corresponding to 18-27 weeks, which aligns with UI exhaustion at 26 weeks. I also present non-parametric estimates, which closely follow the log-logistic estimates.

\begin{figure}[t]\caption{Estimates with Unemployment Duration Binned in 9-Week Intervals}\label{fig_robust_altbins}
\vskip-0.3cm
\includegraphics{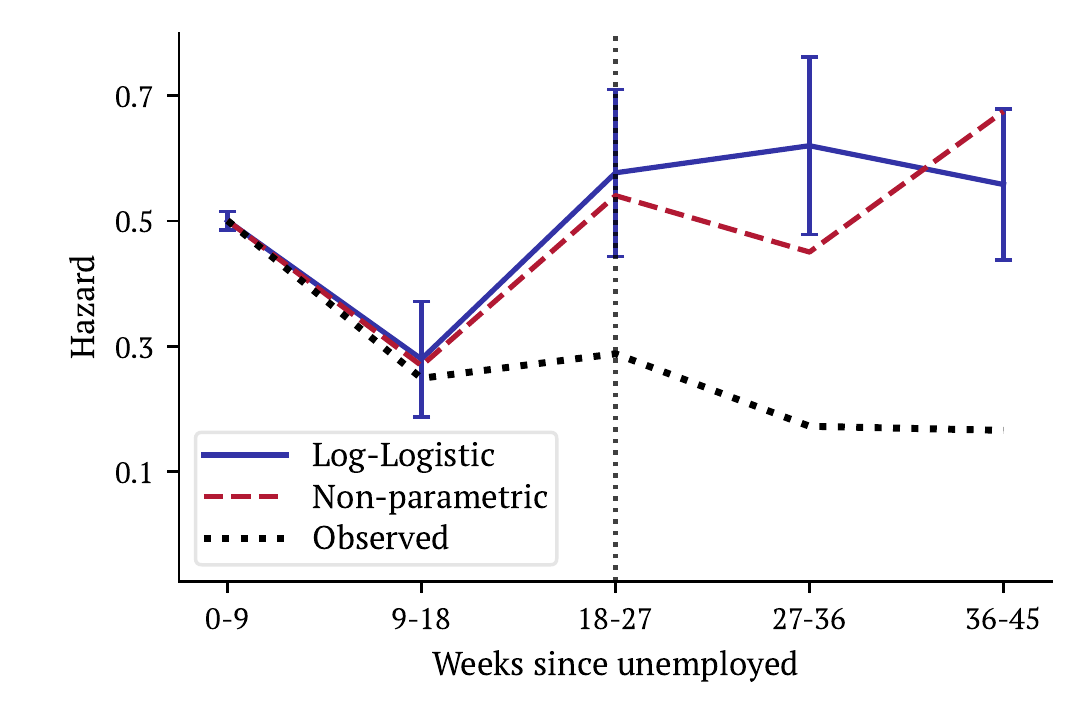}
\vskip-0.5cm
\floatfoot{\textit{Note:} The figure presents estimates from the Mixed Hazard model using data with unemployment duration binned in 9-week intervals. The solid blue line represents estimates for the structural hazard assuming the Log-Logistic functional form. The dashed red line represents non-parametric estimates, while the dotted black line represents the observed exit rate from the data.}
\end{figure}

I conduct another exercise to evaluate the effect of interval sizes directly on estimates from the MH model. Specifically, I take exit rates implied by the model for a given data-generating process (DGP) and compute the exit rates that would be observed if unemployment durations were binned into different-sized intervals. I then compare how the estimates change with the size of the bins. 

The data generating process (DGP) for this exercise is defined as follows: There are two groups of notice, each with a probability of 0.5, and the first-period structural hazards are set at 0.1 and 0.2 for the two groups, respectively. For periods after the first \((d > 1)\), the structural hazards, which are common to both groups, are generated in four different cases using the Weibull function \(bkd^{d-1}\)to ensure our conclusions are not an artifact of a specific DGP. These four cases are: (1) increasing \((b=0.2, k=1.2)\), (2) decreasing \((b=0.2, k=0.75)\), (3) constant \((b=0.15, k=1)\), and (4) non-monotonic, where the Weibull function is increasing as in (1) until \(d=7\) and then follows (2) multiplied by 1.75. The unobserved type \(\nu\) is a mixture of three Beta distributions: \(\nu_1 \sim \text{Beta}(0.1, 0.1)\) with probability 0.5, \(\nu_2 \sim \text{Beta}(0.3, 0.5)\) with probability 0.1, and \(\nu_3 \sim \text{Beta}(0.25, 0.5)\) with probability 0.4. I generate the exit rates for 12 time periods, bin them into intervals of four different sizes---1, 2, 3, and 4---and estimate the model in each case.

The dashed black line in Figure \ref{fig_sim_binA} represents the profile of exit rates corresponding to different bin sizes used in the estimation for the four DGPs from panels A-D. The solid dark-grey line shows the estimated structural hazard in each case. Comparing the estimates across different bin sizes in each panel, one can see that increasing the bin size leads to potentially missing the full extent of changes in the structural hazard over time. Specifically, when the structural hazard is increasing, the binned estimate increases by less (compare column 4 to column 1 in panel A). Similarly, when the structural hazard is decreasing, the binned estimate decreases by less (panel B). This is a direct consequence of the binned exit rates not fully capturing the extent of changes in the exit rates over time, making it difficult for the estimates using those exit rates to capture the full dynamics.

However, note that regardless of bin size, the estimates are able to uncover the broader trends in the structural hazard. This is because the estimates correspond to the cumulative structural hazard within a given interval for the average type, represented by the dotted red line on the plots. Specifically, the cumulative structural hazard is defined as the probability of an individual exiting unemployment in a given interval, provided that the individual hasn't exited by the start of the interval.\footnote{For instance, for a bin size of 2, the cumulative structural hazard corresponding to the second interval for the average worker is given by: \( \Pr(3 \leq D \leq 4 | D \geq 2, L, \nu = \mu) = \psi(3)\mu + [1-\psi(3)\mu]\psi(4)\mu \).} So, while the estimates may miss some nuances within the intervals, the extent of what is missed is limited to what would be missed if we were analyzing the true structural exit probabilities in broader intervals. This is also evident from the implied average type for each estimate presented in Figure \ref{fig_sim_binB}. From this figure, it can be seen that the average type in each panel falls similarly over time for different bin sizes, implying that we are able to capture the contribution of underlying heterogeneity in the observed exit rates very closely, regardless of the bin size. 

Overall, this shows that the assumptions of the model do not interact with bin sizes in a way that leads to systematically biased estimates, ensuring that the estimates reported in this paper correspond to meaningful underlying patterns.


\begin{figure}[p]\caption{Estimation on Binned Simulated Data: Hazard}\label{fig_sim_binA}
	\includegraphics{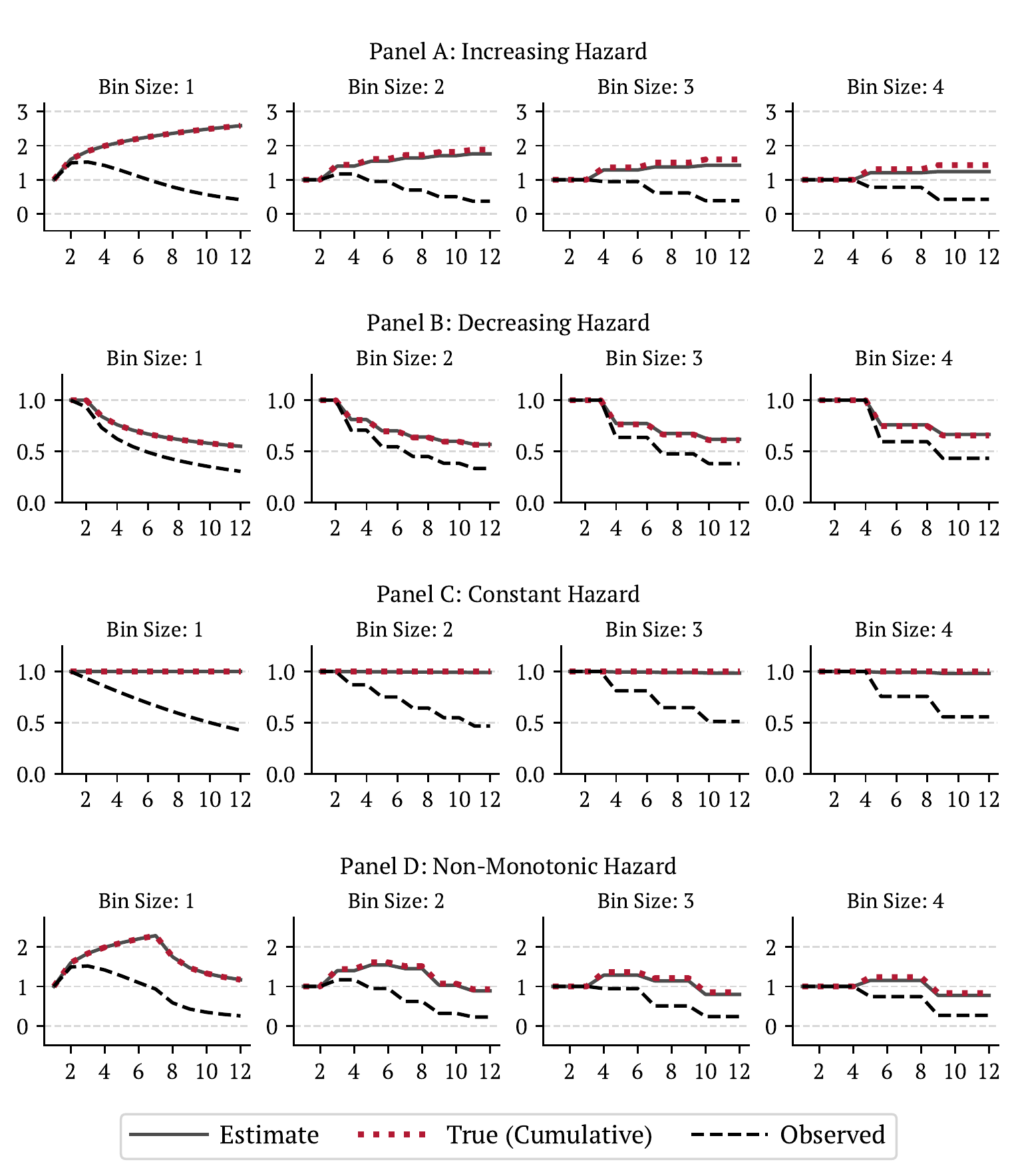}
\floatfoot{\textit{Note}: The figure presents results from a robustness exercise evaluating the impact of binning data into intervals. I calculate the moments used for estimation from the Mixed Hazard model with 12 durations. These moments are then binned into intervals of lengths ranging from 1 to 4, and the model is estimated using the binned moments. Along with the estimates of the structural hazard (dark grey solid line), I present two additional quantities binned similarly for comparison. The dotted red line represents the true structural cumulative hazard of exiting in the specific interval, while the black dashed line represents the average observed hazard.}
\end{figure}

\begin{figure}[p]\caption{Estimation on Binned Simulated Data: Average Type}\label{fig_sim_binB}
	\includegraphics{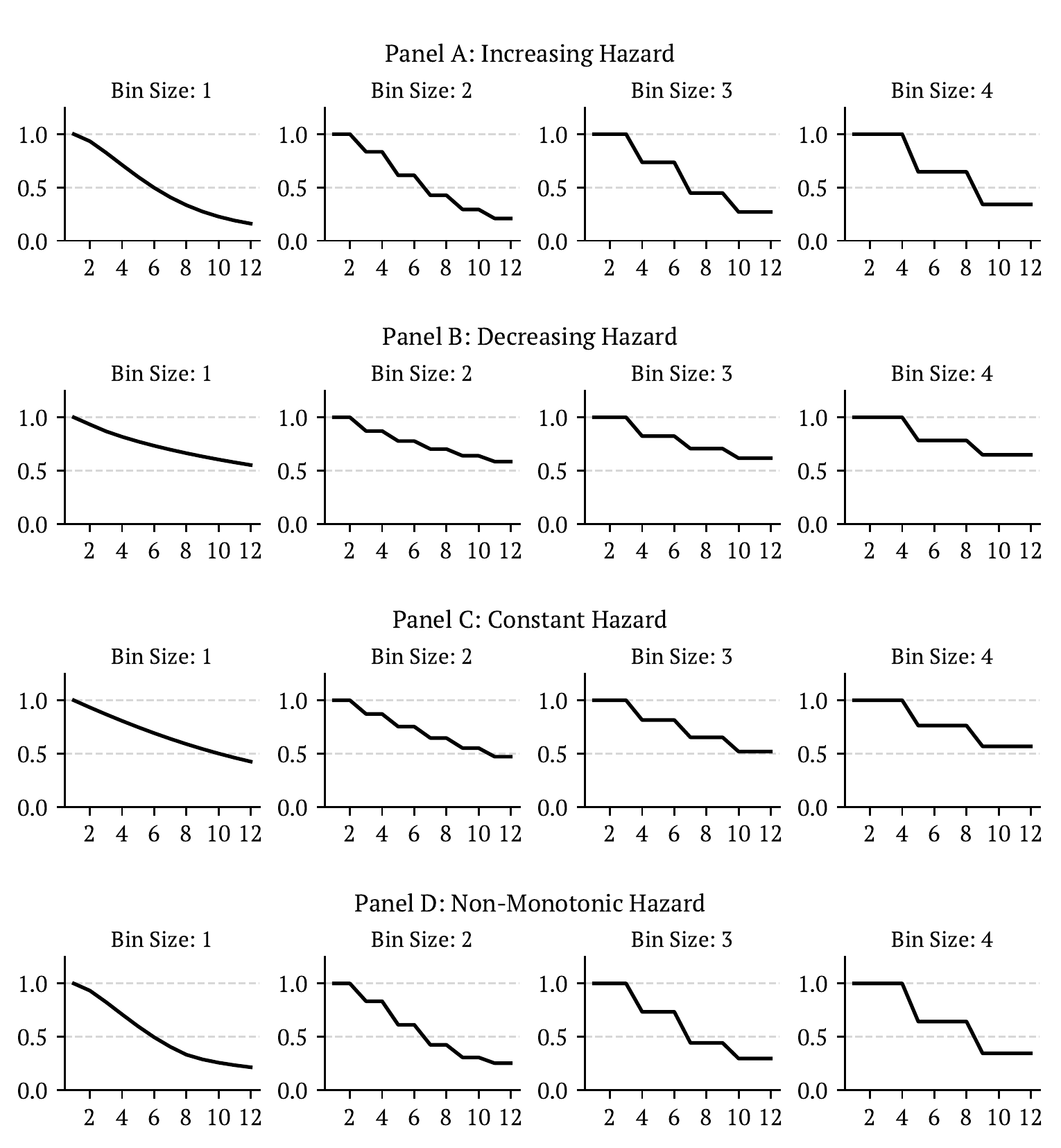}
\floatfoot{\textit{Note}: The figure presents results from a robustness exercise evaluating the impact of binning data into intervals. I calculate the moments used for estimation from the Mixed Hazard model with 12 durations. These moments are then binned into intervals of lengths ranging from 1 to 4, and the model is estimated using the binned moments. The plots represent the average type at each duration, as implied by the estimated structural hazard.}
\end{figure}

\clearpage
\setcounter{table}{0}
\setcounter{figure}{0}
\section{Generalization}\label{app_gen}

The main identification result in the paper relies on two crucial assumptions: (i) the notice length is independent of the worker type (conditional on observables), and (ii) the structural hazard after the initial period does not vary with notice length. In this section, I generalize the identification result showing that we can identify structural duration dependence and the moments of heterogeneity distribution, provided we know how the structural hazard following the initial period and the distribution of heterogeneity vary with notice length. Previously, the model operated under the assumption that the distribution of heterogeneity and the structural hazards at later durations did not vary with notice length. Now, we are considering alternative assumptions about how these factors might vary. This is useful as it allows us to assess estimates under different assumptions and compare them to the original results.

For brevity, I ignore observable characteristics while presenting the proof in this section and work with a version of the model where an individual worker's exit probability is given by \(h(d|\nu, L) = \psi_L(d) \nu\), with \(\nu\) being exogenous. However, the result can be straightforwardly extended to incorporate observable characteristics as in the model with \(h(d|\nu, L, X) = \psi_L(d) \phi(X) \nu\), under conditional independence. In fact, in the implementation of this result in Section \ref{app_subsec_impl}, I do use exit rates weighted by the inverse of propensity scores.

Unlike before, now we will not assume stationarity or independence; instead, we will allow the distribution of heterogeneity and structural hazards after the initial period to vary based on certain known parameters. In particular, for two lengths of notice $\ell$ and $\ell'$, define $\kappa_d$ as the difference between the $d^{th}$ moment of $\nu$ conditional on $\ell'$ and $\ell$ as follows:
$$ \kappa_d = \mu_{\ell',d}-\mu_{\ell,d} $$
where $\mu_{\ell,d} = E(\nu^d|\ell)$. So $\kappa_1$ is the difference between the average type of workers with $\ell'$ and $\ell$ notice lengths. Additionally, define $\gamma_d$ as the ratio of structural hazards at duration $d$ for two lengths of notice,
$$\gamma_d = \frac{\psi_{\ell'}(d)}{\psi_{\ell}(d)}$$

The identification result presented below states that if for some $\bar{D}$ we know $\kappa_d$ for $d=1,...,\bar{D}$ and $\gamma_d$ for $d=2,...,\bar{D}$, we can identify the first $\bar{D}$ structural hazards and moments of type distribution for each notice length up to scale.\footnote{Alternatively, we could know $\kappa_d$ for $d=2,...,\bar{D}$ and $\gamma_d$ for $d=1,...,\bar{D}$. Also, in theory, the choice of defining $\gamma_d$ and $\kappa_d$ as a ratio or a difference does not impact the proof of identification. In this case, I define $\kappa_d$ as a difference and $\gamma_d$ as a ratio for the convenience of varying these parameters when examining the changes in estimates.}

\subsection{General Identification Result}

\begin{theorem}\label{result_identification}
For some $\ell,\ell'$, define $\kappa_d = \mu_{\ell',d}-\mu_{\ell,d}$ and $\gamma_d = \psi_{\ell'}(d)/\psi_{\ell}(d)$. Then for some $\bar{D}$, if $\{\kappa_d\}_{d=1}^{\bar{D}}$ and $\{\gamma_d\}_{d=2}^{\bar{D}}$ are known, then the baseline hazards $\{\psi_l(d),\psi_{\ell'}(d)\}_{d=1}^{\bar{D}}$ and the conditional moments of the type distribution $\{\mu_{\ell,d},\mu_{\ell',d}\}_{d=1}^{\bar{D}}$ are identified up to a scale from $\{h(d|\ell),h(d|\ell')\}_{d=1}^{\bar{D}}$. 
\end{theorem}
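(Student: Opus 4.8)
The plan is to extend the argument behind Theorem \ref{result_main_theorem}: since stationarity and conditional independence are no longer imposed, the one-shot differencing step is replaced by an induction on the duration $d$, with the known quantities $\kappa_d$ and $\gamma_d$ playing the role those two assumptions played before. Dropping $X$ throughout, the algebraic backbone is the expansion of Lemma \ref{result_mom_id}: writing $S(d|\nu,L)=\prod_{s=1}^{d}(1-\psi_L(s)\nu)$ and $g(d|\nu,L)=\psi_L(d)\,\nu\,S(d-1|\nu,L)$ and taking expectations over $\nu\mid L$ gives
\[
g(d|L)=\psi_L(d)\sum_{k=1}^{d}c_k(d,\bm{\psi_L})\,\mu_{L,k},\qquad S(d-1|L)=1-\sum_{s=1}^{d-1}g(s|L),
\]
where $c_k(d,\bm{\psi_L})$ is the coefficient from Lemma \ref{result_mom_id} (it depends only on $\psi_L(1),\dots,\psi_L(d-1)$) and $c_d(d,\bm{\psi_L})=(-1)^{d-1}\prod_{s=1}^{d-1}\psi_L(s)\neq 0$. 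Two facts do the work: (i) $g(d|L)$ involves only the hazards $\psi_L(1),\dots,\psi_L(d)$ and the moments $\mu_{L,1},\dots,\mu_{L,d}$, with the top moment $\mu_{L,d}$ entering linearly with the nonzero coefficient $\psi_L(d)c_d(d,\bm{\psi_L})$; and (ii) as noted in Section \ref{app_proofs}, the observed hazards $\{h(s|L)\}_{s\le d}$ and the sequence $\{g(s|L)\}_{s\le d}$ determine each other, so $S(d-1|L)$ is known once $\{h(s|L)\}_{s<d}$ is.

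\noindent The induction then runs as follows. For the base case $d=1$, $g(1|L)=h(1|L)=\psi_L(1)\mu_{L,1}$ for $L\in\{\ell,\ell'\}$ together with $\mu_{\ell',1}=\mu_{\ell,1}+\kappa_1$; fixing the scale by $\mu_{\ell,1}=1$ pins down $\psi_\ell(1)$, $\mu_{\ell',1}$, and $\psi_{\ell'}(1)$ (this step uses $\kappa_1$, not $\gamma_1$, which is deliberately left unknown). For the inductive step, suppose all hazards and moments through period $d-1$ are identified; then $S(d-1|\ell),S(d-1|\ell')$ are known, and $h(d|\ell),h(d|\ell')$ yield $g(d|\ell),g(d|\ell')$. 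Substituting $\psi_{\ell'}(d)=\gamma_d\psi_\ell(d)$ and $\mu_{\ell',d}=\mu_{\ell,d}+\kappa_d$ into the expansion of $g(d|L)$ and collecting the inductively known terms into constants $A_L=\sum_{k=1}^{d-1}c_k(d,\bm{\psi_L})\mu_{L,k}$ and $B_L=c_d(d,\bm{\psi_L})\neq 0$ yields the two-equation system
\[
g(d|\ell)=\psi_\ell(d)\bigl(A_\ell+B_\ell\,\mu_{\ell,d}\bigr),\qquad g(d|\ell')=\gamma_d\,\psi_\ell(d)\bigl(A_{\ell'}+B_{\ell'}\kappa_d+B_{\ell'}\,\mu_{\ell,d}\bigr)
\]
in the two remaining unknowns $\psi_\ell(d)$ and $\mu_{\ell,d}$. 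Eliminating $\mu_{\ell,d}$ from the first equation and substituting into the second reduces the system to a single linear equation in $\psi_\ell(d)$; solving it and back-substituting recovers $\psi_\ell(d)$ and $\mu_{\ell,d}$, hence also $\psi_{\ell'}(d)=\gamma_d\psi_\ell(d)$ and $\mu_{\ell',d}=\mu_{\ell,d}+\kappa_d$. This closes the induction and delivers all $2\bar{D}$ hazards and $2\bar{D}$ moments up to the single scale fixed at $d=1$; alternatively, once the hazards are identified the moments follow from Lemma \ref{result_mom_id} applied separately to each notice group.

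\noindent I expect the one genuinely delicate point to be the non-degeneracy of the linear system in the inductive step: after elimination the coefficient of $\psi_\ell(d)$ is, up to a harmless factor, $B_\ell A_{\ell'}-B_{\ell'}A_\ell+B_\ell B_{\ell'}\kappa_d$, and identification requires it to be nonzero. In the stationary, conditionally independent special case ($\gamma_d\equiv 1$, $\kappa_d\equiv 0$) this expression simplifies to the condition $\psi_\ell(1)\neq\psi_{\ell'}(1)$ already invoked in Theorem \ref{result_main_theorem}; more generally one needs the two notice groups not to be ``too similar'' through period $d-1$, and when they coincide exactly it reduces to $\kappa_d\neq 0$. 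I would therefore present the result with the understanding that this non-degeneracy holds --- exactly as $\psi_\ell(1)\neq\psi_{\ell'}(1)$ is needed in the baseline theorem --- and devote most of the effort to checking that it is implied by the maintained support conditions in the cases actually used in the applications of Section \ref{subsec_robust}.
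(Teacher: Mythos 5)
Your proposal is correct and follows essentially the same route as the paper's own proof: induction on $d$, normalizing $\mu_{\ell,1}=1$ in the base case, and at each step solving the two-equation system in $\psi_\ell(d)$ and $\mu_{\ell,d}$ obtained from the moment expansion with $\psi_{\ell'}(d)=\gamma_d\psi_\ell(d)$ and $\mu_{\ell',d}=\mu_{\ell,d}+\kappa_d$ (your eliminated coefficient is, up to the factor $B_\ell$, exactly the paper's denominator for $\psi_{\ell'}(d)$). The paper likewise leaves the non-degeneracy as a caveat ("as long as the denominators are not zero"), so your explicit check that it reduces to $\psi_\ell(1)\neq\psi_{\ell'}(1)$ in the stationary, independent case is a welcome but consistent addition.
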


\begin{proof}
First note that we can write,
\begin{align} 
g(d|\ell) = \psi_l(d) \sum_{k=1}^{d} c_k(\bm{\psi_{\ell,d-1}}) \mu_{\ell,k}   
\end{align}
where $\bm{\psi_{\ell,d-1}} = \{\psi_l(s)\}_{s=1}^{d-1}$, $c_k(\bm{\psi_{\ell,0}})=1$, and 
$$ c_k(\bm{\psi_{\ell,d-1}}) =  \begin{cases}
c_k(\bm{\psi_{\ell,d-2}}) & \text{for }  k=1 \\
c_k(\bm{\psi_{\ell,d-2}})-\psi_l(d-1) c_{k-1}(\bm{\psi_{\ell,d-2}}) & \text{for }   1<k \leq d \\
0 & \text{for }   k>d 
\end{cases} $$

Now, we can prove the statement of the theorem by induction. First, note that the statement is true for $\bar{D}=1$. To see this, note that 
$$ g(1|\ell) = \psi_l(1) \mu_{\ell,1} \quad \quad g(1|\ell') = \psi_{\ell'}(1)(\mu_{\ell,1}+\kappa_1)  $$
Normalizing $\mu_{\ell,1}=1$, we can solve for $\psi_l(1) = g(1|\ell)$ and $\psi_{\ell'}(1) =  \frac{g(1|\ell')}{1+\kappa_1}$.

Now, let us assume that the statement is true for $\bar{D}=d-1$. Then we can identify $\{\psi_l(s),\psi_{\ell'}(s)\}_{s=1}^{d-1}$ and $\{\mu_{\ell,s},\mu_{\ell',s}\}_{s=1}^{d-1}$ from $\{g(s|\ell),g(s|\ell')\}_{s=1}^{d-1}$. To complete the proof, we need to prove that the statement is true for $\bar{D}=d$ as well.

Denote $\Gamma_d =  \prod_{s=1}^{d} \gamma_s  $ and $\Psi_{\ell}(d) = \prod_{s=1}^{d} \psi_{\ell}(s)  $. Now note that, 
\begin{align*}
g(d|\ell) &= \psi_l(d)  \sum_{k=1}^{d} c_k(\bm{\psi_{\ell,d-1}}) \mu_{\ell,k} \\
&= \psi_l(d)  \left[ \sum_{k=1}^{d-1} c_k(\bm{\psi_{\ell,d-1}}) \mu_{\ell,k} + c_{d}(\bm{\psi_{\ell,d-1}}) \mu_{\ell,d} \right]   \\
&= \psi_l(d)  \left[ \sum_{k=1}^{d-1} c_k(\bm{\psi_{\ell,d-1}}) \mu_{\ell,k} + (-1)^{d-1}  \Psi_{\ell}(d-1) \mu_{\ell,d} \right]   
\end{align*}
From the above equation we can solve for $\mu_{\ell,d}$ as follows:
\begin{equation}\label{eq_mu}
\mu_{\ell,d} =  \frac{(-1)^{d} }{\Psi_{\ell}(d-1)}\left[  \sum_{k=1}^{d-1} c_k(\bm{\psi_{\ell,d-1}}) \mu_{\ell,k} - \frac{g(d|\ell)}{\psi_l(d)} \right]
\end{equation}
Using the fact that $\mu_{\ell',d} =\kappa_d+ \mu_{\ell,d} $, we can write $g(d|\ell')$ as follows:
\begin{align*}
g(d|\ell') &=  \psi_{\ell'}(d)  \left[ \sum_{k=1}^{d-1} c_k(\bm{\psi_{\ell',d-1}}) \mu_{\ell',k} + (-1)^{d-1} \Psi_{\ell'}(d-1) (\kappa_d+ \mu_{\ell,d}) \right]   
\end{align*}
By plugging in $\mu_{\ell,d}$ from equation (\ref{eq_mu}) in the above expression, we can solve for $\psi_{\ell'}(d) $ as follows:
$$ \psi_{\ell'}(d)  = \frac{ g(d|\ell')- \Gamma_{d}g(d|\ell)}{\sum_{k=1}^{d-1} c_k(\bm{\psi_{\ell',d-1}}) \mu_{\ell',k} -   \Gamma_{d-1} \sum_{k=1}^{d-1} c_k(\bm{\psi_{\ell,d-1}}) \mu_{\ell,k} + (-1)^{d-1} \kappa_{d} \Psi_{\ell'}(d-1) } $$

Plugging this back in expression for $\mu_{\ell',d} $, we can solve for
{\footnotesize $$ \mu_{\ell',d} =  \frac{(-1)^{d} }{\Psi_{\ell'}(d-1)}\left[ \frac{ g(d|\ell') \Gamma_{d-1} \sum_{k=1}^{d-1} c_k(\bm{\psi_{\ell,d-1}}) \mu_{\ell,k}-\Gamma_{d}g(d|\ell)\sum_{k=1}^{d-1} c_k(\bm{\psi_{\ell',d-1}}) \mu_{\ell',k} - (-1)^{d-1} g(d|\ell') \kappa_{d} \Psi_{\ell'}(d-1)}{g(d|\ell')- \Gamma_{d}g(d|\ell)} \right]  $$ }
So as long as the denominators in the expressions for $\psi_{\ell'}(d)$ and $\mu_{\ell',d}$ are not zero, we would have identification.
\end{proof}

We can see that with $\kappa_d = 0$ for $d=1,..,\bar{D}$ and $\gamma_d=1$ for $d=2,..,\bar{D}$, the above theorem is equivalent to the result in the main text. Also, note that the theorem can more generally be applied to situations with other observable characteristics. For instance, with $\kappa_d = 0$ for $d=1,..,\bar{D}$ and $\gamma_d=\gamma$ for $d=1,..,\bar{D}$, the above is equivalent to the discrete MPH model.  In the following subsection, I investigate how the estimates of structural hazard vary under different assumptions on $\kappa_d$ and $\gamma_d$.


\subsection{Implementation}\label{app_subsec_impl}

In our estimation, we utilized two lengths of notice, 1-2 months ($S$) and >2 months ($L$). Let's define $\kappa_d = \mu_{L,d}-\mu_{S,d}$ and $\gamma_d = \psi_{L}(d)/\psi_{S}(d)$. For our baseline estimates, we assumed that the distribution of heterogeneity for individuals with these different notice lengths was identical, i.e., $\kappa_d=0$ for all $d$. We also assumed that after the first period, the structural hazards for both the groups were the same, so $\gamma_d=1$ for $d>1$. I now study how our estimates change if the underlying distribution of heterogeneity and/or the structural hazards after the initial period are different for workers with different lengths of notice. In particular, I perform the following three exercises.  

\noindent \underline{1. Allow average type to vary} 

I relax the assumption that notice length is independent of a worker's type and let the mean of the heterogeneity distribution vary across the two groups. I assume that apart from the mean, the rest of the shape of the distribution for the two groups is identical. Since we have $\bar{D}=4$, this implies that in the 2nd, 3rd, and 4th central moment, the variance, skewness, and kurtosis for the two groups are identical. Scale changes would impact the non-central moments, so all four $\kappa_d$s will be non-zero. Denote central moments by $\tilde{\mu}$. Note that, $ \tilde{\mu}_{2} = \mu_{2}-\mu_{1}^2  $. Then since we need $\tilde{\mu}_{S,2} = \tilde{\mu}_{L,2}$, 
$$ \mu_{S,2}-\mu_{S,1}^2 = \mu_{S,2} + \kappa_2-(\mu_{S,1}+\kappa_1)^2 \rightarrow  \kappa_2 = \kappa_1 (\kappa_1 + 2\mu_{S,1}) $$
Similarly, noting that $\tilde{\mu}_3 = \mu_3 - 3 \mu_1 \mu_2 + 2 \mu_1^3$ and setting $\tilde{\mu}_{S,3} = \tilde{\mu}_{L,3}$, implies $\kappa_3 = \kappa_1(\kappa_1^2 + 3 \kappa_1 \mu_{S,1} + 3 \mu_{S,2}) $. And since, $\tilde{\mu}_4 = \mu_4 - 4 \mu_3 \mu_1 + 6 \mu_2 \mu_1^2 - 3 \mu_1^4$, then setting $\tilde{\mu}_{S,4} = \tilde{\mu}_{L,4}$, we would have $\kappa_4 = \kappa_1(\kappa_1^3 + 4 \kappa_1^2 \mu_{S,1} + 6 \kappa_1 \mu_{S,2} + 4 \mu_{S,3})$. 

\begin{figure}[p]\caption{Allow average type to vary}\label{fig_genHet}
\begin{subfigure}{.475\textwidth}
\includegraphics{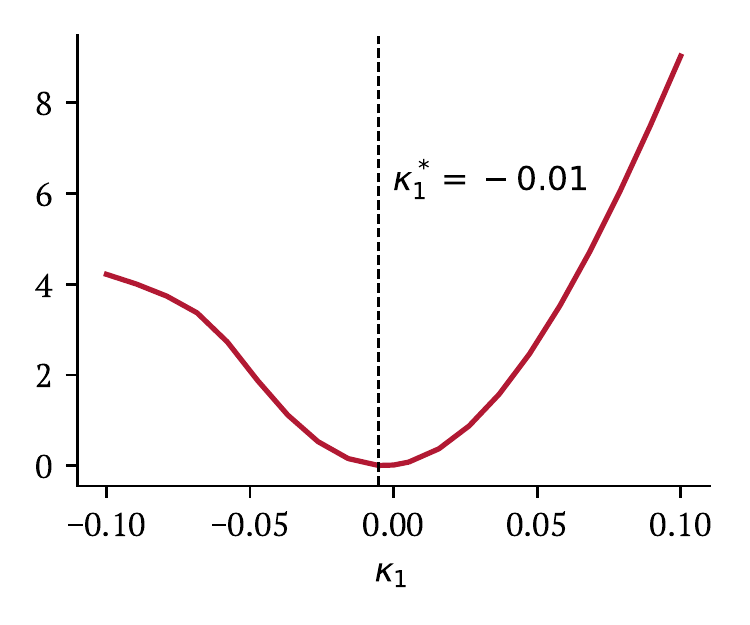}
\subcaption{Residuals}
\end{subfigure} 
\begin{subfigure}{.475\textwidth}
\includegraphics{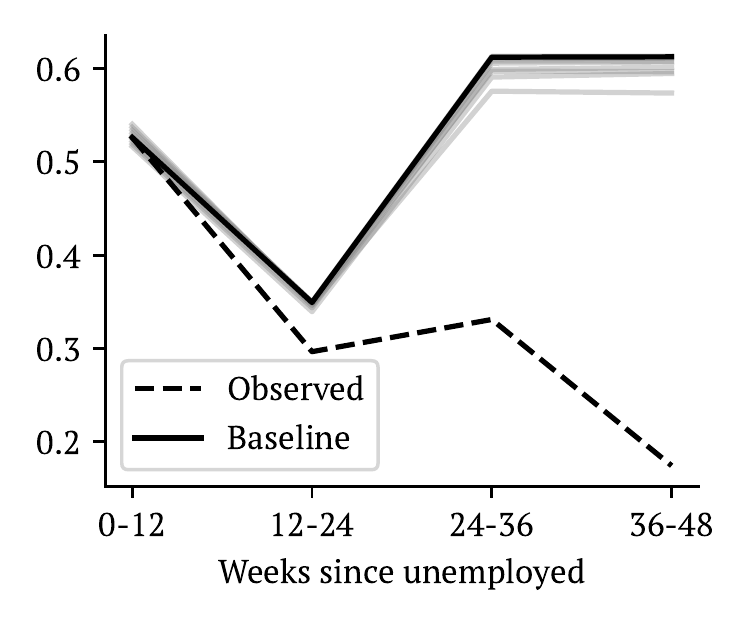}
\subcaption{10 lowest-residual estimates}
\end{subfigure} 
\vskip-0.2cm
\floatfoot{\textit{Note:} The figure presents results from the estimation of a more generalized Mixed Hazard model, where the mean of the heterogeneity distribution for individuals with different lengths of notice is allowed to vary according to the parameter $\kappa_1$. Panel A presents the residuals from GMM estimation for different values of $\kappa_1$. Panel B presents the structural hazard estimates from the ten best models with the lowest-valued residuals (light grey lines), compared to the baseline estimate (solid line) and the observed hazard in the data (dashed line).}
\end{figure} 

Now, assuming \(\gamma_d = 1\) for \(d > 1\) and normalizing \(\mu_{S,1} = 1\), I reestimate the model with \(\kappa_1\) values spaced at 0.01 intervals within the range \([-0.1, 0.1]\).\footnote{For values beyond this interval, the model fit deteriorates drastically, and the estimated moments of the heterogeneity distribution blow up in either direction.} $\kappa_2,\kappa_3$ and $\kappa_4$ are defined as above. Residuals from this exercise are presented in panel A of Figure \ref{fig_genHet}, from which we can see that the residual-minimizing value of \(\kappa\) is close to zero. In panel B, I present the estimates for structural duration dependence for the values of \(\kappa\) from the ten models with the lowest residuals. Results from this exercise suggest that the assumption \(\kappa = 0\) is not unreasonable and likely does not affect the conclusions.

\noindent \underline{2. Allow structural hazards after the first period to vary} 

Now,  I assume notice length to be independent of worker type but allow structural hazards beyond the initial period to vary for workers with different lengths of notice up to some constant $\gamma$. This corresponds to assuming $\kappa_d = 0$ for $d=1,..,\bar{D}$ and $\gamma_d=\gamma$ for $d=2,..,\bar{D}$. I estimate the model for values of $\gamma$ spaced at intervals of 0.01 within the range $[0.9, 1.1]$. Results from this exercise are presented in Figure \ref{fig_genStr}. As before, panel A shows model residuals for different \(\gamma\) values, while panel B presents the estimates for structural duration dependence for the values of \(\gamma\) from the ten models with the lowest residuals. As we can see, the residual-minimizing value of \(\gamma\) is close to one, giving support to the assumption \(\gamma = 1\).


\begin{figure}[p]\caption{Allow structural hazards after the first period to vary}\label{fig_genStr}
\begin{subfigure}{.475\textwidth}
\includegraphics{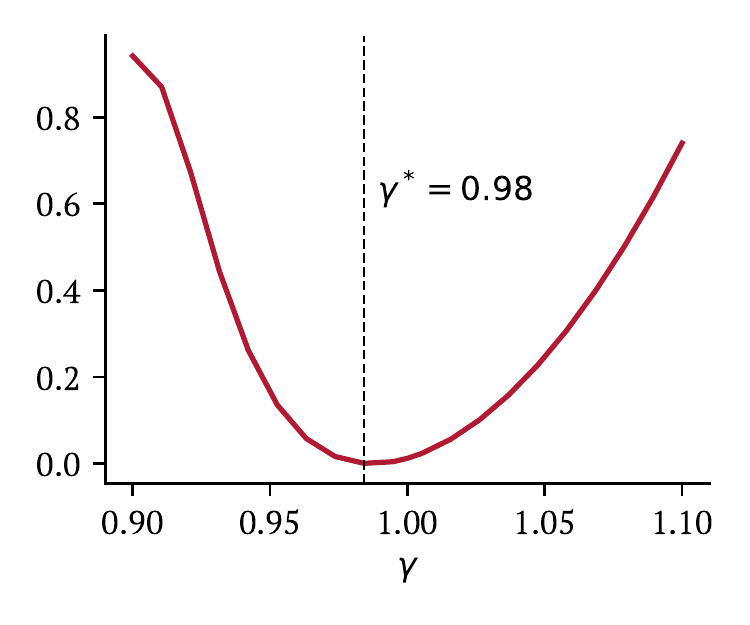}
\subcaption{Residuals}
\end{subfigure} 
\begin{subfigure}{.475\textwidth}
\includegraphics{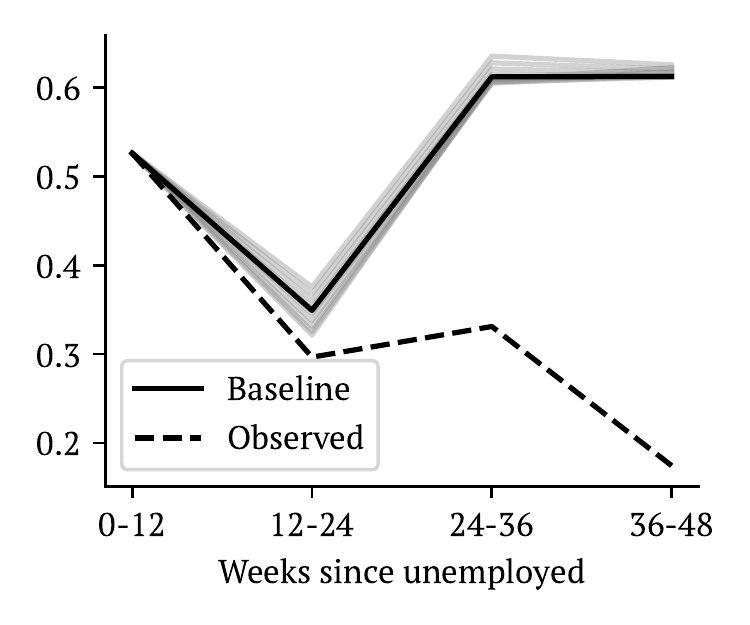}
\subcaption{10 lowest-residual estimates}
\end{subfigure} 
\vskip-0.2cm
\floatfoot{\textit{Note:}  The figure presents results from the estimation of a more generalized Mixed Hazard model, where the structural hazard after the initial period for individuals with different lengths of notice is allowed to vary according to the parameter $\gamma$. Panel A presents the residuals from GMM estimation for different values of $\gamma$. Panel B presents the structural hazard estimates from the 10 best models with the lowest-valued residuals (light grey lines), compared to the baseline estimate (solid line) and the observed hazard in the data (dashed line).}
\end{figure} 

\noindent \underline{3. Allow the average type and structural hazards after the first period to vary} 

Finally, I create a grid for values of \(\kappa\) and \(\gamma\) used in the above two exercises and reestimate the model for each point in the grid. Panel A of Figure \ref{fig_genHetStr} presents the residuals for different values in the grid, while panel B of Figure \ref{fig_genHetStr} shows the estimates for the set of \(\kappa\) and \(\gamma\) values that result in the 25 models with the lowest residuals. Mirroring the results from the above two exercises, the findings indicate no significant mean differences between short- and long-notice groups, nor a difference in structural hazards for long- vs. short-notice workers beyond the initial period.

\begin{figure}[b]\caption{Alternative Assumptions on Structural Hazards and Heterogeneity Distribution}\label{fig_genHetStr}
\makebox[\textwidth]{
\begin{subfigure}{.475\textwidth}
\includegraphics{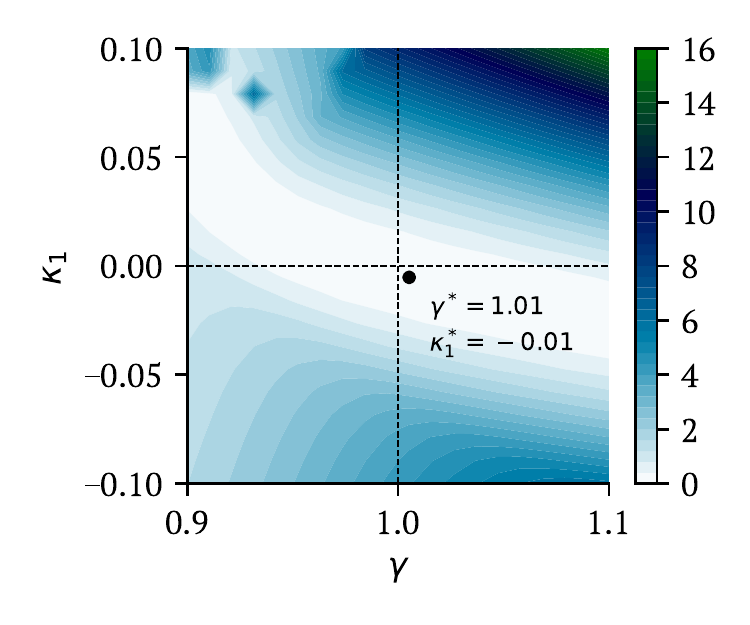}
\subcaption{Residuals}
\end{subfigure} \hspace{1em}
\begin{subfigure}{.475\textwidth}
\includegraphics{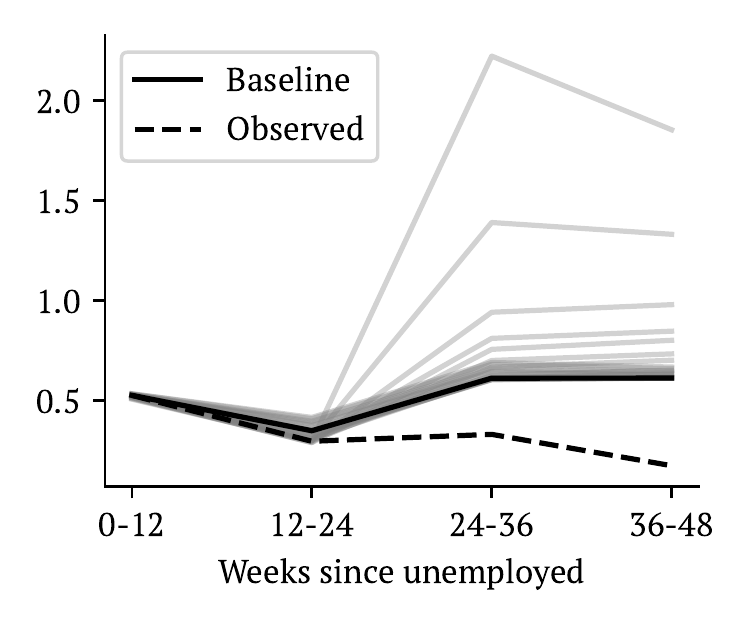}
\subcaption{25 lowest-residual estimates}
\end{subfigure} 
\floatfoot{\textit{Note:}  The figure presents results from the estimation of a more generalized Mixed Hazard model. The mean of the heterogeneity distribution for individuals with different lengths of notice is allowed to vary according to the parameter $\kappa_1$. The structural hazard after the initial period for individuals with different lengths of notice is allowed to vary according to the parameter $\gamma$. Panel A presents the residuals from GMM estimation for different values of $\kappa_1$ and $\gamma$. Panel B presents the structural hazard estimates from the 25 best models with the lowest-valued residuals (light grey lines), compared to the baseline estimate (solid line) and the observed hazard in the data (dashed line).} 
}
\end{figure}

\clearpage
\setcounter{table}{0}
\setcounter{figure}{0}
\section{Search Model Simulation}\label{app_sm_sim}

In this section, I simulate data from the search model presented in the main text. To incorporate multiple notice periods, I let the offer rate in the first period be different for long (L) and short (S) notice individuals. I set $\nu_H=1$, $\nu_L=0.5$ and $\pi=0.5$, $\delta_L(1)=1.25, \delta_S(1)=1$, and $\delta(d)=0.95$ for $d=2,3,4$. The rest of the parameters are set as in the calibration of the model in the main text. I assume there are 3000 individuals, half of whom receive the $L$ length notice. I simulate data on exit rates for this model 1000 times. The average of estimates for the structural hazard is presented in Figure \ref{fig_simulation_average}, while the distribution of the estimates is presented in Figure \ref{fig_simulation_distribution}.

\begin{figure}[h]\caption{Simulation: Average Estimate}\label{fig_simulation_average}
\includegraphics{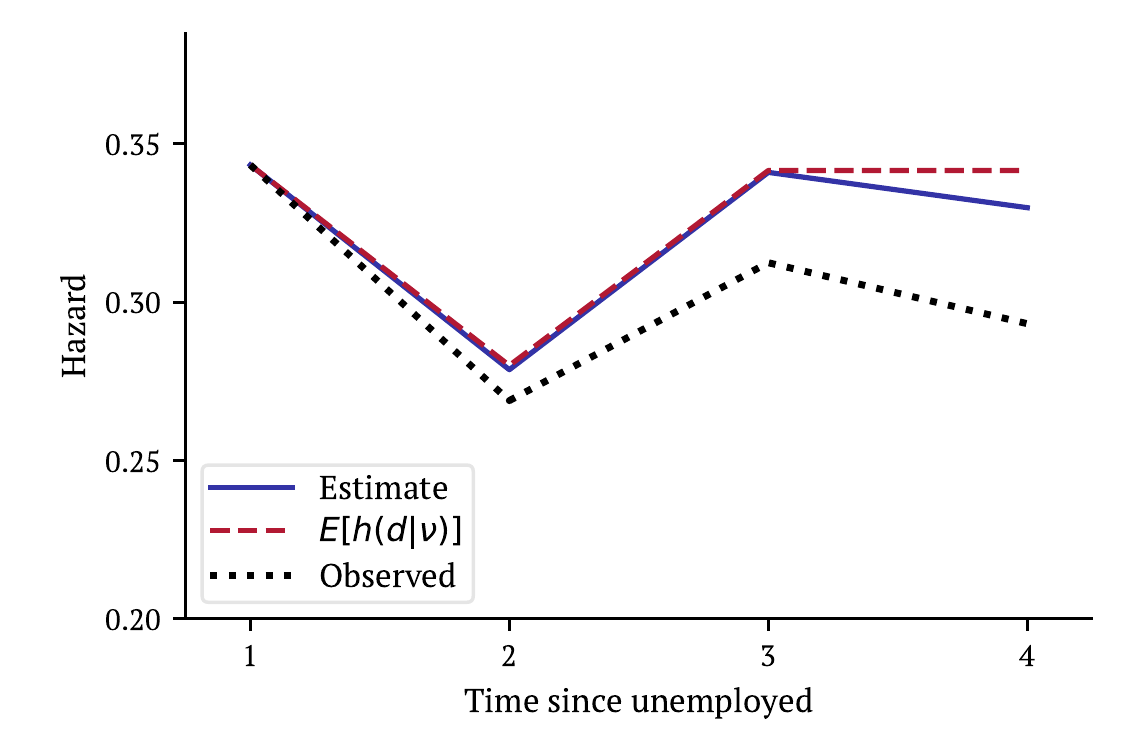}
\floatfoot{\textit{Note:}  The solid blue line presents the average estimate from 1000 simulations of the search model. The dashed red line presents the structural duration dependence $\E [h(d|\nu)]$ implied by the model. While the dotted black line presents the observed structural duration dependence $\E [h(d|\nu)|D \geq d]$ implied by the model.}
\end{figure}

\begin{figure}[p]\caption{Estimates using Simulated Data from the Search Model}\label{fig_simulation_distribution} 
\begin{subfigure}{.4\textwidth}
\includegraphics{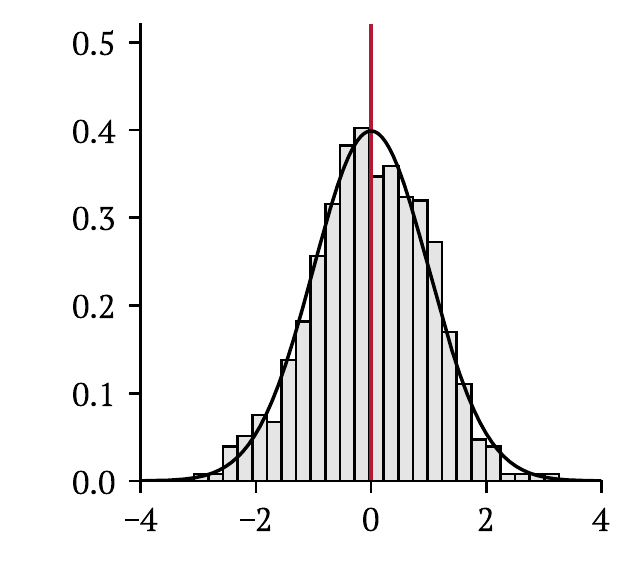}
\subcaption{$\psi(1)$}
\end{subfigure} 
\begin{subfigure}{.4\textwidth}
\includegraphics{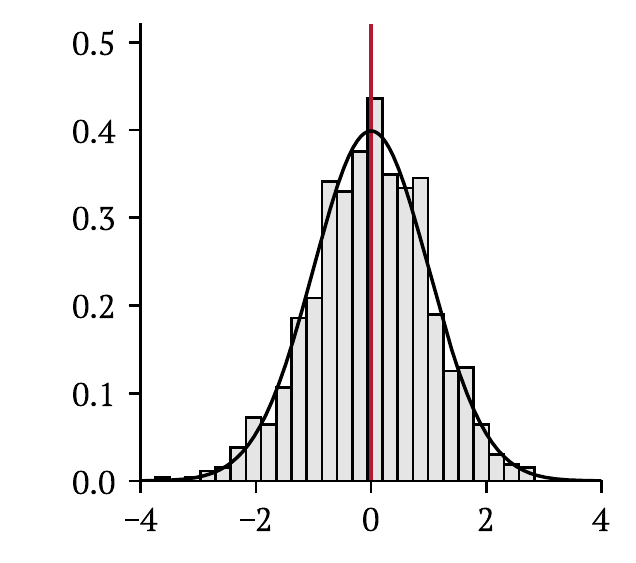}
\subcaption{$\psi(2)$}
\end{subfigure} 
\begin{subfigure}{.4\textwidth} 
\includegraphics{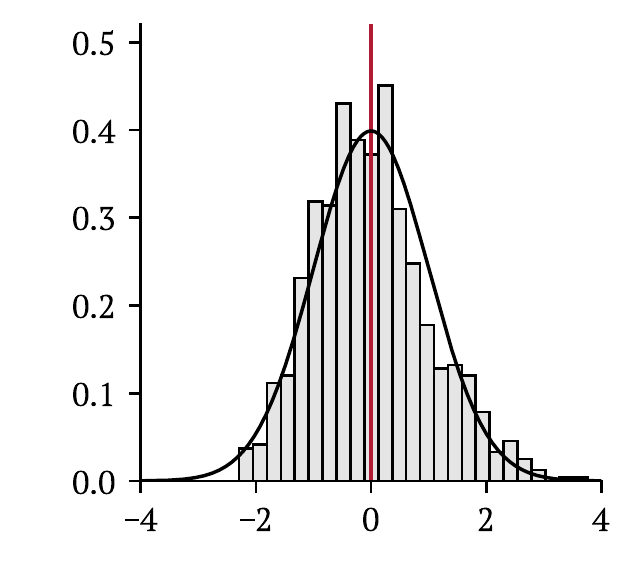}
\subcaption{$\psi(3)$}
\end{subfigure} 
\begin{subfigure}{.4\textwidth}
\includegraphics{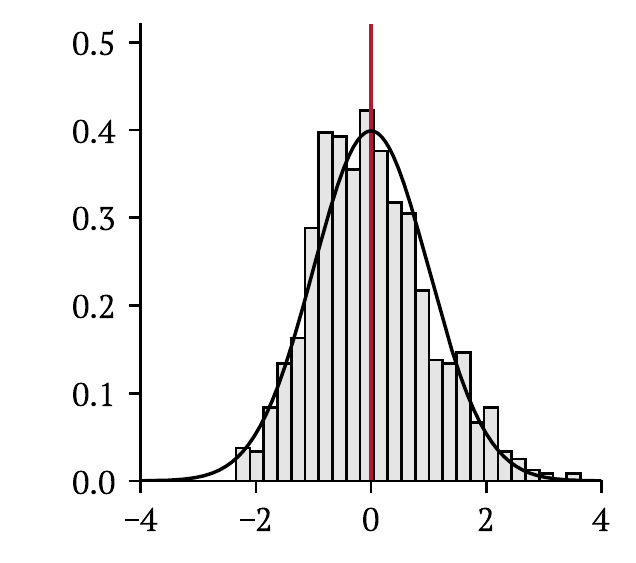}
\subcaption{$\psi(4)$}
\end{subfigure}
\floatfoot{\textit{Note:}  The figure presents the normalized distribution of structural duration dependence estimated on simulated data from the search model. The vertical lines represent the mean of the distribution for each structural hazard. Standard normal density is overlaid for reference.}
\end{figure}

\end{document}